\numberwithin{equation}{section}
\theoremstyle{definition}
\newtheorem{theorem}{Theorem}[section]
\newtheorem{corollary}[theorem]{Corollary}
\newtheorem{proposition}[theorem]{Proposition}
\newtheorem{definition}[theorem]{Definition}
\newtheorem{example}[theorem]{Example}
\newtheorem{notation}[theorem]{Notation}
\newtheorem{remark}[theorem]{Remark}
\newtheorem{lemma}[theorem]{Lemma}
\newtheorem{conjecture}[theorem]{Conjecture}
\newtheorem{problem}[theorem]{Problem}
\newcommand{\numberset}{\mathbb}
\newcommand{\F}{\numberset{F}}
\newcommand{\E}{\numberset{E}}
\newcommand{\fq}{\F_q}
\newcommand{\mC}{\mathcal{C}}
\newcommand{\mG}{\mathcal{G}}
\newcommand{\mF}{\mathcal{F}}
\newcommand{\mN}{\mathcal{N}}
\newtheorem{claim}{Claim}
\newcommand*{\myproofname}{Proof of the claim}
\newenvironment{clproof}[1][\myproofname]{\begin{proof}[#1]}{\end{proof}}
\def\BibTeX{{\rm B\kern-.05em{\sc i\kern-.025em b}\kern-.08em
    T\kern-.1667em\lower.7ex\hbox{E}\kern-.125emX}}
\title{The Geometry of Codes for Random Access in DNA Storage}
\author[1]{Anina Gruica}
\affil[1]{Technical University of Denmark, Denmark}
\author[1]{Maria Montanucci}
\author[2]{Ferdinando Zullo}
\affil[2]{Università degli Studi della Campania ``Luigi Vanvitelli'', Italy}
\date{}
\begin{document}\maketitle
\setcounter{MaxMatrixCols}{20}

\begin{abstract}
Effective and reliable data retrieval is critical for the feasibility of DNA storage, and the development of random access efficiency plays a key role in its practicality and reliability. In this paper, we study the Random Access Problem, which asks to compute the expected number of samples one needs in order to recover an information strand. Unlike previous work, we took a geometric approach to the problem, aiming to understand which geometric structures lead to codes that perform well in terms of reducing the random access expectation (Balanced Quasi-Arcs). As a consequence, two main results are obtained. The first is a construction for $k=3$ that outperforms previous constructions aiming to reduce the random access expectation. The second, exploiting a result from~\cite{gruica2024reducing}, is the proof of a conjecture from~\cite{bar2023cover} for rate $1/2$ codes in any dimension.
\end{abstract}

\medskip

\section{Introduction}

As the amount of data produced globally continues to rise exponentially, the demand for efficient, durable, and scalable storage solutions has become critical~\cite{rydning2022worldwide}. Traditional storage technologies struggle to keep up with this growth, both in terms of physical space and energy efficiency. DNA storage has emerged as a promising alternative~\cite{anavy2019data,blawat2016forward,bornholt2016dna, organick2018random, yazdi2017portable, tabatabaei2015rewritable, bar2021deep}, attracting significant attention due to its potential for extreme data density and long-term stability~\cite{alliance2021preserving, markowitz2023biology}. This alternative approach to data storage has attracted growing interest across fields such as biology, chemistry, computer science, electrical engineering and also mathematics. However, despite its promise, DNA storage still faces significant challenges and presents many open questions~\cite{shomorony2022information, yazdi2015dna, alliance2021preserving}.

The DNA storage process generally involves three main steps: DNA synthesis, storage, and sequencing.
Digital data, typically represented as a string of bits, is first encoded into sequences over the DNA alphabet $\{A,C,G,T\}$. This encoded string is divided into blocks, and then synthesized into actual DNA strands. These strands are stored in a container, where each strand exists in numerous copies, but in a completely unordered manner. When a user wants to access the stored data, sequencing technology retrieves a multiset of ``reads'' -- copies of the strands that may contain errors. By obtaining enough reads, users can reconstruct the original data. The number of reads required for accurate data retrieval is called \textit{coverage depth}, and reducing this depth is crucial for lowering sequencing costs and improving retrieval speed~\cite{heckel2019characterization}. To reduce the cost of DNA synthesis and sequencing, which is one of the main drawbacks of using synthesized DNA as a storage solution, error-correcting codes have been used to make the process more efficient, through reducing the number of strands needed for data retrieval.

In this paper, we focus on a specific problem arising in DNA storage systems known as the Random Access Problem, initially introduced in~\cite{bar2023cover}. This problem concerns the scenario in which a user wants to retrieve only one specific information strand from a storage system. Assuming error-free synthesis and sequencing, we aim to study and determine error-correcting codes that reduce the expected number of reads necessary to access this target strand. To achieve this, we build on results from both~\cite{bar2023cover} and~\cite{gruica2024reducing} to gain a deeper understanding of which properties of codes play a crucial role in reducing the expected number of reads required for reliable data retrieval. In our approach, we look at this problem from a geometric perspective, which provides new insights into what properties of codes play a role for them to perform well in terms of the random access problem in DNA storage systems. Moreover, studying this problem from this new viewpoint allows us to derive clean formulas for computing the random access expectation, for certain classes of codes. In contrast, previous work often involved formulas that were challenging to handle or analyze explicitly.

This paper is organized as follows. Section~\ref{sec:prelim} provides the definitions used throughout the paper, formally introduces the Random Access Problem, and offers a brief overview of existing results on the topic. In Section~\ref{sec:quasi}, we introduce a geometric object in the projective plane, which we call a \textit{balanced quasi-arc}. Intuitively, a balanced quasi-arc is a set of projective points in which three specific points, referred to as \textit{fundamental points}, are in the span of many subsets of other points in the set, more so than any of the \textit{non-fundamental points}. This characteristic of the set of points increases the likelihood of recovering these fundamental points when sampling uniformly at random from the set, ultimately leading to codes that perform well regarding the Random Access Problem. To the best of our knowledge, the codes we obtain based on this construction outperform any of the existing constructions of codes in terms of the Random Access Problem. All of this will be discussed in detail in Section~\ref{sec:constr}. In Section~\ref{sec:conj}, we investigate the parameters of a class of codes with a rate of $1/2$, first constructed in~\cite{bar2023cover}, that have a random access expectation strictly smaller than their dimension $k$. By closely examining these parameters, we are able to resolve a conjecture proposed in~\cite{bar2023cover} that states the ratio of their random access expectation to $k$ is strictly smaller than $0.9456$ as $k$ tends to infinity. The approach we take is inspired by a result from~\cite{gruica2024reducing} and it gives a new way of showing that the random access expectation is smaller than $k$. Finally, in Section~\ref{sec:concl} we conclude the paper and propose some possible directions for future research.

\section{Preliminaries} \label{sec:prelim}

Throughout this paper, we will use the following notations.

\begin{notation}
\begin{itemize}
    \item[(i)] $k$ and $n$ are positive integers with $1 \le k \le n$;
    \item[(ii)] $q$ denotes a prime power;
    \item[(iii)] $\F_q$ is the finite field with $q$ elements;
    \item[(iv)] $[n]$ denotes the set of integers from $1$ to $n$, i.e., $[n]:=\{1, \ldots,n\}$;
    \item[(v)] $H_n$ denotes the $n$-th harmonic number, i.e., $H_n=1+1/2+1/3+\dots + 1/n$.
\end{itemize}
\end{notation}

We study the expected sample size for uniformly random access queries in DNA storage systems. In DNA-based storage systems, the data is stored as a length-$k$ vector of sequences (called \textit{strands}) of length $\ell$ over the alphabet $\Sigma=\{A,C,G,T\}$, i.e., as a vector in $(\Sigma^{\ell})^k$. To make it more general, we consider our information strands to be elements of a finite field $\F_q$ (if we start with $\Sigma^\ell$ then one can think of an embedding into $\F_q$, i.e., then we need that $4^\ell$ divides $q$) and use a $k$-dimensional linear block
code $\mC \subseteq \F_q^n$
to encode an information vector
$ (x_1,\dots,x_k) \in \F_q^k$ to an encoded vector $(y_1,\dots,y_n) \in \F_q^n$.

Whenever the user needs to retrieve the stored information, the strands are amplified and then sequenced using DNA sequencing technology. This generates multiple copies for different strands, possibly with errors, referred to as \emph{reads}. In this paper, similarly as in~\cite{gruica2024reducing,bar2023cover}, we assume that no errors are introduced in neither the synthesizing nor the sequencing of the strands. The output of the reading process is a multiset of these reads, without any specific order. Given that DNA sequencing costs and throughput still trail behind other archival storage solutions, reducing the coverage depth for information recovery is of great interest.

The problem we focus on in this paper was first introduced in~\cite{bar2023cover} and was then studied in more detail in~\cite{gruica2024reducing}. In this problem, the goal is to retrieve a single information strand $x_i$ for $i \in [k]$. It has been demonstrated \cite{bar2023cover} that the expected sample size of a random access query in a DNA storage system can be reduced using an error-correcting code. Note that a related concept is investigated in~\cite{chandak2019improved}, where the authors examine the trade-offs between reading costs, tied to coverage depth, and writing costs. The non-random access version of the DNA coverage depth problem was later extended in~\cite{cohen2024optimizing} where composite DNA letters were considered~\cite{anavy2019data}, and in~\cite{preuss2024sequencing, sokolovskii2024coding} to handle combinatorial composite DNA shortmers~\cite{preuss2021efficient}. Another extension focusing on the random access setup was studied in~\cite{AGY24}; however, the objective there was to decode a collection of strands that together form a single file, rather than just a single strand.

Whenever the $k$ information strands are encoded using a generator matrix $G \in \F_q^{k \times n}$ one can think of an encoded strand as its corresponding column in the matrix~$G$ and recovering the~{$i$-th}~information strand corresponds to recovering the $i$-th standard basis vector, that is, it should belong to the span of the already recovered columns of $G$. This justifies that when we care about recovering the $i$-th information strand, we only care about the set of columns of the matrix $G$, and not the order in which they appear. Moreover, the set of columns can be seen as a set of vectors in $\F_q^k$ and hence \textit{points} in $\mathrm{PG}(k-1, q)$, the $(k-1)$-dimensional projective space over $\mathbb{F}_q$, as we care only about their span. We recall the definition of $\mathrm{PG}(k-1, q)$ and of other related objects that we use throughout this paper.
%We will need the following definitions. 

\begin{definition}
The projective geometry $\mathrm{PG}(k-1, q)$ is the projective space of dimension~$k-1$ over a finite field $\mathbb{F}_q$ with $q$ elements. It is defined as the set of equivalence classes of the non-zero vectors in $\mathbb{F}_q^k$, where two vectors are considered equivalent if they differ by a scalar multiple.
\begin{itemize}
    \item[(i)] A \textbf{point} in 
$\mathrm{PG}(k-1, q)$ is an equivalence class of non-zero vectors in $\mathbb{F}_q^k$ under the equivalence relation $x \sim \lambda x$ for all non-zero scalars $\lambda \in \mathbb{F}_q$.
    \item[(ii)]  A \textbf{line} in $\mathrm{PG}(k-1, q)$ is the set of points corresponding to all scalar multiples of the vectors in a 2-dimensional subspace of $\mathbb{F}_q^k$. If $P_1$ and $P_2$ generate a line $\ell$, we sometimes write $\ell=P_1P_2$. 
    %\item[(iii)]  A \textbf{hyperplane} in $\mathrm{PG}(k-1, q)$ is the set of points corresponding to all scalar multiples of the vectors in a $k-1$-dimensional subspace of $\mathbb{F}_q^k$.
    \item[(iii)] We say that a point $P$ in $\mathrm{PG}(k-1, q)$ is \textbf{fundamental} if it is a non-zero multiple of a standard basis vector in $\F_q^k$, and we say that a point is \textbf{non-fundamental} if it is not. 
    \item[(iv)] We say that a multiset of points $\mG=\{P_1,\dots,P_n\} \subseteq \mathrm{PG}(k-1, q)$ has \textbf{rank $k$} if the points satisfy $\langle P_1,\dots,P_n\rangle =\mathrm{PG}(k-1, q)$.
\end{itemize}
\end{definition}

Now we can formally define the problem we study in this paper. 

\begin{problem}[{\textbf{The random access problem}}] \label{prob:1}
Let $\mG = \{P_1,\dots,P_n\} \subseteq \mathrm{PG}(k-1, q)$ be a set of $n$ points of rank $k$. Suppose that the elements of $\mG$ are drawn uniformly at random, meaning that each point has probability $1/n$ of being drawn and points can be drawn multiple times. For $P \in \mG$, let $\tau_P(\mG)$ denote the random variable that counts the minimum number of points of $\mG$ that are drawn until the point $P$ is in their span. Compute the expectation~$\E[\tau_P(\mG)]$ and find point sets that reduce the random access expectation for fundamental points. Also, compute what would be the worst-case-scenario for the expected number of draws needed to recover any of the fundamental points, $\max\{\E[\tau_P(\mG)] : P \in \mG, \textnormal{$P$ is fundamental}\}$.

%Furthermore, let $T_q(n,k) \triangleq \min_{G \in \F_q^{k \times n}}{\tmax(G)}$ be the smallest maximum random access expectation for any rank-$k$ matrix in $\F_q^{k \times n}$ and $T_q(k) \triangleq \liminf_{n\rightarrow\infty}{T_q(n,k)}$ the best maximum random access for any rank-$k$ matrix over $\F_q$.
% A rigorous, general expression for 
% $\E[\tau_i(G)]$ is given later in Lemma~\ref{lem:fi}. 
%The reader with a mathematics background may simply take 
%Lemma~\ref{lem:fi} as a definition.
\end{problem}

\begin{notation}
In the sequel, let $\mG = \{P_1, \dots, P_n\} \subseteq \mathrm{PG}(k-1, q)$ denote a set of $n$ points in $\mathrm{PG}(k-1, q)$ of rank $k$. The set $\mG$ is often a multiset of points; however, with a slight abuse of notation, we always write $\mG \subseteq \mathrm{PG}(k-1, q)$. As our primary focus is on retrieving information strands, or equivalently, fundamental points, we assume that all $k$ fundamental points are contained in $\mG$, unless otherwise stated. 
Furthermore, when the random access expectation is the same for all fundamental points, we denote their corresponding random variable by~$\tau_{F}(\mG)$. Similarly, if the random access expectation is the same for all non-fundamental points, we denote their random variable by~$\tau_{N}(\mG)$.

\end{notation}

\begin{remark} \label{rem:points}
Differently to previous approaches -- such as examining codes in~\cite{bar2023cover} or analyzing generator matrices in~\cite{gruica2024reducing} -- we focus solely on the points formed from the columns of the generator matrices of codes, and so our problem is stated entirely in terms of subsets of $\mathrm{PG}(k-1, q)$ of rank $k$.
\end{remark}

We start by surveying the results from~\cite{bar2023cover} contributing to the random access problem. Among other things, it was shown that by connecting to the coupon collector's problem~\cite{erdHos1961classical, felleb1968introduction, flajolet1992birthday, newman1960double}, if the $k$ information strands are encoded by an MDS code, then the expected number of reads to decode all $k$ information strands is $n(H_n-H_{n-k})$, where $H_i$ is the $i$-th harmonic number.
This result is optimal for minimizing the expected number of reads. However, analyzing MDS codes for the random access problem, it was proven that the expected number of reads is~$k$, which is also what an identity code would give. Moreover, in~\cite{bar2023cover} it was shown that codes with expectation $c  k$, for $c < 1$, exist. In Section~\ref{sec:conj}, we examine in more detail the construction of a class of codes with rate $1/2$ that has a low random access expectation, thereby proving an open conjecture proposed by the authors of~\cite{bar2023cover}.

%for any $i\in[k]$ and any generator matrix $G \in \F_q^{k \times n}$, $\tmax(G)\ge n - \frac{n(n-k)}{k} (H_n-H_{n-k})$.
As a step towards a better understanding of the properties that give a \textit{good} code in terms of having low random access expectation, in~\cite{gruica2024reducing} a general formula for the expectation was given. Let $\mG=\{P_1,\dots,P_n\}\subseteq \mathrm{PG}(k-1, q)$ be a multiset of points of rank $k$. For~$P \in \mG$ and $1 \le s \le n-1$, let 
\begin{align*}
\alpha_{P}(\mG,s) := {|\{S \subseteq [n] : |S| = s, \, P \in \langle P_j : j \in S\rangle \}|}.
\end{align*}
The formula we will use throughout this paper for computing $\E[\tau_P(\mG)]$ uses the values~$\alpha_{P}(\mG,s)$ we just introduced. Note that in~\cite{gruica2024reducing} the formula was stated slightly differently. We restate it in the language of this paper.

\begin{lemma}[\textnormal{\cite[Lemma 1]{gruica2024reducing}}] \label{lem:fi}
For a multiset of points $\mG=\{P_1,\dots,P_n\}\subseteq \mathrm{PG}(k-1, q)$ of rank $k$ and for all $P \in \mG$ we have
\begin{align*}
    \E[\tau_P(\mG)] =  n H_n - \sum_{s=1}^{n-1} \displaystyle \frac{\alpha_{P}(\mG,s)}{\binom{n-1}{s}}.
\end{align*}
\end{lemma}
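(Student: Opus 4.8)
The plan is to compute $\E[\tau_P(\mG)]$ directly from its definition as an expectation over the random drawing process, using the standard tail-sum formula for the expectation of a non-negative integer-valued random variable. Let me write $\tau = \tau_P(\mG)$. Since $\tau$ takes values in $\{1, 2, \dots\}$ (actually bounded since $\mG$ has rank $k$ and $P \in \mG$), we have $\E[\tau] = \sum_{t \ge 1} \Pr[\tau \ge t] = \sum_{t \ge 0} \Pr[\tau > t]$. The key observation is that $\tau > t$ precisely when, after $t$ draws (with replacement, uniform over the $n$ points), the point $P$ is \emph{not} in the span of the drawn points. So I need a handle on $\Pr[P \notin \langle \text{first } t \text{ draws}\rangle]$.

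**The main computational step.**

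Here is where $\alpha_P(\mG, s)$ enters. Let $T \subseteq [n]$ be the (random) set of \emph{distinct} indices drawn in the first $t$ draws. The event $\{P \notin \langle P_j : j \in T\rangle\}$ depends only on $T$. I would condition on $|T| = s$: given that exactly $s$ distinct points were seen, by symmetry $T$ is uniformly distributed over all $s$-subsets of $[n]$, so the conditional probability that $P$ lies in their span is $\alpha_P(\mG,s) / \binom{n}{s}$. Writing $\beta_P(\mG,s) := \binom{n}{s} - \alpha_P(\mG,s)$ for the number of $s$-subsets \emph{not} spanning $P$, the cleanest route is actually to swap the order of summation: instead of summing over $t$ and conditioning on $s$, directly expand
\begin{align*}
\E[\tau] = \sum_{t \ge 0} \Pr[\tau > t] = \sum_{t \ge 0} \sum_{S \subseteq [n],\, P \notin \langle P_j : j \in S\rangle} \Pr[T_t = S],
\end{align*}
where $T_t$ is the set of distinct indices after $t$ draws. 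For a fixed $S$ with $|S| = s$, the sum $\sum_{t \ge 0}\Pr[T_t = S]$ is the expected number of time-steps at which the set of seen indices equals exactly $S$; summing this instead over all supersets gives that $\sum_{t\ge 0}\Pr[T_t \subseteq S]$ equals the expected time until a point outside $S$ is first drawn, which is $n/(n-s)$ by the geometric waiting-time argument. A short inclusion–exclusion (or a direct "expected sojourn time" computation) then yields $\sum_{t \ge 0}\Pr[T_t = S] = \binom{n}{|S|}^{-1}\cdot\big(\text{something depending only on }|S|\big)$; carrying this through and summing $\beta_P(\mG,s)$ over $s$ should collapse, after reindexing via the identity $\sum_{s} \binom{n}{s}^{-1}\binom{n}{s} \cdot (\text{waiting times})$, to $nH_n - \sum_{s=1}^{n-1}\alpha_P(\mG,s)/\binom{n-1}{s}$. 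The appearance of $\binom{n-1}{s}$ rather than $\binom{n}{s}$ in the final formula comes from the algebraic simplification $\frac{n}{(n-s)\binom{n}{s}} \cdot (\text{count}) $ versus $\frac{1}{\binom{n-1}{s}}$, using $(n-s)\binom{n}{s} = n\binom{n-1}{s}$.

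**Alternative cleaner route and the main obstacle.**

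A slicker alternative, which I would try first, is to avoid the distinct-set bookkeeping entirely: note $\tau > t$ iff every one of the first $t$ draws lands in one of the "bad" sets, and use that $\Pr[\text{all } t \text{ draws lie in a fixed } S] = (|S|/n)^t$. Summing the geometric series $\sum_{t\ge 0}(s/n)^t = n/(n-s)$ and combining with inclusion–exclusion over which bad set is hit is one option; but the genuinely clean identity is: for each $s$-subset $S$ not spanning $P$, one shows by a coupon-collector-style argument that the contribution is exactly $1/\binom{n-1}{s-1}$ or similar, then $\E[\tau] = \sum_{s=0}^{n-1} \beta_P(\mG,s)/\binom{n-1}{s}$, and finally one rewrites $\beta_P = \binom{n}{s} - \alpha_P$ and uses the known identity $\sum_{s=0}^{n-1}\binom{n}{s}/\binom{n-1}{s} = nH_n$ to split off the $nH_n$ term. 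The main obstacle I anticipate is getting the binomial bookkeeping exactly right — in particular, justifying the symmetry claim that conditioned on seeing $s$ distinct points the set is uniform over $s$-subsets, and then correctly tracking the shift from $\binom{n}{s}$ to $\binom{n-1}{s}$ through the summation. Since this is quoted as \cite[Lemma 1]{gruica2024reducing}, restated, I would keep the argument brief and lean on the coupon-collector connection already flagged in the text, verifying the two index identities $\sum_{s=0}^{n-1}\binom{n}{s}/\binom{n-1}{s} = nH_n$ and $(n-s)\binom{n}{s} = n\binom{n-1}{s}$ as the only real computations.
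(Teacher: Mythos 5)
The paper does not actually prove this lemma; it is quoted verbatim from \cite[Lemma~1]{gruica2024reducing}, so there is no internal proof to compare against. That said, your outline is essentially correct and would indeed establish the formula. The central chain is sound: using the tail-sum formula $\E[\tau_P]=\sum_{t\ge 0}\Pr[\tau_P>t]$, conditioning on the number $s$ of distinct points seen after $t$ draws (the conditional distribution of the seen set is uniform over $s$-subsets by symmetry), and observing that the expected number of time steps spent having seen exactly $s$ distinct coupons is $n/(n-s)$ (a geometric sojourn time), one gets
\begin{align*}
\E[\tau_P(\mG)]=\sum_{s=0}^{n-1}\frac{\beta_P(\mG,s)}{\binom{n}{s}}\cdot\frac{n}{n-s}=\sum_{s=0}^{n-1}\frac{\beta_P(\mG,s)}{\binom{n-1}{s}},
\end{align*}
where $\beta_P(\mG,s)=\binom{n}{s}-\alpha_P(\mG,s)$, because $(n-s)\binom{n}{s}=n\binom{n-1}{s}$. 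Substituting $\beta_P=\binom{n}{s}-\alpha_P$ and using $\sum_{s=0}^{n-1}\binom{n}{s}/\binom{n-1}{s}=\sum_{s=0}^{n-1}n/(n-s)=nH_n$ together with $\alpha_P(\mG,0)=0$ gives exactly the claimed formula. Your write-up hedges in a few places (e.g.\ the offhand ``$1/\binom{n-1}{s-1}$ or similar'' is wrong as stated, the correct contribution is $1/\binom{n-1}{s}$; and you offer two parallel routes without fully committing to one), but these are presentational rather than mathematical gaps. To turn this into a finished proof you would fix on the sojourn-time route, state the three ingredients explicitly (tail sum, uniformity of the seen set conditioned on its size, sojourn time $n/(n-s)$), and carry out the two index identities — all of which your sketch already identifies as the only real work.
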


In \cite{gruica2024reducing}, it was shown that, on average, the random access expectation is always $k$ (see Theorem~\ref{thm:sumra}). If all the points in our set are equally likely to be recovered, we call the code arising from this point set a \textit{recovery balanced code}. In~\cite{gruica2024reducing} it was shown that MDS codes, the simplex code, the Hamming code, the binary Reed-Muller code, and the binary Golay code are all recovery balanced, meaning that for these codes the random access expectation is $k$. From this observation it became evident that in order to construct point sets that perform well in the random access setting, one needs to create an imbalance within the set. Specifically, we want a point set in which the fundamental points are more likely to be recovered than the non-fundamental points.

\section{Balanced Quasi-Arcs in Projective Planes} \label{sec:quasi}

In this section we give a construction of point sets in the \textit{projective plane} $\mathrm{PG}(2, q)$ that, if one considers the matrix whose column set is this point set, will give generator matrices of codes that perform well in terms of the random access problem, as we will see in the next sections. The main idea behind our construction is that we want that the three fundamental points are contained in more lines arising from our set of points, than the non-fundamental points. This will lead to an imbalance in those codes, making it more likely to recover fundamental points, and thus leading to a random access expectation that is strictly smaller than $k$.

\begin{definition}[Balanced quasi-arc]
    Let $\mG \subseteq \mathrm{PG}(2,q)$ be a set of points, let $\mF:=\{E_1, E_2, E_3\}$ be a set of non-collinear points in $\mathrm{PG}(2,q)$ and let $x \ge 0$ be an integer. We call $\mG$ a \textbf{balanced quasi-arc of weight $x$} if there exists a partition of $\mG$ of the form $\mG=\mF \cup \mG_1 \cup \mG_2 \cup \mG_3$ such that
    \begin{itemize}
        \item[(i)] $|\mG_1|=|\mG_2|=|\mG_3|=x$; 
        \item[(ii)] $\mG_1 \subseteq E_1E_2$, $\mG_2 \subseteq E_2E_3$, $\mG_3 \subseteq E_3E_1$;
        \item[(iii)] for any line $\ell$ in $\mathrm{PG}(2,q)$ different from $E_iE_j$ for any $i,j$, we have $|\mathcal{G}\cap \ell|\leq 2$.
        %$i,j \in \{1,2,3\}$ with $i \ne j$, if $P_i \in \mG_i$ and $P_j \in \mG_j$, then $P_iP_j \cap \mG = \{P_i,P_j\}$;
        %\item[(iv)] for any $i,j \in \{1,2,3\}$ with $i \ne j$, if $P_i \in \mG_i$ and $\mG_i$ is not containted in either of the fundamental lines through $E_i$, then $E_iP \cap \mG = \{E_i,P\}$.
    \end{itemize}
    Note the condition in (iii) is equivalent to the property that for $P_1 \in \mG_1$, $P_2 \in \mG_2$, $P_3 \in \mG_3$, we always have $\langle P_1,P_2,P_3 \rangle = \mathrm{PG}(2,q)$.
\end{definition}

In the sequel, we will, without loss of generality, choose $E_1,E_2$ and $E_3$ to be the fundamental points of $\mathcal{G}$ and then we call the lines $E_iE_j$, for any $i\ne j$, \textbf{fundamental lines}.

Roughly speaking, a balanced quasi-arc of weight $x$ behaves like an \textit{arc} of the projective plane with respect to all the lines apart from the fundamental lines, which are $(x+2)$-secant lines (i.e. lines that intersects $\mG$ in $x+2$ points).
A balanced quasi-arc of weight $0$ corresponds to a set of three non-collinear points in~$\mathrm{PG}(2,q)$.

\begin{remark}
    Observe that a balanced quasi-arc of weight $x$ in $\mathrm{PG}(2,q)$ has size $3x+3$ and the notion of fundamental points is well-defined as $E_1,E_2$ and $E_3$ are the only points of $\mathcal{G}$ lying on two $(x+2)$-secant lines.
\end{remark}

\begin{proposition}\label{prop:boundbalancedarc}
    Let $\mG$ be a balanced quasi-arc of weight $x>0$ in $\mathrm{PG}(2,q)$. We have that 
    \[x \le \left\lfloor\frac{q-1}2\right\rfloor.\]
\end{proposition}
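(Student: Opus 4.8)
The plan is to translate the geometric configuration into a purely multiplicative condition over $\F_q^\ast$ and then finish with a one‑line pigeonhole argument. First I would normalize coordinates so that the fundamental points are $E_1=(1:0:0)$, $E_2=(0:1:0)$, $E_3=(0:0:1)$; this is harmless since $\mathrm{PGL}(3,q)$ acts transitively on ordered triples of non‑collinear points. Then the fundamental lines become $E_1E_2=\{z=0\}$, $E_2E_3=\{x=0\}$, $E_3E_1=\{y=0\}$, and since $\mG=\mF\cup\mG_1\cup\mG_2\cup\mG_3$ is a partition with each $\mG_i$ avoiding the $E_j$, I can write
\[
\mG_1=\{(1:a_i:0): i\in[x]\},\qquad \mG_2=\{(0:1:b_j): j\in[x]\},\qquad \mG_3=\{(1:0:c_l): l\in[x]\},
\]
where $A:=\{a_i\}$, $B:=\{b_j\}$, $C:=\{c_l\}$ are subsets of $\F_q^\ast$, each of size exactly $x$ because the points are distinct.

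Second, I would record what condition (iii) imposes. In a balanced quasi‑arc no triple $P_1\in\mG_1$, $P_2\in\mG_2$, $P_3\in\mG_3$ is collinear: such a triple cannot lie on a fundamental line (if it lay on $E_1E_2$, then $P_2\in\mG_2\subseteq E_2E_3$ would force $P_2=E_2$, contradicting the partition; similarly for the others), so by (iii) it is not collinear at all. Computing
\[
\det\begin{pmatrix}1&a&0\\0&1&b\\1&0&c\end{pmatrix}=c+ab,
\]
the points $(1:a:0),(0:1:b),(1:0:c)$ are collinear exactly when $ab+c=0$. Hence the defining property of $\mG$ is equivalent to $a_ib_j\neq -c_l$ for all $i,j,l$, that is,
\[
(A\cdot B)\cap(-C)=\emptyset,\qquad\text{where } A\cdot B:=\{ab: a\in A,\ b\in B\}.
\]

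Third, the finish: since $x>0$, fix any $a\in A$. Then $aB\subseteq A\cdot B$ is disjoint from $-C$, and $aB$ and $-C$ are both subsets of $\F_q^\ast$ of size $x$ (multiplication by $a\neq 0$ and negation are bijections of $\F_q^\ast$). Therefore $2x=|aB|+|{-C}|\le |\F_q^\ast|=q-1$, so $x\le (q-1)/2$, and since $x$ is an integer, $x\le\lfloor (q-1)/2\rfloor$.

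I do not expect a serious obstacle here. The two points requiring care are the reduction from condition (iii) to the non‑collinearity of one‑point‑from‑each‑line triples (essentially the equivalence already noted after the definition), and the clean setup of the coordinate model. The only mild idea is to fix a single element $a\in A$ and use that $aB$ already sits inside $A\cdot B$: attempting instead to lower bound $|A\cdot B|$ directly would call for a Cauchy–Davenport‑type estimate, which fails in $\F_q^\ast$ when $q-1$ is composite — indeed, taking $A=B=C$ to be an index‑$2$ subgroup (when it exists and $-1$ lies in it) shows the bound $\lfloor (q-1)/2\rfloor$ is essentially sharp, so no argument lower‑bounding $|A\cdot B|$ beyond $x$ could work.
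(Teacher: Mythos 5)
Your proof is correct, and it takes a genuinely different route from the paper's. The paper's argument is a classical line-count through a fixed non-fundamental point $P$: the unique $(x+2)$-secant through $P$ accounts for $x+1$ further points of $\mG$, each of the remaining $q$ lines through $P$ contributes at most one further point, so $3x+3=|\mG|\le 1+(x+1)+q$, hence $2x\le q-1$. You instead coordinatize, translate condition (iii) into the statement $(A\cdot B)\cap(-C)=\emptyset$ with $A,B,C\subseteq\F_q^\ast$ each of size $x$, and then finish by observing that $aB$ and $-C$ are disjoint $x$-subsets of a set of size $q-1$. Both are short and elementary; the paper's is the standard projective-plane incidence count and needs no coordinates, while yours makes explicit the multiplicative structure that the authors later exploit for the extremal construction (the $(H,H,\tilde H)$-projective triangle is exactly the case where $A=B$ is a subgroup of index $2$), and your closing remark correctly explains why nothing stronger than the single-translate pigeonhole can be squeezed out of the product-set disjointness. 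One small caveat worth making explicit in a final write-up: your reduction uses the equivalence between (iii) and the non-collinearity of one-from-each-$\mG_i$ triples, which the paper asserts in the definition but does not prove; you correctly supply the forward direction you need (no such triple is collinear), and it is worth noting that the converse is also true but requires a slightly longer case check.
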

\begin{proof}
    Denote by $E_1,E_2$ and $E_3$ the fundamental points of $\mG$.
    Consider any point $P \in \mathcal{G}\setminus \mathcal \{E_1,E_2,E_3\}$. Note that through $P$ there is a $(x+2)$-secant line and all the other lines share with $\mathcal{G}$ at most one other point. Taking into account that there are $q+1$ lines through~$P$, this implies that the size of $\mathcal{G}$ is bounded as follows.
    \[ |\mathcal{G}|\leq x+1 + q +1. \]
    Since $|\mathcal{G}|=3x+3$ we get $x\leq (q-1)/2$.
\end{proof}

We will show that the above bound is tight for all the possible values of $q$. The construction we are going to consider is the following.

\begin{definition}[$(S_1,S_2,S_3)$-projective triangle] \label{def:tri}
    Let $S_1,S_2,S_3$ be three subsets of $\F_q$. We define the $(S_1,S_2,S_3)$-\textbf{projective triangle} as the set of the following points:
    \begin{itemize}
    \item $E_1=(1:0:0)$, $E_2=(0:1:0)$ and $E_3=(0:0:1)$;
    \item $(0:1:-s_1)$, where $s_1 \in S_1$;
    \item $(-s_2:0:1)$, where $s_2 \in S_2$;
    \item $(1:-s_3:0)$, where $s_3 \in S_3$.
    \end{itemize} 
\end{definition}

\begin{remark}
    The $(S_1,S_2,S_3)$-projective triangle from Definition~\ref{def:tri} is inspired by the projective triangle which, in our terminology, is a $(S,S,N)$-projective triangle in PG$(2,q)$ where $q$ is odd, $S$ is the set of squares in $\mathbb{F}_q$, and $N$ is the set of non-squares in $\mathbb{F}_q$. The projective triangle turns out to be a \textit{minimal blocking set} in $\mathrm{PG}(2,q)$; A minimal blocking set in $\mathrm{PG}(2,q)$ is a point set, where every line of $\mathrm{PG}(2,q)$ meets this set in at least one point and the set does not contain any line, and which is minimal with respect to this property and set-theoretic inclusion. When $q$ is an odd prime, Blokhuis in \cite{blokhuis1994size} proved that the size of a minimal blocking set is at least $3(q-1)/2+3$ and so the projective triangle attains the lower bound with equality; see also \cite{hirschfeld1998projective}.
\end{remark}

\begin{proposition}
    Let $H$ be a subgroup of $(\F_q^*,\cdot)$ and suppose there exists $\tilde{H}\subset \F_q^*\setminus H$ such that $|H|=|\tilde{H}|=x$. Then the $(H,H,\tilde{H})$-projective triangle is a balanced quasi-arc of weight~$x$.
\end{proposition}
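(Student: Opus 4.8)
The plan is to verify directly the three defining conditions of a balanced quasi-arc for the set $\mG$ equal to the $(H,H,\tilde H)$-projective triangle, with $\mF=\{E_1,E_2,E_3\}$ (the standard basis points, which are non-collinear) playing the role of the fundamental points and with the partition $\mG=\mF\cup\mG_1\cup\mG_2\cup\mG_3$ given by
\[
\mG_1=\{(1:-s_3:0): s_3\in\tilde H\},\quad \mG_2=\{(0:1:-s_1): s_1\in H\},\quad \mG_3=\{(-s_2:0:1): s_2\in H\}.
\]
The one point to watch here is the indexing: in Definition~\ref{def:tri} the set $S_3$ controls the points with vanishing third coordinate, which lie on $E_1E_2$, so it is $S_3=\tilde H$ that must be paired with $\mG_1$ (and $S_1=S_2=H$ with $\mG_2,\mG_3$), not the other way around.

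First I would dispatch conditions (i) and (ii), which are essentially bookkeeping. The points of $\mG_1$ have third coordinate $0$, hence lie on $E_1E_2$; those of $\mG_2$ have first coordinate $0$, hence lie on $E_2E_3$; those of $\mG_3$ have second coordinate $0$, hence lie on $E_3E_1$; this is (ii). Since $H,\tilde H\subseteq\F_q^*$, in each family one coordinate is a fixed nonzero constant, so after normalising it to $1$ the parametrisations $s\mapsto(\text{point})$ are injective; therefore $|\mG_1|=|\tilde H|=x$ and $|\mG_2|=|\mG_3|=|H|=x$, which is (i). I would also record that the four blocks are pairwise disjoint: a common point of any two of the three fundamental lines would be a fundamental point, and none of the parametrised points is fundamental (each has a nonzero coordinate $-s$ with $s\neq 0$). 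Hence the partition is genuine and $|\mG|=3x+3$.

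The substance is condition (iii), for which I would use the equivalent formulation from the remark following the definition of balanced quasi-arc: it suffices to show that for all $P_1\in\mG_1$, $P_2\in\mG_2$, $P_3\in\mG_3$ one has $\langle P_1,P_2,P_3\rangle=\mathrm{PG}(2,q)$. Writing $P_1=(1:-s_3:0)$, $P_2=(0:1:-s_1)$, $P_3=(-s_2:0:1)$ with $s_1,s_2\in H$, $s_3\in\tilde H$, a direct computation gives
\[
\det\begin{pmatrix} 1 & -s_3 & 0\\ 0 & 1 & -s_1\\ -s_2 & 0 & 1\end{pmatrix}=1-s_1s_2s_3,
\]
so the three points are collinear precisely when $s_1s_2s_3=1$, i.e. when $s_3=(s_1s_2)^{-1}$. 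But $H$ is a subgroup of $\F_q^*$, so $(s_1s_2)^{-1}\in H$, whereas $s_3\in\tilde H\subseteq\F_q^*\setminus H$; thus $s_1s_2s_3\neq 1$ always, and the three points are never collinear. This gives (iii) and finishes the proof.

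There is no real obstacle; the argument is a short coordinate computation. The only genuinely substantive fact used is that the product of two elements of the subgroup $H$ again lies in $H$, hence can never equal the inverse of an element lying outside $H$ — this is exactly what forces the key determinant $1-s_1s_2s_3$ to be nonzero. The mild pitfalls are the index matching mentioned above and remembering to invoke the equivalent line-counting criterion from the remark rather than arguing about arbitrary non-fundamental lines by hand.
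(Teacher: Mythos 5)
Your proof is correct and takes essentially the same approach as the paper's: the core of both is the observation that $\det\bigl(\begin{smallmatrix}1 & -s_3 & 0\\ 0 & 1 & -s_1\\ -s_2 & 0 & 1\end{smallmatrix}\bigr)=1-s_1s_2s_3\neq 0$ because $s_1s_2\in H$ while $s_3\notin H$. You are somewhat more careful than the paper in spelling out conditions (i) and (ii) and in matching the indices $\mG_i$ to the fundamental lines $E_jE_\ell$ (the paper's choice of labels for $\mG_1,\mG_2,\mG_3$ does not literally match condition (ii) of the definition, but this is an inessential relabeling), and these additions are harmless and sound.
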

\begin{proof}
    Note that the points of the $(H,H,\tilde{H})$-projective triangle $\mG$ can be written as
    \[\mG=\mF \cup \mG_1 \cup \mG_2 \cup \mG_3,\]
    where $\mF:=\{E_1, E_2, E_3\}$, $\mathcal{G}_1=\{(0:1:-s_1)\colon s_1\in H\}$, $\mathcal{G}_2=\{(-s_2:0:1)\colon s_2\in H\}$ and $\mathcal{G}_3=\{(1:-s_3:0)\colon s_3\in \tilde{H}\}$.
    Indeed, $P_1,P_2,P_3$ are collinear if and only if
    $$\det\begin{pmatrix}
        0 & 1 & -s_1\\
        -s_2 & 0 & 1\\
        1 & -s_3 & 0
    \end{pmatrix}=1-s_1s_2s_3=0,$$
    implying that $s_1s_2s_3=1$. This is not possible as $H$ is a subgroup of $(\F_q^*,\cdot)$, $s_1s_2 \in H$ and~$s_3 \notin H$.
\end{proof}

We can now show that we can choose $H$ in such a way that we get a construction of balanced quasi-arcs satisfying the equality in Proposition \ref{prop:boundbalancedarc}.

\begin{corollary}\label{cor:construction}
Let $q$ be an odd prime power and consider $H$ as the subgroup of squares in $(\F_q^*,\cdot)$ and $\tilde{H}=\F_q^*\setminus H$.
We have that the $(H,H,\tilde{H})$-projective triangle is a balanced quasi-arc of weight $\frac{q-1}2$.
\end{corollary}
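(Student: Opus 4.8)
The plan is to deduce the statement as an immediate consequence of the preceding Proposition, so the only thing to do is to verify that the pair $(H,\tilde H)$ satisfies its hypotheses with $x=\tfrac{q-1}{2}$. First I would record that, since $q$ is odd, the multiplicative group $(\F_q^*,\cdot)$ is cyclic of even order $q-1$; hence the squaring map $t\mapsto t^2$ has image a subgroup $H$ of index $2$, so $H$ is indeed a subgroup of $(\F_q^*,\cdot)$ with $|H|=\tfrac{q-1}{2}$. Its complement $\tilde H=\F_q^*\setminus H$ is exactly the set of non-squares, so $\tilde H\subseteq \F_q^*\setminus H$ trivially and $|\tilde H|=\tfrac{q-1}{2}=|H|$. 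Thus the hypotheses of the preceding Proposition hold with $x=\tfrac{q-1}{2}$.

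Next I would simply invoke that Proposition: with these choices it states that the $(H,H,\tilde H)$-projective triangle is a balanced quasi-arc of weight $x=|H|=\tfrac{q-1}{2}$, which is the claim. For context I would add one sentence observing that, again because $q$ is odd, we have $\lfloor (q-1)/2\rfloor=(q-1)/2$, so this weight meets the upper bound of Proposition~\ref{prop:boundbalancedarc} with equality; together with that Proposition this shows the bound is tight for every odd prime power $q$, which is the real point of the corollary.

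Since the corollary is a direct application of an already-established Proposition, I do not anticipate a genuine obstacle. The only point that deserves explicit care is the role of the oddness of $q$: it is precisely what makes the squares a proper subgroup of index $2$ (when $q$ is even every element of $\F_q^*$ is a square and the construction degenerates), and it is also what removes the floor in Proposition~\ref{prop:boundbalancedarc}. I would make sure the write-up flags this dependence explicitly rather than leaving it implicit, but beyond that the argument is a two-line verification plus a citation.
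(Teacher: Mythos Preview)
Your proposal is correct and matches the paper's approach exactly: the corollary is stated without a proof immediately after the preceding Proposition, so the intended argument is precisely the hypothesis check you describe (squares form an index-$2$ subgroup when $q$ is odd, hence $|H|=|\tilde H|=(q-1)/2$). Your added remark about tightness of the bound in Proposition~\ref{prop:boundbalancedarc} is also in line with the paper's surrounding commentary.
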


\begin{example} \label{ex:quasiarcs}
We give examples for balanced quasi-arcs for $q$ odd and $q$ even based on the construction in Corollary~\ref{cor:construction}. 
\begin{itemize}
    \item[(i)] Let $q=5$. Note that $1$ and $4$ are the only squares in $(\F_5^*, \cdot)$, and so we let $H:=\{1,4\}$ and $H':=\{2,3\}$. Then by Corollary~\ref{cor:construction} the $(H,H,\tilde{H})$-projective triangle is a balanced quasi-arc of weight $(q-1)/2=2$. More precisely, $\mG=\mF \cup \mG_1 \cup \mG_2 \cup \mG_3$, where 
    $\mF$ is the set of fundamental points, and
    \begin{align*}
        \mG_1:=\{(0:1:4),(0:1:1)\}, \mG_2:=\{(4:0:1),(1:0:1)\}, \mG_3:=\{(1:3:0),(1:2:0)\},
    \end{align*}
    is a balanced quasi-arc.  
    \item[(ii)] Let $q=4$ and consider the finite field $\F_4$ made of the elements in $\{0,1,\alpha,\alpha+1\}$ where $\alpha$ is a root of the irreducible polynomial $x^2+x+1$ over $\F_2$. Then we let $H:=\{1\}$, $\tilde{H}:=\{\alpha\}$ and we obtain that $\mG=\mF \cup \mG_1 \cup \mG_2 \cup \mG_3$, where 
    $\mF$ is the set of fundamental points, and
    \begin{align*}
        \mG_1:=\{(0:1:1)\}, \mG_2:=\{(1:0:1)\}, \mG_3:=\{(1:\alpha:0)\},
    \end{align*}
    is a balanced quasi-arc.
\end{itemize}
\end{example}

\begin{remark}
    Note that if $q=2$ then $x=0$ and thus the bound of Proposition \ref{prop:boundbalancedarc} is trivially tight in this case.
\end{remark}

As a corollary we also get the existence of balanced quasi-arcs in $\mathrm{PG}(2,q)$ of weight $x\leq \lfloor \frac{q-1}2\rfloor$, for any $x$.

\begin{corollary} \label{cor:q}
    For any $q$ odd prime power and $x\leq  \frac{q-1}2$ a positive integer we have that there exists a balanced quasi-arc in $\mathrm{PG}(2,q)$ of weight $x$.
\end{corollary}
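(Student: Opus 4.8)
The plan is to deduce Corollary~\ref{cor:q} from Corollary~\ref{cor:construction} by a simple ``sub-arc'' argument: start with a balanced quasi-arc of the maximal weight $\frac{q-1}{2}$ furnished by Corollary~\ref{cor:construction}, and then throw away points symmetrically from the three sides of the triangle until each side has exactly $x$ points left. Concretely, write the maximal construction as $\mG=\mF\cup\mG_1\cup\mG_2\cup\mG_3$ with $\mG_1=\{(0:1:-s_1):s_1\in H\}$, $\mG_2=\{(-s_2:0:1):s_2\in H\}$, $\mG_3=\{(1:-s_3:0):s_3\in\tilde H\}$, where $H$ is the group of squares and $\tilde H$ its complement, both of size $\frac{q-1}{2}$. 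Pick any subset $H'\subseteq H$ with $|H'|=x$ and any subset $\tilde H'\subseteq\tilde H$ with $|\tilde H'|=x$; then form the $(H',H',\tilde H')$-projective triangle $\mG'$.

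First I would check conditions (i) and (ii) of the definition of a balanced quasi-arc for $\mG'$: these are immediate, since $|H'|=|\tilde H'|=x$ and the three subsets of points still lie on the three fundamental lines $E_1E_2$, $E_2E_3$, $E_3E_1$ respectively. Then I would verify condition (iii). The cleanest way is to use the equivalent reformulation noted in the definition together with the collinearity computation already carried out in the proof of the proposition preceding Corollary~\ref{cor:construction}: for $P_1\in\mG'_1$, $P_2\in\mG'_2$, $P_3\in\mG'_3$ the determinant criterion gives $\langle P_1,P_2,P_3\rangle=\mathrm{PG}(2,q)$ iff $s_1s_2s_3\neq 1$, which still holds because $s_1s_2\in H$ (as $H'\subseteq H$ and $H$ is a group) while $s_3\in\tilde H'\subseteq\tilde H=\F_q^*\setminus H$. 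For the remaining part of (iii) — that no non-fundamental line meets $\mG'$ in more than two points — I would simply observe that $\mG'\subseteq\mG$, so any line meeting $\mG'$ in $\geq 3$ points would meet $\mG$ in $\geq 3$ points, forcing it to be a fundamental line by condition (iii) for $\mG$. Hence $\mG'$ is a balanced quasi-arc of weight $x$.

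There is essentially no obstacle here; the only mild subtlety is the hypothesis ``$x$ a positive integer'' versus the statement ``$x\leq\frac{q-1}{2}$'', and one should make sure $x\geq 1$ so that the three fundamental points are genuinely the distinguished points of $\mG'$ (for $x=0$ the statement is the trivial remark that three non-collinear points form a balanced quasi-arc of weight $0$, already recorded after the definition). I would also remark that the monotonicity principle used here — any symmetric sub-configuration of a balanced quasi-arc, obtained by deleting the same number of points from each side, is again a balanced quasi-arc — holds in general and does not rely on the squares construction; the role of Corollary~\ref{cor:construction} is only to guarantee that a balanced quasi-arc of the top weight $\frac{q-1}{2}$ exists to begin with.
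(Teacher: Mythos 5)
Your proof is correct and follows essentially the same route as the paper: start from the weight-$\frac{q-1}{2}$ balanced quasi-arc given by Corollary~\ref{cor:construction} and delete points symmetrically from the three sides to reduce the weight to $x$. The paper simply asserts that the resulting set is a balanced quasi-arc, whereas you spell out why condition (iii) survives the deletion; this is a welcome but not essentially different elaboration.
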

\begin{proof}
    By Corollary \ref{cor:construction}, we know that there exists a balanced quasi-arc $\mG$ in $\mathrm{PG}(2,q)$ of weight $ \frac{q-1}2$. 
    By definition, $\mG$ can be partitioned into four sets
    \[ \mG=\mF \cup \mG_1 \cup \mG_2 \cup \mG_3, \]
    where $\mF$ is the set of fundamental points and $\mG_1, \mG_2, \mG_3$ are sets of points of size $\lfloor \frac{q-1}2\rfloor$.
    Consider $\mG_i'$ a subset of $\mG_i$ of size $x$, for any $i \in \{1,2,3\}$.
    It is easy to see that the set $\mathcal{G}'=\mF \cup \mG_1' \cup \mG_2' \cup \mG_3'$ is a balanced quasi-arc in $\mathrm{PG}(2,q)$ of weight $x$.
\end{proof}

\begin{remark}
    For the case $q$ even, a possible choice for $H$ is the largest subgroup of the multiplicative group of $(\fq^*,\cdot)$ and as $\tilde{H}$ a subset of the same size in $\fq^*\setminus H$.
    This clearly depends on the prime factorization of $q-1$; see \cite{szHonyi1991combinatorial,cameron2020four}.
\end{remark}

\section{A Construction for $k=3$}  \label{sec:constr}

In this section, we show that if we consider the set of points $\mG$ as a balanced quasi-arc, we can compute $\alpha_F(\mG,s)$ for all $1 \le s\le n-1$. Besides obtaining (relatively) clean formulas for the values of $\alpha_F(\mG,s)$, we demonstrate that the random access expectation for these sets of points is strictly less than $k$. We separately address the case where the fundamental points appear only once in our sets, and the case where multiplicities are added to the fundamental points. We will also need the following notation.

\begin{notation}
From now on, both in this section and in Section~\ref{sec:conj}, the point sets that we analyze are \emph{balanced} within the set of fundamental points, and within the set of non-fundamental points. Formally, what we mean with this is that for a point set $\mG$ of rank $k$ and of cardinality~$n$, and for any fundamental points $E,E'$, we have
\begin{align*}
    \alpha_{E}(\mG,s)=\alpha_{E'}(\mG,s) \quad \textnormal{for all $1 \le s \le n-1$},
\end{align*}
and for any non-fundamental points $P,Q \in \mG$, we have
\begin{align*}
    \alpha_{P}(\mG,s)=\alpha_{Q}(\mG,s) \quad \textnormal{for all $1 \le s \le n-1$}.
\end{align*}
This implies that, when computing the random access expectation of any of the points in the set $\mG$, the expectation for all the fundamental points, and the expectation for all the non-fundamental points, is the same, respectively. Therefore, in the sequel, we often write $\alpha_{F}(\mG,s)$ when the point we wish to recover is fundamental, and we write $\alpha_{N}(\mG,s)$ when it is non-fundamental.
\end{notation}

\subsection{Distinct points}\label{subsec:dp}

In this subsection, we let $x \ge 0$ be an integer and we consider a balanced quasi-arc of weight~$x$ which we denote by $\mG_x=\mF \cup \mG_1 \cup \mG_2 \cup \mG_3$. Note that by Corollary~\ref{cor:q} we need $\left\lfloor \frac{q-1}{2}\right\rfloor \ge x$ to be satisfied in order for a quasi-arc of weight $x$ to exist over $\F_q$.

\begin{proposition} \label{prop:alphasimple}
    We have 
    \begin{align*}
        \alpha_F(\mG_x,1) &= 1, \\
        \alpha_F(\mG_x,2) &= 2\binom{x+2}{2}+x,  \\
        \alpha_F(\mG_x,s) &= \binom{3x+3}{s}-\binom{x+2}{s} \quad \textnormal{ if $3 \le s \le x+2$}, \\
        \alpha_F(\mG_x,s) &= \binom{3x+3}{s} \quad \textnormal{ if $x+2< s \le 3x+2$.}
    \end{align*}
\end{proposition}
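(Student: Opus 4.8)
## Proof Proposal

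The plan is to compute $\alpha_F(\mG_x, s)$ — the number of $s$-subsets $S$ of the $n = 3x+3$ points whose span contains a fixed fundamental point, say $E_1$ — by a complementary counting argument, exploiting the very rigid intersection structure of a balanced quasi-arc. Recall that every line other than the three fundamental lines $E_1E_2$, $E_2E_3$, $E_3E_1$ meets $\mG_x$ in at most $2$ points; the fundamental line $E_iE_j$ meets $\mG_x$ in exactly $x+2$ points (the $x$ points of the corresponding $\mG_\ell$ plus $E_i$ and $E_j$). In $\mathrm{PG}(2,q)$, a subset $S$ \emph{fails} to span the whole plane iff all its points lie on a single line; and since $E_1$ lies on exactly two fundamental lines ($E_1E_2$ and $E_3E_1$, the lines through $E_1$ that are secant to $\mG_x$ in many points), the key dichotomy is: $E_1 \in \langle S \rangle$ iff either $\langle S\rangle = \mathrm{PG}(2,q)$, or $S$ lies on one of the two fundamental lines through $E_1$, or $S$ lies on a line through $E_1$ meeting $\mG_x$ in $\le 2$ points (so $|S|\le 2$ and $E_1$ is forced into the span).

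First I would handle the small cases $s=1,2$ by hand. For $s=1$: $E_1 \in \langle P\rangle$ only when $P = E_1$, giving $\alpha_F(\mG_x,1)=1$. For $s=2$: a pair $\{P,Q\}$ has $E_1$ in its span iff $E_1 \in PQ$, i.e. $E_1,P,Q$ are collinear. The pairs on the line $E_1E_2$: it contains $x+2$ points of $\mG_x$, so $\binom{x+2}{2}$ pairs, but those not containing $E_1$ already force $E_1$ only if $E_1$ is on the line — which it is — so actually \emph{every} pair from the $x+2$ points of $E_1E_2$ spans a line containing $E_1$ only when that line \emph{is} $E_1E_2$; a pair of two points of $\mathcal{G}_1$ spans exactly $E_1E_2$, which contains $E_1$. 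Let me recount: pairs $\{P,Q\}$ with $E_1 \in \langle P,Q\rangle$ are exactly the pairs spanning a line through $E_1$. Lines through $E_1$ meeting $\mG_x$ in $\ge 2$ points are only $E_1E_2$ (with $x+2$ points) and $E_3E_1$ (with $x+2$ points); any other line through $E_1$ meets $\mG_x$ in at most $1$ point. So the pairs are those lying in $E_1E_2$ or in $E_3E_1$, but the pair $\{E_1,E_j\}$-type overlaps: the two lines share only $E_1$. Counting pairs contained in $E_1E_2$: $\binom{x+2}{2}$; likewise $\binom{x+2}{2}$ for $E_3E_1$; no pair lies in both. But this double-counts pairs of the form $\{E_1, \cdot\}$? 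No — a pair $\{E_1, P\}$ with $P\in\mathcal{G}_1$ lies only in $E_1E_2$. Hmm, the claimed answer is $2\binom{x+2}{2}+x$, with an extra $+x$; this $+x$ must come from pairs $\{P,Q\}$ where the line $PQ$ passes through $E_1$ but $PQ$ is \emph{not} a fundamental line, which contradicts the at-most-$2$ condition unless... actually a non-fundamental line through $E_1$ can contain one point of $\mG_x$ other than $E_1$; a pair $\{E_1, P\}$ for $P \in \mathcal{G}_2$ (which lies on $E_2E_3$, not through $E_1$): the line $E_1P$ is a new line through $E_1$ containing $E_1$ and $P$, hence $E_1 \in \langle E_1, P\rangle$ trivially. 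So all pairs containing $E_1$ work: there are $n-1 = 3x+2$ of them. Pairs not containing $E_1$ that span a line through $E_1$: only those lying in $\mathcal{G}_1$ ($\binom{x}{2}$ pairs, line $E_1E_2$ minus $E_1,E_2$... wait $E_2\in E_1E_2$, and a pair from $\mathcal{G}_1$ spans $E_1E_2 \ni E_1$) or in $\mathcal{G}_3$ ($\binom{x}{2}$), plus the pair... I would carefully enumerate all these contributions and check the total equals $2\binom{x+2}{2}+x$; this bookkeeping is the kind of routine-but-delicate step I'd do slowly.

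For $s \ge 3$, the strategy is cleaner: a subset $S$ with $|S|\ge 3$ has $E_1 \notin \langle S\rangle$ iff $\langle S\rangle$ is a line (not the whole plane) not passing through $E_1$. A subset of size $\ge 3$ that is contained in a line must be contained in a $\ge 3$-secant line, and the only $\ge 3$-secant lines are the three fundamental lines (when $x \ge 1$). Among these, $E_2E_3$ is the unique fundamental line not through $E_1$. Hence for $3 \le s \le x+2$, the ``bad'' $S$ are exactly the $s$-subsets of the $x+2$ points on $E_2E_3$, giving $\binom{x+2}{s}$ bad subsets and $\alpha_F(\mG_x,s) = \binom{3x+3}{s} - \binom{x+2}{s}$. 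For $s > x+2$, no line carries that many points of $\mG_x$, so \emph{no} subset is bad and $\alpha_F(\mG_x,s) = \binom{3x+3}{s}$; this requires $s \le 3x+2$ only so that $s \le n-1$ as in the statement. The one subtlety to verify is that a size-$\ge 3$ set not contained in any line automatically spans $\mathrm{PG}(2,q)$ (true since any $3$ non-collinear points span the plane) and that every such spanning set contains $E_1$ — trivially, the whole plane contains every point. I also need to double-check the edge case where $S$ meets $E_2E_3$ in fewer than $3$ points but still has $E_1\notin\langle S\rangle$; this can only happen if $S$ spans a line, which would have to be a $\ge 3$-secant, i.e. fundamental, i.e. $E_2E_3$ — contradiction.

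The main obstacle I anticipate is not conceptual but the careful case analysis for $s=2$: getting the $+x$ term right requires correctly separating pairs that contain $E_1$, pairs lying within a single $\mathcal{G}_i$ on a fundamental line through $E_1$, and ruling out spurious collinearities — and making sure nothing is double-counted when two fundamental lines through $E_1$ are involved. The $s\ge 3$ cases, by contrast, reduce immediately once one observes that the fundamental lines are the only lines that are $\ge 3$-secant to $\mG_x$, which is precisely condition (iii) in the definition of a balanced quasi-arc.
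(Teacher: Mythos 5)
Your proposal is correct and follows essentially the same route as the paper's proof: the trivial $s=1$ count, a case split for $s=2$ (pairs inside $E_1E_2$ or $E_3E_1$ give $2\binom{x+2}{2}$ with no overlap since the two lines meet only in $E_1$, and the remaining contributions are pairs $\{E_1,P\}$ with $P\in\mG_2$, giving $+x$), and for $s\ge 3$ the complementary count using that the only $\ge 3$-secant line avoiding $E_1$ is $E_2E_3$, which carries $x+2$ points. The only loose end is that you narrate the $s=2$ bookkeeping without fully closing it (and your alternative tally ``pairs containing $E_1$ plus pairs in $\mG_1$ or $\mG_3$'' omits the pairs $\{E_2,P\}$, $P\in\mG_1$, and $\{E_3,Q\}$, $Q\in\mG_3$), but the first decomposition you identify is exactly the one the paper uses and does yield $2\binom{x+2}{2}+x$.
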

\begin{proof}
Suppose that we want to recover the fundamental point $E_1$.

It is easy to see that $\alpha_{E_1}(\mG_x,1)=1$. 

For $s=2$, we can count the number of 2-sets in $\mG$ that recover $E_1$ by distinguishing between two cases: If the two points span $E_1E_2$ or $E_3E_1$, they will for sure recover $E_1$, and there are $\binom{x+2}{2}$ 2-sets for either of the lines. If this is not the case, the only way to recover $E_1$ with two points is when one of them is $E_1$. There are a total of $x$ such points that will not span either of the lines $E_1E_2$ or $E_3E_1$, i.e., all the points in $\mG_2$.

For $3 \le s \le x+2$, the only $s$-sets that do \emph{not} recover $E_1$ are those that only consist of points in $E_2E_3$. This gives that there are $\binom{3x+3}{s}-\binom{x+2}{s}$ sets that recover $E_1$. 

Finally, if $s > x+2$, it is not possible to only have points from $E_2E_3$ in the $s$-set, and so necessarily any $s$-set will recover $E_1$.

All of the computations in this proof can be done analogously for the fundamental points $E_2$ and $E_3$, and so this code is balanced within the set of fundamental points. Therefore we can write more generally $\alpha_F(\mG_x,s)$ instead of $\alpha_{E_1}(\mG_x,s)$ and obtain the statement of the proposition.
\end{proof}

Combining Proposition~\ref{prop:alphasimple} with Lemma~\ref{lem:fi} we obtain a closed formula for the random access expectation of the point set $\mG_x$ that only depends on the weight $x$ of the balanced quasi-arc $\mG_x$.
\begin{corollary} \label{cor:expconstr}
We have
\begin{align*}
        \E[\tau_F(\mG_x)] = 3+\frac{2}{3x+1}-\frac{2((x+2)(x+1)+x)}{(3x+2)(3x+1)}+\sum_{s=3}^{x+2}\prod_{i=0}^{s-1}\frac{x+2-i}{3x+2-i}.
\end{align*}
\end{corollary}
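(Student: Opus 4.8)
The plan is to substitute the four-part formula for $\alpha_F(\mG_x,s)$ from Proposition~\ref{prop:alphasimple} into the expression $\E[\tau_F(\mG_x)] = nH_n - \sum_{s=1}^{n-1} \alpha_F(\mG_x,s)/\binom{n-1}{s}$ from Lemma~\ref{lem:fi}, with $n = 3x+3$, and then simplify. The main point is that the ``generic'' part of the sum — where $\alpha_F(\mG_x,s) = \binom{n}{s}$ — telescopes against $nH_n$ in a clean way, leaving only low-order correction terms plus the genuinely interesting tail $\sum_{s=3}^{x+2}\binom{x+2}{s}/\binom{3x+2}{s}$.

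Concretely, I would first split the sum over $s=1,\dots,3x+2$ into the ranges $\{1\}$, $\{2\}$, $\{3,\dots,x+2\}$, and $\{x+3,\dots,3x+2\}$ matching the four cases of Proposition~\ref{prop:alphasimple}. For the first range, $\alpha_F(\mG_x,1)/\binom{n-1}{1} = 1/(3x+2)$. For $s=2$, $\alpha_F(\mG_x,2)/\binom{n-1}{2} = (2\binom{x+2}{2}+x)/\binom{3x+2}{2} = 2((x+2)(x+1)+x)/((3x+2)(3x+1))$. For the top two ranges I would write $\alpha_F(\mG_x,s) = \binom{n}{s} - \binom{x+2}{s}$ uniformly (noting $\binom{x+2}{s}=0$ for $s>x+2$), so that $\sum_{s=3}^{3x+2}\alpha_F(\mG_x,s)/\binom{n-1}{s} = \sum_{s=3}^{3x+2}\binom{n}{s}/\binom{n-1}{s} - \sum_{s=3}^{x+2}\binom{x+2}{s}/\binom{3x+2}{s}$. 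The key identity is $\binom{n}{s}/\binom{n-1}{s} = n/(n-s)$, so $\sum_{s=0}^{n-1}\binom{n}{s}/\binom{n-1}{s} = n\sum_{s=0}^{n-1}1/(n-s) = nH_n$; hence $\sum_{s=3}^{3x+2}\binom{n}{s}/\binom{n-1}{s} = nH_n - (n/n + n/(n-1) + n/(n-2)) = nH_n - 1 - \frac{3x+3}{3x+2} - \frac{3x+3}{3x+1}$.

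Putting these together, the $nH_n$ cancels, and I collect the remaining rational terms: $\E[\tau_F(\mG_x)] = 1 + \frac{3x+3}{3x+2} + \frac{3x+3}{3x+1} - \frac{1}{3x+2} - \frac{2((x+2)(x+1)+x)}{(3x+2)(3x+1)} + \sum_{s=3}^{x+2}\binom{x+2}{s}/\binom{3x+2}{s}$. A short computation shows $1 + \frac{3x+3}{3x+2} - \frac{1}{3x+2} = 1 + \frac{3x+2}{3x+2} = 2$ and $2 + \frac{3x+3}{3x+1} = 3 + \frac{2}{3x+1}$, giving the first three terms of the claimed formula. Finally I would rewrite the tail as a product: $\binom{x+2}{s}/\binom{3x+2}{s} = \prod_{i=0}^{s-1}\frac{x+2-i}{3x+2-i}$, which follows by writing both binomials as falling-factorial ratios over $s!$ and cancelling.

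The only real obstacle is bookkeeping: making sure the index ranges line up (in particular that $\binom{x+2}{s}$ vanishes exactly when $s\geq x+3$ so that the two upper cases can be merged), and checking the small-$s$ edge cases when $x$ is small — e.g. $x=0$, where the sum $\sum_{s=3}^{2}$ is empty and one should recover $\E[\tau_F] = 3$ for three non-collinear points, and $x=1$, where $n-1 = 5$ but the formula's generic range still behaves correctly. I would verify these boundary instances directly against Lemma~\ref{lem:fi} as a sanity check. No deeper idea is needed beyond the harmonic-sum telescoping identity.
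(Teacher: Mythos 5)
Your proposal is correct and follows essentially the same route as the paper: both substitute the values of $\alpha_F(\mG_x,s)$ from Proposition~\ref{prop:alphasimple} into Lemma~\ref{lem:fi}, use the identity $\binom{n}{s}/\binom{n-1}{s}=n/(n-s)$ to telescope the generic part of the sum against $nH_n$, and finish by rewriting $\binom{x+2}{s}/\binom{3x+2}{s}$ as a product. The only cosmetic difference is that you merge the top two cases by noting $\binom{x+2}{s}=0$ for $s>x+2$ and subtract the $s=0,1,2$ terms from $nH_n$ directly, whereas the paper reindexes $\sum_{s=3}^{3x+2}\frac{3x+3}{3x+3-s}=\sum_{s=1}^{3x}\frac{3x+3}{s}$; these are just two bookkeepings of the same telescoping step.
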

\begin{proof}
    We simplify the formula given in Lemma~\ref{lem:fi} for the values of $\alpha_F(\mG_x,s)$ computed in Proposition~\ref{prop:alphasimple}. We have
    \begin{align*}
        &\E[\tau_F(\mG_x)] = (3x+3)H_{3x+3}-\sum_{s=1}^{3x+2}\frac{\alpha_F(\mG_x,s)}{\binom{3x+2}{s}} \\
        &= \sum_{s=1}^{3x+3} \frac{3x+3}{s}-\frac{1}{3x+2}-\frac{2((x+2)(x+1)+x)}{(3x+2)(3x+1)}-\sum_{s=3}^{x+2}\frac{\binom{3x+3}{s}}{\binom{3x+2}{s}}+\sum_{s=3}^{x+2}\frac{\binom{x+2}{s}}{\binom{3x+2}{s}}-\sum_{s=x+3}^{3x+2}\frac{\binom{3x+3}{s}}{\binom{3x+2}{s}} \\
        &= \sum_{s=1}^{3x+3} \frac{3x+3}{s}-\frac{1}{3x+2}-\frac{2((x+2)(x+1)+x)}{(3x+2)(3x+1)}-\sum_{s=3}^{3x+2}\frac{\binom{3x+3}{s}}{\binom{3x+2}{s}}+\sum_{s=3}^{x+2}\frac{\binom{x+2}{s}}{\binom{3x+2}{s}}\\
        &= \sum_{s=1}^{3x+3} \frac{3x+3}{s}-\frac{1}{3x+2}-\frac{2((x+2)(x+1)+x)}{(3x+2)(3x+1)}-\sum_{s=3}^{3x+2}\frac{3x+3}{3x+3-s}+\sum_{s=3}^{x+2}\frac{\binom{x+2}{s}}{\binom{3x+2}{s}}
\end{align*}
where in the last equality we used the fact that for integers $a > b \ge 1$ we have $\binom{a}{b}/\binom{a-1}{b}=a/(a-b)$. We further simplify to obtain the the above expression is equal to
\begin{align*}
        \E[\tau_F(\mG_x)] &= \sum_{s=1}^{3x+3} \frac{3x+3}{s}-\frac{1}{3x+2}-\frac{2((x+2)(x+1)+x)}{(3x+2)(3x+1)}-\sum_{s=1}^{3x}\frac{3x+3}{s}+\sum_{s=3}^{x+2}\frac{\binom{x+2}{s}}{\binom{3x+2}{s}} \\
        &= \sum_{s=3x+1}^{3x+3} \frac{3x+3}{s}-\frac{1}{3x+2}-\frac{2((x+2)(x+1)+x)}{(3x+2)(3x+1)}+\sum_{s=3}^{x+2}\frac{\binom{x+2}{s}}{\binom{3x+2}{s}} \\
        &= 1+\frac{3x+3}{3x+2}+\frac{3x+3}{3x+1}-\frac{1}{3x+2}-\frac{2((x+2)(x+1)+x)}{(3x+2)(3x+1)}+\sum_{s=3}^{x+2}\frac{\binom{x+2}{s}}{\binom{3x+2}{s}} \\
        &= 2+\frac{3x+3}{3x+1}-\frac{2((x+2)(x+1)+x)}{(3x+2)(3x+1)}+\sum_{s=3}^{x+2}\frac{\binom{x+2}{s}}{\binom{3x+2}{s}}.
\end{align*}
Using the definition of the binomial coefficient we obtain that 
\begin{align*}
    \frac{\binom{x+2}{s}}{\binom{3x+2}{s}} = \prod_{i=0}^{s-1}\frac{x+2-i}{3x+2-i}
\end{align*}
and so we further simplify to obtain that 
\begin{align*}
        \E[\tau_F(\mG_x)] &= 2+\frac{3x+3}{3x+1}-\frac{2((x+2)(x+1)+x)}{(3x+2)(3x+1)}+\sum_{s=3}^{x+2}\prod_{i=0}^{s-1}\frac{x+2-i}{3x+2-i} \\
        &= 3+\frac{2}{3x+1}-\frac{2((x+2)(x+1)+x)}{(3x+2)(3x+1)}+\sum_{s=3}^{x+2}\prod_{i=0}^{s-1}\frac{x+2-i}{3x+2-i}. \qedhere
        % &= 3+\frac{2}{3x+1}-\frac{(x+2)(x+1)}{(3x+2)(3x+1)}\left(2-\sum_{s=3}^{x+2}\prod_{i=2}^{s-1}\frac{x+2-i}{3x+2-i}\right).
    \end{align*}
\end{proof}

We can use the formula in Corollary~\ref{cor:expconstr} in order to compute the random access expectation of the point sets of Example~\ref{ex:quasiarcs}.

\begin{example}
\begin{itemize}
    \item[(i)] Let $\mG$ be defined as the balanced quasi-arcs in Example~\ref{ex:quasiarcs} (i) of weight 2 over $\F_5$. We have $\E[\tau_F(\mG)] \approx 2.87$.
    \item[(ii)] Let $\mG$ be defined as the balanced quasi-arcs in Example~\ref{ex:quasiarcs} (ii) of weight 1 over $\F_4$. We have $\E[\tau_F(\mG)] = 2.9$.
\end{itemize}
\end{example}

The formula of Corollary~\ref{cor:expconstr} is not easy to evaluate explicitly for general $x$. However, for large $x$, we can get an asymptotic upper bound that reads as follows.

\begin{theorem}
    We have $\lim_{x\to \infty}{\E[\tau_F(\mG_x)]} \le 3-1/6 \approx 0.9\overline{44} k.$
\end{theorem}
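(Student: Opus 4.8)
The plan is to start from the closed formula of Corollary~\ref{cor:expconstr} and take $x \to \infty$ term by term. Write
\[
\E[\tau_F(\mG_x)] = 3 + \frac{2}{3x+1} - \frac{2((x+2)(x+1)+x)}{(3x+2)(3x+1)} + S(x), \qquad S(x) := \sum_{s=3}^{x+2}\ \prod_{i=0}^{s-1}\frac{x+2-i}{3x+2-i}.
\]
The term $\tfrac{2}{3x+1}$ tends to $0$, and since $(x+2)(x+1)+x = x^2+4x+2$ while $(3x+2)(3x+1) = 9x^2+9x+2$, the rational term tends to $\tfrac{2}{9}$. So the whole statement reduces to showing $\limsup_{x\to\infty} S(x) \le \tfrac{1}{18}$, after which the bound follows from $3 + 0 - \tfrac{2}{9} + \tfrac{1}{18} = 3 - \tfrac{1}{6}$ (recall $k=3$ throughout this section).

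For $S(x)$ the idea is to find a uniform geometric majorant. For fixed $x$ the function $i \mapsto \frac{x+2-i}{3x+2-i}$ is decreasing in $i$ (its $i$-derivative is $-2x/(3x+2-i)^2 < 0$), so every factor in the $s$-th summand is at most $\frac{x+2}{3x+2}$; hence $\prod_{i=0}^{s-1}\frac{x+2-i}{3x+2-i} \le \bigl(\frac{x+2}{3x+2}\bigr)^s$. Summing the resulting geometric series (extended to $s=\infty$) gives
\[
S(x) \le \sum_{s=3}^{\infty}\Bigl(\frac{x+2}{3x+2}\Bigr)^s = \frac{(x+2)^3}{2x\,(3x+2)^2},
\]
whose right-hand side tends to $\tfrac{1}{18}$ as $x\to\infty$. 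That is all that is needed. If one wants the exact value of the limit rather than only the upper bound, one checks in addition that for $x \ge 3$ each summand is bounded by $(1/2)^s$ uniformly in $x$, while for fixed $s$ it converges to $(1/3)^s$; Tannery's theorem then yields $S(x) \to \sum_{s\ge 3} 3^{-s} = \tfrac{1}{18}$, so the limit is in fact exactly $3 - \tfrac{1}{6}$.

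The only point requiring care is $S(x)$: because its number of terms grows with $x$, a term-by-term passage to the limit in $s$ does not by itself control the sum, so one must route through a uniform majorant — the elementary geometric bound above, or equivalently a dominated-convergence (Tannery) argument. Everything else is routine manipulation of rational functions, so I expect no serious obstacle beyond writing that one estimate cleanly.
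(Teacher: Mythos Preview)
Your proof is correct and follows essentially the same approach as the paper: both start from Corollary~\ref{cor:expconstr}, note that the non-sum terms contribute $3-\tfrac{2}{9}$ in the limit, and bound the remaining product sum by a geometric series to get the extra $\tfrac{1}{18}$. The only cosmetic difference is that the paper first factors out the two leading terms $\tfrac{(x+2)(x+1)}{(3x+2)(3x+1)}$ and then bounds the remaining factors (with $i\ge 2$) by $\tfrac{1}{3}$, whereas you bound every factor by the largest one $\tfrac{x+2}{3x+2}$; both routes collapse to the same limit, and your Tannery remark (showing the limit is actually attained) is a pleasant extra.
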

\begin{proof}
We have
\begin{align*}
    \E[\tau_F(\mG)] &= 3+\frac{2}{3x+1}-\frac{2((x+2)(x+1)+x)}{(3x+2)(3x+1)}+\sum_{s=3}^{x+2}\prod_{i=0}^{s-1}\frac{x+2-i}{3x+2-i} \\
    &= 3+\frac{2}{3x+1}-\frac{2((x+2)(x+1)+x)}{(3x+2)(3x+1)}+\frac{(x+2)(x+1)}{(3x+2)(3x+1)}\sum_{s=3}^{x+2}\prod_{i=2}^{s-1}\frac{x+2-i}{3x+2-i}.
\end{align*}
Moreover,
\begin{align*}
        \frac{x+2-i}{3x+2-i} \le \frac{1}{3}
    \end{align*}
    for all $i \ge 2$. Because of this, and also because of Corollary~\ref{cor:expconstr}, we obtain
    \begin{align*}
         \E[\tau_F(\mG_x)] &\le 3+\frac{2}{3x+1}-\frac{2((x+2)(x+1)+x)}{(3x+2)(3x+1)}+\frac{(x+2)(x+1)}{(3x+2)(3x+1)}\sum_{s=3}^{x+2}\prod_{i=2}^{s-1}\frac{1}{3}\\
         &= 3+\frac{2}{3x+1}-\frac{2((x+2)(x+1)+x)}{(3x+2)(3x+1)}+\frac{(x+2)(x+1)}{(3x+2)(3x+1)}\sum_{s=3}^{x+2}\left(\frac{1}{3}\right)^{s-2} \\
         &= 3+\frac{2}{3x+1}-\frac{2((x+2)(x+1)+x)}{(3x+2)(3x+1)}+\frac{(x+2)(x+1)}{(3x+2)(3x+1)}\left(\sum_{s=0}^{x}\left(\frac{1}{3}\right)^{s}-1\right)  \\
         &= 3+\frac{2}{3x+1}-\frac{2((x+2)(x+1)+x)}{(3x+2)(3x+1)}+\frac{(x+2)(x+1)}{(3x+2)(3x+1)}\left(\frac{1-(1/3)^{x+1}}{1-1/3}-1\right) \\
         &= 3+\frac{2}{3x+1}-\frac{2((x+2)(x+1)+x)}{(3x+2)(3x+1)}+\frac{(x+2)(x+1)}{(3x+2)(3x+1)}\left(\frac{1}{2}-\frac{1}{2}\left(\frac{1}{3}\right)^x\right).
    \end{align*}
    As $x \to \infty$ we have
    \begin{align*}
         3+\frac{2}{3x+1}-\frac{2((x+2)(x+1)+x)}{(3x+2)(3x+1)}+\frac{(x+2)(x+1)}{(3x+2)(3x+1)}\left(\frac{1}{2}-\frac{1}{2}\left(\frac{1}{3}\right)^x\right) \sim 3-\frac{2}{9}+\frac{1}{18}= 3-\frac{1}{6}
    \end{align*}
    proving the statement of the theorem.
\end{proof}

% \begin{remark}
%     Even though in this paper we only investigate the expected number of draws in order to recover \emph{one} point, in practice it would be interesting to study the expected number of draws for recovering a subset of fundamental points, and determine codes for which this expectation is small (or at least smaller than what one would get using an identity code). The construction coming from balanced quasi-arcs makes it not only easy to give a closed formula for the random access expectation as treated in this paper, but also for the case of retrieving \emph{two} fundamental points. We compared our construction for this case with the expectation of MDS codes with a computer algebra program, and it outperforms MDS codes, i.e., the expected number of draws is smaller than for the codes arising from balanced quasi-arcs.
% \end{remark}

\subsection{Fundamental points with multiplicity} \label{subsec:mp}

In this subsection, we improve the construction of the point set considered in Subsection~\ref{subsec:dp} by using a balanced quasi-arc of weight $x$ and adding an additional $y - 1$ copies of each fundamental point to the set where $x \ge 0$ and $y \ge 1$ are integers, resulting in a point set where each fundamental point has multiplicity $y$. We denote this point set by $\mG_{x,y} = \mF_y \cup \mG_1 \cup \mG_2 \cup \mG_3$, where $\mF_y$ contains $y$ copies of $E_1$, $E_2$, and $E_3$, and $\mG_1$, $\mG_2$, and $\mG_3$ satisfy the properties needed for $\mF \cup \mG_1 \cup \mG_2 \cup \mG_3$ to form a balanced quasi-arc. Adding multiplicities to the fundamental points creates a greater imbalance in this multiset, making it even more likely to recover fundamental points. We proceed here similarly to Subsection~\ref{subsec:dp} and start with giving closed formulas for $\alpha_F(\mG_{x,y},s)$ for all $1 \le s \le n-1$.

\begin{proposition} \label{prop:alphamult}
    We have 
    \begin{align*}
        \alpha_F(\mG_{x,y},1) &= y, \\
        \alpha_F(\mG_{x,y},2) &= 2\left( xy+\binom{x}{2} \right)+y(3x+2y)+y(y-1)/2,  \\
        \alpha_F(\mG_{x,y},s) &= \binom{3x+3y}{s}-\binom{x+2y}{s}-2\binom{y}{s-1}x \quad  \textnormal{ if $3 \le s \le x+2y$}, \\
        \alpha_F(\mG_{x,y},s) &= \binom{3x+3y}{s} \quad \textnormal{ if $x+2y < s \le 3x+3y-1$.}
    \end{align*}
\end{proposition}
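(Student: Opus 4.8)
The plan is to follow the same two-step strategy as in Proposition~\ref{prop:alphasimple}. By the symmetry of $\mG_{x,y}$ within its fundamental points it suffices to count, for each $s$, the $s$-subsets $S\subseteq[n]$ with $n=3x+3y$ for which $E_1\in\langle P_j:j\in S\rangle$; for small $s$ we count such sets directly, and for larger $s$ we count the complementary \emph{non-recovering} sets. The key observation is that $E_1\notin\langle P_j:j\in S\rangle$ exactly when all the points $\{P_j:j\in S\}$ lie on a common line of $\mathrm{PG}(2,q)$ missing $E_1$; this turns the count of non-recovering sets into an enumeration of the lines avoiding $E_1$ and the number of indices of $\mG_{x,y}$ they carry, for which we use property (iii) of a balanced quasi-arc. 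The case $s=1$ is immediate: only the $y$ copies of $E_1$ recover $E_1$, so $\alpha_F(\mG_{x,y},1)=y$.

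For $s=2$ I would split the $2$-subsets by how many of their indices are copies of $E_1$. Any $2$-set with at least one copy of $E_1$ recovers $E_1$: there are $\binom{y}{2}$ of them with two copies and $y(3x+2y)$ with exactly one, giving the terms $y(y-1)/2$ and $y(3x+2y)$. A $2$-set using no copy of $E_1$ recovers $E_1$ iff its two points are distinct and lie on a line through $E_1$; since any line through $E_1$ other than $E_1E_2$ and $E_1E_3$ meets the underlying quasi-arc only in $E_1$ and at most one further point, such a pair lies on $E_1E_2$ or on $E_1E_3$. On $E_1E_2$ the non-$E_1$ indices are the $y$ copies of $E_2$ and the $x$ points of $\mG_1$, and the number of $2$-subsets of these $x+y$ indices representing two \emph{distinct} points is $\binom{x+y}{2}-\binom{y}{2}=xy+\binom{x}{2}$; the same count holds on $E_1E_3$. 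Summing gives $2\bigl(xy+\binom{x}{2}\bigr)$ and hence the stated $\alpha_F(\mG_{x,y},2)$.

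For $3\le s\le x+2y$ I would count the non-recovering $s$-sets, i.e.\ the $s$-subsets whose points all lie on a line missing $E_1$. A line carrying at least three indices of $\mG_{x,y}$ either meets the underlying quasi-arc in three or more points — so, by property (iii), is a fundamental line — or meets it in at most two points and hence contains a fundamental point with its $y$ copies; going through the cases and discarding the lines through $E_1$ leaves precisely $E_2E_3$ (carrying the $x+2y$ indices coming from $E_2$, $E_3$, $\mG_2$), the $x$ lines $E_2P$ with $P\in\mG_3$, and the $x$ lines $E_3P$ with $P\in\mG_1$, each of the latter carrying $y+1$ indices. The $s$-subsets supported on $E_2E_3$ give $\binom{x+2y}{s}$ non-recovering sets. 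A non-recovering $s$-set supported on a thin line $E_2P$ and not already counted must contain $P$ — otherwise it consists only of copies of $E_2$ and is supported on $E_2E_3$ — leaving $\binom{y}{s-1}$ such sets, counted once each since two thin lines meet in a single point; the lines $E_3P$ contribute another $x\binom{y}{s-1}$ by symmetry. Thus there are $\binom{x+2y}{s}+2x\binom{y}{s-1}$ non-recovering $s$-sets, and subtracting from $\binom{3x+3y}{s}$ yields the third formula. Finally, for $s>x+2y$ no line avoiding $E_1$ carries $s$ indices of $\mG_{x,y}$, so every $s$-set recovers $E_1$ and $\alpha_F(\mG_{x,y},s)=\binom{3x+3y}{s}$.

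The step I expect to be the main obstacle is the line enumeration for $3\le s\le x+2y$: one has to spot the ``thin'' lines $E_2P$ ($P\in\mG_3$) and $E_3P$ ($P\in\mG_1$), which do not appear in Proposition~\ref{prop:alphasimple} because in the distinct-points case each carries only two indices — equivalently $\binom{1}{s-1}=0$ for $s\ge3$ — and one has to take care that subsets whose span collapses to a single fundamental point are not double-counted against the $E_2E_3$-term. Everything else is routine binomial manipulation, and setting $y=1$ recovers Proposition~\ref{prop:alphasimple} as a sanity check.
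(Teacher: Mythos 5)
Your proof is correct and follows essentially the same approach as the paper's: for $s=1,2$ count the recovering sets directly by cases on how many copies of $E_1$ they contain, and for $3\le s\le x+2y$ pass to the complement and enumerate the lines avoiding $E_1$ that carry at least three indices, namely $E_2E_3$ and the $2x$ ``thin'' lines $E_2P$ ($P\in\mG_3$) and $E_3P$ ($P\in\mG_1$). Your write-up is somewhat more explicit than the paper's — in particular in isolating the thin lines geometrically and checking that the $\binom{y}{s-1}$-contributions are not double-counted against the $E_2E_3$-term or against each other — but the decomposition and the resulting counts are the same.
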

\begin{proof}
As in the proof of Proposition~\ref{prop:alphasimple}, $\alpha_{E_1}(\mG_{x,y},s)$ for any $1 \le s \le n-1$ does not depend on $E_1$, and so w.l.o.g. suppose that we want to recover the fundamental point $E_1$.

It is easy to see that $\alpha_F(\mG_{x,y},1)=y$ since by definition every fundamental point shows up $y$ times in the set $\mG_{x,y}$.

As before, for $s=2$, we can count the number of 2-sets in $\mG$ that recover $E_1$ by distinguishing between three cases: If the two points span $E_1E_2$ or $E_3E_1$, they will for sure recover $E_1$, and there are $xy+\binom{x}{2}$ 2-sets for either of the lines without considering that $E_1$ could be one of the points. If $E_1$ is one of the two points, then any of the remaining $3x+2y$ points will give a recovering 2-set. Finally, if $y\ge 2$, the only 2-set that recovers $E_1$ that we did not count yet, is the 2-set made of only $E_1$. There are $y(y-1)/2$ such sets.

For $3 \le s \le x+2y$, the only $s$-sets that do \emph{not} recover $E_1$ are those that only consist of points in $E_2E_3$, and those that consist of $(s-1)$-sets of either $E_2$ or $E_3$, together with a point on the fundamental line that does not go through the $E_2$ or $E_3$, respectively. In total this gives $\binom{3x+3y}{s}-\binom{x+2y}{s}-2\binom{y}{s-1}x$ such $s$-sets.

Finally, if $s > x+2y$, it is not possible to only have points from $E_2E_3$ in the $s$-set, and so we conclude that any $s$-set will recover $E_1$.
\end{proof}

\begin{corollary} \label{cor:expconstrm}
We have
\begin{align*}
        \E[\tau_F(\mG_{x,y})] &= 3+\frac{2}{3x+3y-2} - \frac{y-1}{3x+3y-1} -\frac{2\left( xy+\binom{x}{2} \right)+y(3x+2y)+y(y-1)/2}{\binom{3x+3y-1}{2}} \\
        &\quad +\sum_{s=3}^{x+2y}\prod_{i=0}^{s-1}\frac{x+2y-i}{3x+3y-i-1}+\sum_{s=3}^{y+1}\frac{2\binom{y}{s-1}x}{\binom{3x+3y-1}{s}}
        % 3+\frac{2}{3x+3y-2} - \frac{y-1}{3x+3y-1} -\frac{\left(x+2y\right)\left(x+2y-1\right)}{\left(3x+3y-1\right)\left(3x+3y-2\right)}\\
        % &\quad \cdot\left(4\left( xy+\binom{x}{2} \right)+2y(3x+2y)+y(y-1)-\sum_{s=3}^{x+2y}\prod_{i=2}^{s-1}\frac{x+2y-i}{3x+3y-i-1}\right)\\
        % &\quad +\sum_{s=3}^{y+1}\frac{2\binom{y}{s-1}x}{\binom{3x+3y-1}{s}}.
\end{align*}
\end{corollary}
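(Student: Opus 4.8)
The statement follows by a direct evaluation of Lemma~\ref{lem:fi}, in complete analogy with the proof of Corollary~\ref{cor:expconstr}. Put $n := |\mG_{x,y}| = 3x+3y$, so that $n-1 = 3x+3y-1$ and $n-2 = 3x+3y-2$; by Proposition~\ref{prop:alphamult} the multiset $\mG_{x,y}$ is balanced within its fundamental points, hence $\alpha_F(\mG_{x,y},s)$ is well defined for all $1 \le s \le n-1$. The plan is to substitute the four cases of Proposition~\ref{prop:alphamult} into
\[
\E[\tau_F(\mG_{x,y})] = nH_n - \sum_{s=1}^{n-1}\frac{\alpha_F(\mG_{x,y},s)}{\binom{n-1}{s}},
\]
splitting the sum over $s$ into the isolated terms $s=1$ and $s=2$, the middle range $3 \le s \le x+2y$, and the top range $x+2y < s \le n-1$ (these two ranges partition $\{3,\dots,n-1\}$ since $x+2y \le 3x+3y-1$ for all admissible $x \ge 0$, $y \ge 1$).

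The single identity driving the simplification is the one already used in Corollary~\ref{cor:expconstr}: for integers $a > b \ge 1$ one has $\binom{a}{b}/\binom{a-1}{b} = a/(a-b)$. Applying it to the $\binom{n}{s}$-part of $\alpha_F(\mG_{x,y},s)$ across the combined range $3 \le s \le n-1$ gives $\sum_{s=3}^{n-1}\binom{n}{s}/\binom{n-1}{s} = \sum_{s=3}^{n-1} n/(n-s) = nH_{n-3}$, and then $nH_n - nH_{n-3} = 1 + n/(n-1) + n/(n-2)$. Combining this with the $s=1$ contribution $-y/(n-1)$, and using $n/(n-1) - y/(n-1) = (n-y)/(n-1) = 1 - (y-1)/(n-1)$ together with $n/(n-2) = 1 + 2/(n-2)$, produces the leading term $3 + \tfrac{2}{3x+3y-2} - \tfrac{y-1}{3x+3y-1}$. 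The $s=2$ contribution is kept verbatim as $-\alpha_F(\mG_{x,y},2)/\binom{3x+3y-1}{2}$. Finally, the two remaining negative pieces of $\alpha_F(\mG_{x,y},s)$ in the middle range, $-\binom{x+2y}{s}$ and $-2\binom{y}{s-1}x$, reappear with a plus sign; writing $\binom{x+2y}{s}/\binom{n-1}{s} = \prod_{i=0}^{s-1}(x+2y-i)/(3x+3y-1-i)$ and using $\binom{y}{s-1} = 0$ for $s > y+1$ to truncate the last sum to $\sum_{s=3}^{y+1}$ gives exactly the claimed expression.

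There is no conceptual obstacle here; the only delicate point is index bookkeeping. One must check that all the denominators $\binom{n-1}{s}$ occurring are nonzero, which holds since $n = 3x+3y \ge 3$ and every relevant $s$ satisfies $s \le n-1$, and that the degenerate instances — $x = 0$, or $y = 1$, where several of the sums become empty — are consistent with the stated formula; one also verifies the endpoint behaviour of $\binom{a}{b}/\binom{a-1}{b} = a/(a-b)$ at $s = n-1$ and the vanishing of $nH_{n-3}$ and of the empty sums when $n = 3$. Once these checks are in place, the algebra is the same telescoping manipulation as in Corollary~\ref{cor:expconstr}.
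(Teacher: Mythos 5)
Your argument is correct and is essentially the paper's own proof: you substitute the case‑by‑case values of Proposition~\ref{prop:alphamult} into Lemma~\ref{lem:fi} and carry out the same telescoping simplification via $\binom{a}{b}/\binom{a-1}{b}=a/(a-b)$ that the paper uses for Corollary~\ref{cor:expconstr}. The only cosmetic difference is that you collapse the telescoping into $nH_n - nH_{n-3}$ directly rather than writing the two harmonic sums out and cancelling, but the computation and bookkeeping (including truncating the final sum at $s=y+1$ since $\binom{y}{s-1}=0$ beyond that) are identical to the paper's.
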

\begin{proof}
    We simplify the formula given in Lemma~\ref{lem:fi} for the values of $\alpha(\mG_{x,y},s)$ computed in Proposition~\ref{prop:alphamult}. Similarly to the proof of Corollary~\ref{cor:expconstr}, we obtain
    \begin{align*}
        &\E[\tau_F(\mG_{x,y})] = (3x+3y)H_{3x+3y}-\sum_{s=1}^{3x+3y-1}\frac{\alpha(\mG_{x,y},s)}{\binom{3x+3y-1}{s}} \\
        &= \sum_{s=1}^{3x+3y} \frac{3x+3y}{s} - \frac{y}{3x+3y-1}-\frac{2\left( xy+\binom{x}{2} \right)+y(3x+2y)+y(y-1)/2}{\binom{3x+3y-1}{2}} \\
        &\quad - \sum_{s=1}^{3x+3y-3}\frac{3x+3y}{s}+\sum_{s=3}^{x+2y}\prod_{i=0}^{s-1}\frac{x+2y-i}{3x+3y-i-1} +\sum_{s=3}^{y+1}\frac{2\binom{y}{s-1}x}{\binom{3x+3y-1}{s}}\\
        &= 1+\frac{3x+3y}{3x+3y-1}+\frac{3x+3y}{3x+3y-2}- \frac{y}{3x+3y-1} \\
        &\quad -\frac{2\left( xy+\binom{x}{2} \right)+y(3x+2y)+y(y-1)/2}{\binom{3x+3y-1}{2}}+\sum_{s=3}^{x+2y}\prod_{i=0}^{s-1}\frac{x+2y-i}{3x+3y-i-1}+\sum_{s=3}^{y+1}\frac{2\binom{y}{s-1}x}{\binom{3x+3y-1}{s}} \\
        &= 3+\frac{2}{3x+3y-2} - \frac{y-1}{3x+3y-1} -\frac{2\left( xy+\binom{x}{2} \right)+y(3x+2y)+y(y-1)/2}{\binom{3x+3y-1}{2}} \\
        &\quad +\sum_{s=3}^{x+2y}\prod_{i=0}^{s-1}\frac{x+2y-i}{3x+3y-i-1}+\sum_{s=3}^{y+1}\frac{2\binom{y}{s-1}x}{\binom{3x+3y-1}{s}}
        % &= 3+\frac{2}{3x+3y-2} - \frac{y-1}{3x+3y-1} -\frac{2\left( xy+\binom{x}{2} \right)+y(3x+2y)+y(y-1)/2}{\binom{3x+3y-1}{2}} \\
        % &\quad +\frac{\left(x+2y\right)\left(x+2y-1\right)}{\left(3x+3y-1\right)\left(3x+3y-2\right)}\sum_{s=3}^{x+2y}\prod_{i=2}^{s-1}\frac{x+2y-i}{3x+3y-i-1}+\sum_{s=3}^{y+1}\frac{2\binom{y}{s-1}x}{\binom{3x+3y-1}{s}} \\
        % &= 3+\frac{2}{3x+3y-2} - \frac{y-1}{3x+3y-1} -\frac{\left(x+2y\right)\left(x+2y-1\right)}{\left(3x+3y-1\right)\left(3x+3y-2\right)}\\
        % &\quad \cdot\left(\frac{4\left( xy+\binom{x}{2} \right)+2y(3x+2y)+y(y-1)}{\left(x+2y\right)\left(x+2y-1\right)}-\sum_{s=3}^{x+2y}\prod_{i=2}^{s-1}\frac{x+2y-i}{3x+3y-i-1}\right) \\
        % &\quad +\sum_{s=3}^{y+1}\frac{2\binom{y}{s-1}x}{\binom{3x+3y-1}{s}}.
    \end{align*}
    which proves the corollary.
\end{proof}

As in the previous subsection, the formula in Corollary~\ref{cor:expconstrm} is not easy to evaluate explicitly, as it depends on both $x$ and $y$ and how they relate to each other. Note that in the previous section, we treated the case where $x$ is arbitrary and $y=1$, and we provided an asymptotic upper bound for $\E[\tau_F(\mG_x)]$ as $x \to \infty$. Here, we will consider the case where $x = y$ and compute an asymptotic upper bound as $x \to \infty$. As we will discuss later in Remark~\ref{rem:optimalratio}, experimental results suggest that $x = y$ is not the optimal case for reducing the random access expectation of the multiset of points $\mG_{x,y}$.

\begin{theorem}
    We have $\lim_{x\to \infty}{\E[\tau_F(\mG_{x,x})]} \le 3-{53}/{150} \approx 0.88\overline{22}k$.
\end{theorem}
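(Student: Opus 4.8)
The plan is to set $y=x$ in the closed form of Corollary~\ref{cor:expconstrm} and take the limit of each summand as $x\to\infty$. Substituting $y=x$ turns that formula into
\[
\E[\tau_F(\mG_{x,x})] = 3 + \frac{2}{6x-2} - \frac{x-1}{6x-1} - \frac{17x^2-3x}{(6x-1)(6x-2)} + \Sigma_1(x) + \Sigma_2(x),
\]
where $\Sigma_1(x):=\sum_{s=3}^{3x}\prod_{i=0}^{s-1}\frac{3x-i}{6x-1-i}$ and $\Sigma_2(x):=2x\sum_{s=3}^{x+1}\binom{x}{s-1}\big/\binom{6x-1}{s}$. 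The first three ``rational'' terms are dispatched immediately: they converge to $0$, $-\tfrac16$ and $-\tfrac{17}{36}$ respectively. The real content is to show $\Sigma_1(x)\to\tfrac14$ and $\Sigma_2(x)\to\tfrac{8}{225}$, after which one checks that $3+0-\tfrac16-\tfrac{17}{36}+\tfrac14+\tfrac{8}{225}=3-\tfrac{53}{150}$, giving the theorem (in fact with equality).

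For $\Sigma_1$ I would rewrite the product as a ratio of binomials, $\prod_{i=0}^{s-1}\frac{3x-i}{6x-1-i}=\binom{3x}{s}/\binom{6x-1}{s}$, so that $\Sigma_1(x)=\sum_{s=3}^{3x}\binom{3x}{s}/\binom{6x-1}{s}$. Then I would use the identity $\sum_{s=0}^{m}\binom{m}{s}/\binom{n}{s}=\frac{n+1}{n-m+1}$, valid for $0\le m\le n$ (it follows from $\binom{m}{s}/\binom{n}{s}=\binom{n-s}{m-s}/\binom{n}{m}$ and the hockey-stick identity), applied with $m=3x$, $n=6x-1$; this gives $\sum_{s=0}^{3x}\binom{3x}{s}/\binom{6x-1}{s}=\tfrac{6x}{3x}=2$. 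Subtracting the $s=0,1,2$ terms yields the closed form $\Sigma_1(x)=1-\frac{3x}{6x-1}-\frac{3x(3x-1)}{(6x-1)(6x-2)}$, which tends to $1-\tfrac12-\tfrac14=\tfrac14$.

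For $\Sigma_2$ I would reindex with $t=s-1$, so $\Sigma_2(x)=2x\sum_{t=2}^{x}\binom{x}{t}/\binom{6x-1}{t+1}$, and pass to the limit term by term. Expanding the ratio,
\[
2x\,\frac{\binom{x}{t}}{\binom{6x-1}{t+1}}=\frac{2x(t+1)}{6x-1-t}\prod_{j=0}^{t-1}\frac{x-j}{6x-1-j},
\]
one notes that each factor $\frac{x-j}{6x-1-j}$ is decreasing in $j$, hence at most $\frac{x}{6x-1}\le\tfrac15$, and that $\frac{1}{6x-1-t}\le\frac{1}{5x-1}\le\frac{1}{4x}$ for $1\le t\le x$; therefore each summand is bounded by $\tfrac12(t+1)5^{-t}$ uniformly in $x$, and this dominating series is summable. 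Dominated convergence then lets me take $x\to\infty$ inside the sum, where the $t$-th summand tends to $\tfrac13(t+1)6^{-t}$; hence $\Sigma_2(x)\to\tfrac13\sum_{t\ge2}(t+1)6^{-t}=\tfrac13\big(\tfrac{36}{25}-1-\tfrac13\big)=\tfrac13\cdot\tfrac{8}{75}=\tfrac{8}{225}$. (Alternatively, using the Beta-integral identity $1/\binom{6x-1}{t+1}=6x\int_0^1 u^{t+1}(1-u)^{6x-2-t}\,du$, interchanging sum and integral, summing the binomial series to $(1-u)^{-x}$, and evaluating the three resulting Beta integrals, one obtains the exact expression $\Sigma_2(x)=\frac{12x}{5(5x-1)}-\frac{2x}{6x-1}-\frac{4x^2}{(6x-1)(6x-2)}$, with the same limit $\tfrac{8}{225}$.)

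The main obstacle is $\Sigma_2$: naive term-by-term estimates — for instance replacing $6x-1-t$ by $5x-1$ throughout the product — overshoot the target value $\tfrac{8}{225}$, so one genuinely needs either the dominated-convergence argument with the exact pointwise limits, or the integral-representation closed form. The only other point requiring care is the closing bookkeeping, i.e. verifying that the five limits combine to exactly $3-\tfrac{53}{150}=\tfrac{397}{150}$ (so that the ratio to $k=3$ is $\tfrac{397}{450}=0.88\overline{2}$).
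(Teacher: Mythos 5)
Your argument is correct and in fact proves slightly more than the theorem claims: you establish $\lim_{x\to\infty}\E[\tau_F(\mG_{x,x})] = 3 - 53/150$ with equality, whereas the paper only obtains the inequality. The setup (plug $y=x$ into Corollary~\ref{cor:expconstrm} and take each piece to its limit) is the same, but your treatment of both sums diverges meaningfully from the paper's.

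For the first sum, the paper bounds each factor by $\frac{3x-i}{6x-i-1}\le\frac12$ for $i\ge 1$ and sums the resulting geometric series, yielding an \emph{upper bound} $\frac{3x}{2(6x-1)}\bigl(1-(1/2)^{3x-2}\bigr)\to\frac14$. You instead recognize the product as $\binom{3x}{s}/\binom{6x-1}{s}$ and invoke the exact identity $\sum_{s=0}^{m}\binom{m}{s}/\binom{n}{s}=\frac{n+1}{n-m+1}$ (via $\binom{m}{s}/\binom{n}{s}=\binom{n-s}{m-s}/\binom{n}{m}$ and the hockey-stick formula) to get the closed form $\Sigma_1(x)=1-\frac{3x}{6x-1}-\frac{3x(3x-1)}{(6x-1)(6x-2)}\to\frac14$. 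This is cleaner, avoids the inequality, and is exact. For the second sum, the paper proves a closed form (their Claim~A) through a chain of binomial identities from \emph{Concrete Mathematics}, giving $\Sigma_2(x)=\frac{4x(x-1)(8x-1)}{5(6x-1)(6x-2)(5x-1)}\to\frac{8}{225}$. You instead reindex, establish a uniform bound $\tfrac12(t+1)5^{-t}$ on the summands, and apply dominated convergence to pass to the termwise limits $\tfrac13(t+1)6^{-t}$; the resulting geometric series sums to the same $\frac{8}{225}$. Your alternative Beta-integral closed form $\frac{12x}{5(5x-1)}-\frac{2x}{6x-1}-\frac{4x^2}{(6x-1)(6x-2)}$ algebraically agrees with the paper's Claim~A (both reduce to $\frac{4x(x-1)(8x-1)}{5(6x-1)(6x-2)(5x-1)}$), so the two routes are consistent. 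In short: your proof is more elementary where the paper is computational (dominated convergence vs.\ manipulating hypergeometric-style identities) and more exact where the paper is crude (identity vs.\ geometric-series upper bound), and it upgrades $\le$ to $=$. The bookkeeping $3-\tfrac16-\tfrac{17}{36}+\tfrac14+\tfrac{8}{225}=3-\tfrac{53}{150}$ checks out.
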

\begin{proof}
For $x=y$ the expression in Corollary~\ref{cor:expconstrm} reads and simplifies as follows.
\begin{align*}
        \E[\tau_F(\mG_{x,x})] &= 3+\frac{2}{6x-2} - \frac{x-1}{6x-1} -\frac{2\left( x^2+\binom{x}{2} \right)+5x^2+x(x-1)/2}{\binom{6x-1}{2}} \\
        &\quad +\sum_{s=3}^{3x}\prod_{i=0}^{s-1}\frac{3x-i}{6x-i-1}+\sum_{s=3}^{x+1}\frac{2\binom{x}{s-1}x}{\binom{6x-1}{s}} \\
         &= 3+\frac{2}{6x-2} - \frac{x-1}{6x-1} -\frac{17x^2-3x}{(6x-1)(6x-2)} \\
         &\quad +\frac{3x\left(3x-1\right)}{\left(6x-1\right)\left(6x-2\right)}\sum_{s=3}^{3x}\prod_{i=2}^{s-1}\frac{3x-i}{6x-i-1}+\sum_{s=3}^{x+1}\frac{2\binom{x}{s-1}x}{\binom{6x-1}{s}} \\
\end{align*}
We have
\begin{align*}
    \frac{3x-i}{6x-i-1} \le \frac{1}{2}
\end{align*}
for all $i \ge 1$, and so we obtain
\begin{align*}
    &\E[\tau_F(\mG_{x,x})]\le 3+\frac{2}{6x-2} - \frac{x-1}{6x-1} -\frac{17x^2-3x}{(6x-1)(6x-2)} +\frac{3x}{2\left(6x-1\right)}\sum_{s=3}^{3x}\prod_{i=2}^{s-1}\frac{1}{2}+\sum_{s=3}^{x+1}\frac{2\binom{x}{s-1}x}{\binom{6x-1}{s}} \\
    &= 3+\frac{2}{6x-2}-\frac{x-1}{6x-1} -\frac{17x^2-3x}{(6x-1)(6x-2)} +\frac{3x}{2\left(6x-1\right)}\sum_{s=1}^{3x-2}\left(\frac{1}{2}\right)^s+\sum_{s=3}^{x+1}\frac{2\binom{x}{s-1}x}{\binom{6x-1}{s}} \\
    &= 3+\frac{2}{6x-2}-\frac{x-1}{6x-1} -\frac{17x^2-3x}{(6x-1)(6x-2)} +\frac{3x}{2\left(6x-1\right)}\left(\sum_{s=0}^{3x-2}\left(\frac{1}{2}\right)^s-1\right)+\sum_{s=3}^{x+1}\frac{2\binom{x}{s-1}x}{\binom{6x-1}{s}} \\
    %&= 3+\frac{2}{6x-2}-\frac{x-1}{6x-1} -\frac{17x^2-3x}{(6x-1)(6x-2)} +\frac{3x}{2\left(6x-1\right)}\left(2-\left(\frac{1}{2}\right)^{3x-1}-1\right)+\sum_{s=3}^{x+1}\frac{2\binom{x}{s-1}x}{\binom{6x-1}{s}} \\
    &= 3+\frac{2}{6x-2}-\frac{x-1}{6x-1} -\frac{17x^2-3x}{(6x-1)(6x-2)} +\frac{3x}{2\left(6x-1\right)}\left(1-\left(\frac{1}{2}\right)^{3x}\right)+\sum_{s=3}^{x+1}\frac{2\binom{x}{s-1}x}{\binom{6x-1}{s}} \\
\end{align*}
In order to give an asymptotic estimate for $x \to \infty$, we need the following claim.
\begin{claim} \label{cl:maria}
    We have
    \begin{align*}
        \sum_{s=3}^{x+1}\frac{2\binom{x}{s-1}x}{\binom{6x-1}{s}} = \frac{2x(x-1)(x+1)x(16x-2)}{(x+1)(6x-1)(6x-2)5x(5x-1)}.
    \end{align*}
\end{claim}
\begin{clproof}
Recall that $\binom{x}{s-1}=\frac{s}{x+1}\binom{x+1}{s}$; see e.g. \cite[Table 174, Equation 3]{concretemath}. Using this we get $$2x \sum_{s=3}^{x+1}\frac{\binom{x}{s-1}}{\binom{6x-1}{s}}=\frac{2x}{x+1} \sum_{s=3}^{x+1}s\frac{\binom{x+1}{s}}{\binom{6x-1}{s}}.$$
From \cite[page 173]{concretemath}, we then see that $\frac{\binom{x+1}{s}}{\binom{6x-1}{s}}=\frac{\binom{6x-1-s}{x+1-s}}{\binom{6x-1}{x+1}}$ and hence
\begin{align*}
    \frac{2x}{x+1} \sum_{s=3}^{x+1}s\frac{\binom{x+1}{s}}{\binom{6x-1}{s}}&=\frac{2x}{x+1}\sum_{s=3}^{x+1}s\frac{\binom{6x-1-s}{x+1-s}}{\binom{6x-1}{x+1}}=\frac{2x}{x+1} \binom{6x-1}{x+1}^{-1} \sum_{s=3}^{x+1}s \binom{6x-1-s}{x+1-s} \\
    &=\frac{2x}{x+1}\binom{6x-1}{x+1}^{-1} \sum_{T=0}^{x-2}(x+1-T) \binom{5x-2+T}{T} \\
    &=\frac{2x}{x+1} \binom{6x-1}{x+1}^{-1} \bigg((x+1)\sum_{T=0}^{x-2}\binom{5x-2+T}{T}-\sum_{T=0}^{x-2}T \binom{5x-2+T}{T}\bigg).
\end{align*}
From \cite[Table 174, Equation 8]{concretemath} we have that $(x+1) \sum_{T=0}^{x-2} \binom{5x-2+T}{T}=(x+1) \binom{6x-3}{x-2}=(x+1)(x-1)\frac{\binom{6x-3}{x-1}}{5x-1}$, and hence
\begin{align*}
    &\frac{2x}{x+1} \binom{6x-1}{x+1}^{-1} \bigg((x+1)\sum_{T=0}^{x-2}\binom{5x-2+T}{T}-\sum_{T=0}^{x-2}T \binom{5x-2+T}{T}\bigg) \\
    &=\frac{2x}{x+1}  \binom{6x-1}{x+1}^{-1} \bigg(\frac{(x+1)(x-1)}{5x-1}\binom{6x-3}{x-1}-\sum_{T=0}^{x-2}T \binom{5x-2+T}{T}\bigg).
\end{align*}
Finally using that $$\sum_{T=0}^{x-2}T \binom{5x-2+T}{T}=\frac{(x-2)(x-1)}{5x}\binom{6x-3}{x-1}$$ we get
\begin{align*}
    \frac{2x}{x+1} &\binom{6x-1}{x+1}^{-1} \bigg(\frac{(x+1)(x-1)}{5x-1}\binom{6x-3}{x-1}-\frac{(x-2)(x-1)}{5x}\binom{6x-3}{x-1}\bigg) \\
    &=\frac{2x(x-1)}{x+1}\binom{6x-1}{x+1}^{-1} \binom{6x-3}{x-1} \bigg(\frac{x+1}{5x-1} -\frac{x-2}{5x}\bigg) \\
    &= \frac{2x(x-1)}{x+1} \cdot \frac{(x+1)!(5x-2)!}{(6x-1)!} \cdot \frac{(6x-3)!}{(x-1)!(5x-2)!} \cdot \frac{5x^2+5x-5x^2+10x+x-2}{5x(5x-1)} \\
    &= \frac{2x(x-1)}{x+1} \cdot \frac{(x+1)x}{(6x-1)(6x-2)} \cdot \frac{16x-2}{5x(5x-1)}
\end{align*}
which proves the claim.
\end{clproof}
From Claim~\ref{cl:maria} we obtain that
\begin{align*}
    \sum_{s=3}^{x+1}\frac{2\binom{x}{s-1}x}{\binom{6x-1}{s}} \sim \frac{2 \cdot 16}{6^2 \cdot 5^2}=\frac{8}{225} \quad \textnormal{as $x \to \infty$.}
\end{align*}
From the latter asymptotic estimate, we finally have
\begin{align*}
    3+\frac{2}{6x-2}-\frac{x-1}{6x-1} -\frac{17x^2-3x}{(6x-1)(6x-2)} +\frac{3x}{2\left(6x-1\right)}\left(1-\left(\frac{1}{2}\right)^{3x}\right)+\sum_{s=3}^{x+1}\frac{2\binom{x}{s-1}x}{\binom{6x-1}{s}} \\
    \sim 3-1/6-17/36+1/4+8/225  = 3-53/150,
\end{align*}
proving the asymptotic upper bound of the theorem.
\end{proof}

Note that, to the best of our knowledge, the point set $\mG_{x,y}$ outperforms all previously known constructions of codes in terms of achieving a low random access expectation. The best-known construction prior to this was given in~\cite[Theorem 12]{bar2023cover}, also for 3-dimensional codes, achieving an expectation of approximately $0.89k$.

\begin{remark} \label{rem:optimalratio}
In this paper we investigate the asymptotic estimate of the formula in Corollary~\ref{cor:expconstrm} for the case where $x=y$. As a follow-up to this work, in~\cite{boruchovsky2025making} the authors provide an asymptotic estimate for ${\E[\tau_F(\mG_{x,y})]}$ following the same steps as we took, for the case $x \ne y$, showing that when $y=0.834x$ one obtains $\lim_{x \to \infty}{\E[\tau_F(\mG_{x,y})]} \le  0.881\overline{66} \cdot 3$; see~\cite[Corollary 2]{boruchovsky2025making}. We include a plot in Figure~\ref{fig:exp} that illustrates the asymptotics for $\E[\tau_F(\mG_{x,y})]$ for various values of $x$ and different ratios~$y/x$. 
\end{remark}

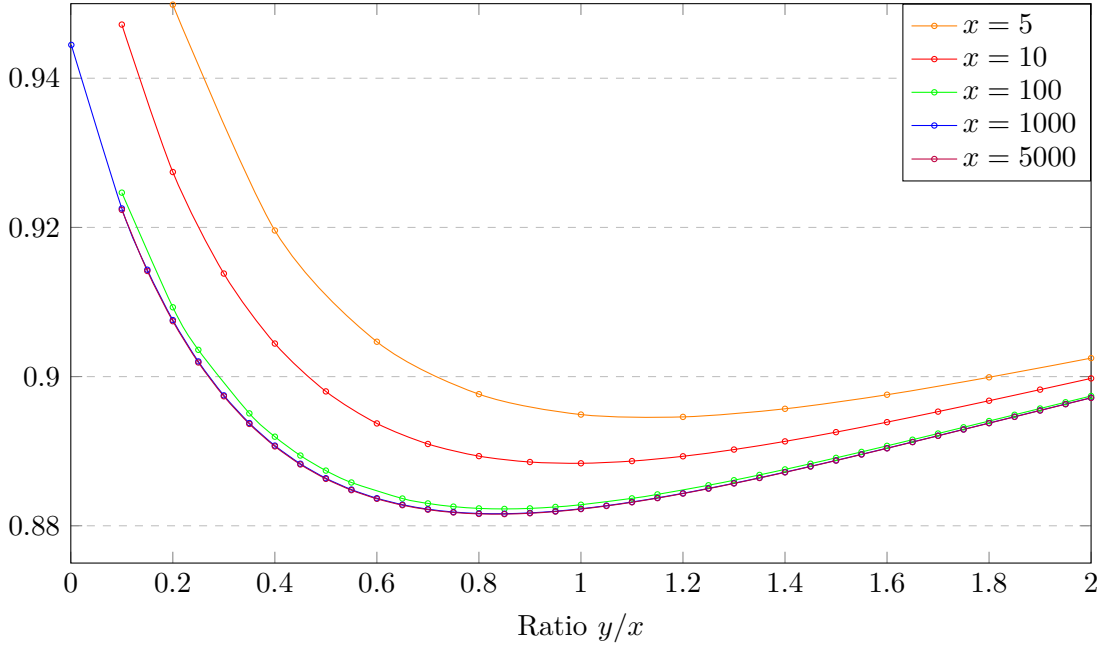
\begin{figure}[ht!]
\centering
\begin{tikzpicture}[scale=1]
\begin{axis}[legend style={at={(1,1)}, legend style={cells={align=left}}, anchor = north east, /tikz/column 2/.style={
                column sep=5pt}},
		legend cell align={left},
		width=15cm,height=9cm,
    xlabel={Ratio $y/x$},
    xmin=0, xmax=2,
    ymin=0.875, ymax=0.95,
    xtick={0,0.2,0.4,0.6,0.8,1,1.2,1.4,1.6,1.8,2},
    ytick={0.88,0.90,0.92,0.94,0.96},
    ymajorgrids=true,
    grid style=dashed,
    every axis plot/.append style={},  yticklabel style={/pgf/number format/fixed}
]
\addplot+[color=orange,mark=o,mark size=1pt,smooth]
coordinates {
(0.200000000000000000000000000000,0.949866310160427807486631016043)
(0.400000000000000000000000000000,0.919590643274853801169590643275)
(0.600000000000000000000000000000,0.904675035109817718513370687284)
(0.800000000000000000000000000000,0.897645832428441124093298006342)
(1.00000000000000000000000000000,0.894909688013136288998357963875)
(1.20000000000000000000000000000,0.894598382501608308059920963146)
(1.40000000000000000000000000000,0.895682410895392639814546508258)
(1.60000000000000000000000000000,0.897569046125922187042730336449)
(1.80000000000000000000000000000,0.899905667716964250982222740888)
(2.00000000000000000000000000000,0.902477481478849467905555456855)
};
\addplot+[color=red,mark=o,mark size=1pt,smooth]
coordinates {
(0.100000000000000000000000000000,0.947196620583717357910906298003)
(0.200000000000000000000000000000,0.927425515660809778456837280367)
(0.300000000000000000000000000000,0.913814108322117475435553238757)
(0.400000000000000000000000000000,0.904431067781517075483699360464)
(0.500000000000000000000000000000,0.898015937550821271751504309644)
(0.600000000000000000000000000000,0.893721983137835668238863112625)
(0.700000000000000000000000000000,0.890966299604173358326182246448)
(0.800000000000000000000000000000,0.889339126828872522063662342579)
(0.900000000000000000000000000000,0.888547195040225120929449616170)
(1.00000000000000000000000000000,0.888377326383504892313442033739)
(1.10000000000000000000000000000,0.888672468539952286491621421786)
(1.20000000000000000000000000000,0.889315567381605117454174057948)
(1.30000000000000000000000000000,0.890218502597782668019718063617)
(1.40000000000000000000000000000,0.891314364246759515135007761956)
(1.50000000000000000000000000000,0.892551976833379201616954112891)
(1.60000000000000000000000000000,0.893891961928923714667821099951)
(1.70000000000000000000000000000,0.895303870778554322858120326475)
(1.80000000000000000000000000000,0.896764071826323400012250407413)
(1.90000000000000000000000000000,0.898254177916060023988235189411)
(2.00000000000000000000000000000,0.899759863975789720782392985617)
    };
\addplot+[color=green,mark=o,mark size=1pt,smooth]
coordinates {
(0.100000000000000000000000000000,0.924662965055392090447474016717)
(0.200000000000000000000000000000,0.909294809352309276101483442153)
(0.250000000000000000000000000000,0.903589069518354035497888053390)
(0.350000000000000000000000000000,0.895067887676412469163267754518)
(0.400000000000000000000000000000,0.891943830133683107564022406465)
(0.450000000000000000000000000000,0.889418045086142244759358477901)
(0.500000000000000000000000000000,0.887397223571177386673811477469)
(0.550000000000000000000000000000,0.885804067184606280513799987061)
(0.650000000000000000000000000000,0.883653744943505243510026295446)
(0.700000000000000000000000000000,0.882997404885714031473655992086)
(0.750000000000000000000000000000,0.882566936486409509541659742082)
(0.800000000000000000000000000000,0.882329924784760282883889850516)
(0.850000000000000000000000000000,0.882258795232358792385530800990)
(0.900000000000000000000000000000,0.882330011880917930272365871855)
(0.950000000000000000000000000000,0.882523423172025141428860491183)
(1.00000000000000000000000000000,0.882821725400837315601008625440)
(1.10000000000000000000000000000,0.883675450161524799600782947538)
(1.15000000000000000000000000000,0.884206891002121765558062328976)
(1.25000000000000000000000000000,0.885430505231311325423871982887)
(1.30000000000000000000000000000,0.886107018621279261911117988345)
(1.35000000000000000000000000000,0.886817892604563725878804357468)
(1.40000000000000000000000000000,0.887557587880793346382637629218)
(1.45000000000000000000000000000,0.888321265232483452778476324861)
(1.50000000000000000000000000000,0.889104692617235512164847386368)
(1.55000000000000000000000000000,0.889904165735022149942507719459)
(1.60000000000000000000000000000,0.890716439928108397428091373378)
(1.65000000000000000000000000000,0.891538671642565068966837797725)
(1.70000000000000000000000000000,0.892368367982036654044961092696)
(1.75000000000000000000000000000,0.893203343130493105969830252032)
(1.80000000000000000000000000000,0.894041680622152998886165773979)
(1.85000000000000000000000000000,0.894881700602309402911047530287)
(1.90000000000000000000000000000,0.895721931359303675722971379841)
(1.95000000000000000000000000000,0.896561084520850360933832090646)
(2.00000000000000000000000000000,0.897398033401689536754499546955)
};

\addplot+[color=blue,mark=o,mark size=1pt,smooth]
coordinates {
(0.00100000000000000000000000000000,0.944472220667453646295428571967)
(0.100000000000000000000000000000,0.922542239551125014950191383216)
(0.150000000000000000000000000000,0.914328819363669524673741422242)
(0.200000000000000000000000000000,0.907577754533462381230089042838)
(0.250000000000000000000000000000,0.902031584746542407341423024270)
(0.300000000000000000000000000000,0.897484065526387137030815782438)
(0.350000000000000000000000000000,0.893768659399766322834821693178)
(0.400000000000000000000000000000,0.890749906383015167327390016438)
(0.450000000000000000000000000000,0.888316883287281092172020906988)
(0.500000000000000000000000000000,0.886378196599247205115597747625)
(0.550000000000000000000000000000,0.884858113817845180019664503954)
(0.600000000000000000000000000000,0.883693548655154242786464764187)
(0.650000000000000000000000000000,0.882831692803860558464147532640)
(0.700000000000000000000000000000,0.882228141686439076160968498651)
(0.750000000000000000000000000000,0.881845400770240547019317405723)
(0.800000000000000000000000000000,0.881651687371938809376741872972)
(0.850000000000000000000000000000,0.881619963582760707858832950520)
(0.900000000000000000000000000000,0.881727151218989857831114430527)
(0.950000000000000000000000000000,0.881953491065833819659411666221)
(1.00000000000000000000000000000,0.882282017207608745870463026631)
(1.05000000000000000000000000000,0.882698123682460428803014442677)
(1.10000000000000000000000000000,0.883189205608825130076891639856)
(1.15000000000000000000000000000,0.883744360695882533900119626810)
(1.20000000000000000000000000000,0.884354139957093658283918467118)
(1.25000000000000000000000000000,0.885010338704304829875712184243)
(1.30000000000000000000000000000,0.885705820664996110704013628724)
(1.35000000000000000000000000000,0.886434369452668398875000469000)
(1.40000000000000000000000000000,0.887190562716833019288586894247)
(1.45000000000000000000000000000,0.887969665170092023487308799993)
(1.50000000000000000000000000000,0.888767537385152807194195597005)
(1.55000000000000000000000000000,0.889580557812360428087561333956)
(1.60000000000000000000000000000,0.890405555917713532751285379757)
(1.65000000000000000000000000000,0.891239754704974680115344818180)
(1.70000000000000000000000000000,0.892080721180972233722817622177)
(1.75000000000000000000000000000,0.892926323564249676932278888382)
(1.80000000000000000000000000000,0.893774694234625972432594702025)
(1.85000000000000000000000000000,0.894624197583485044819105838042)
(1.90000000000000000000000000000,0.895473402058443979695061677639)
(1.95000000000000000000000000000,0.896321055806811660081787092035)
(2.00000000000000000000000000000,0.897166065414218297711721533536)
    };
\addplot+[color=purple,mark=o,mark size=1pt,smooth]
coordinates {
(0.100000000000000000000000000000,0.922354827984069851995927885282)
(0.150000000000000000000000000000,0.914160613322772346319043554848)
(0.200000000000000000000000000000,0.907425926571252907208096455525)
(0.250000000000000000000000000000,0.901893831292527958258737908174)
(0.300000000000000000000000000000,0.897358492490530533179551923850)
(0.350000000000000000000000000000,0.893653696127356710604300471598)
(0.400000000000000000000000000000,0.890644240176411064300708147979)
(0.450000000000000000000000000000,0.888219409038666362483330050504)
(0.500000000000000000000000000000,0.886287977657594260455298996379)
(0.550000000000000000000000000000,0.884774351299934986022450174584)
(0.600000000000000000000000000000,0.883615557168604327894037454955)
(0.650000000000000000000000000000,0.882758881085772746776203281010)
(0.700000000000000000000000000000,0.882159997044285953011215203689)
(0.750000000000000000000000000000,0.881781476488523808579456341350)
(0.800000000000000000000000000000,0.881591592450455314682600824827)
(0.850000000000000000000000000000,0.881563354321557884398665791476)
(0.900000000000000000000000000000,0.881673724276213429170251480687)
(0.950000000000000000000000000000,0.881902977698127155496080370334)
(1.00000000000000000000000000000,0.882234178465917649943368109012)
(1.05000000000000000000000000000,0.882652746384321510666587408522)
(1.10000000000000000000000000000,0.883146098945287697696152828165)
(1.15000000000000000000000000000,0.883703353359911764942513693346)
(1.20000000000000000000000000000,0.884315077702665906617341684191)
(1.25000000000000000000000000000,0.884973082262942850168239630864)
(1.30000000000000000000000000000,0.885670243960327141712043130074)
(1.35000000000000000000000000000,0.886400358064565549574054714904)
(1.40000000000000000000000000000,0.887158012555458297927746267988)
(1.45000000000000000000000000000,0.887938481327183082938681367356)
(1.50000000000000000000000000000,0.888737633135556704550155931138)
(1.55000000000000000000000000000,0.889551853743402652687788244083)
(1.60000000000000000000000000000,0.890377979167720965532017873917)
(1.65000000000000000000000000000,0.891213238295319696544382494930)
(1.70000000000000000000000000000,0.892055203428507060654967691069)
(1.75000000000000000000000000000,0.892901747563061758955769590201)
(1.80000000000000000000000000000,0.893751007397750630475352619412)
(1.85000000000000000000000000000,0.894601351236627959930705608860)
(1.90000000000000000000000000000,0.895451351078946348989601137467)
(1.95000000000000000000000000000,0.896299758302078001988547542607)
(2.00000000000000000000000000000,0.897145482434655238504496155934)

    };
\legend{$x=5$, $x=10$,$x=100$,$x=1000$, $x=5000$ 
}
\end{axis}
\end{tikzpicture}
\caption{\label{fig:exp} Normalized (by $k=3$) random access expectation $\E[\tau_F(\mG_{x,y})]$ from the formula in Corollary~\ref{cor:expconstrm} for various $x$ and multiplicities of the fundamental points $y$.}
%\vspace{-0.3cm}
\end{figure}

\section{On a rate $1/2$ construction for any dimension} \label{sec:conj}

The main goal of this section is to prove a conjecture stated in~\cite{bar2023cover} concerning a class of codes of rate $1/2$ that perform well in terms of the random access problem. We start by providing the needed definitions, preliminary results and the said conjecture.

\begin{theorem}[\textnormal{\cite[Theorem 1]{gruica2024reducing}}] \label{thm:sumra}
For a multiset of points $\mG=\{P_1,\dots,P_n\} \subseteq \mathrm{PG}(k-1,q)$ of rank $k$ we have 
$$\sum_{P \in \mG}\E\left[ {\tau}_P(\mG)\right] = kn.$$
\end{theorem}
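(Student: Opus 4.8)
The plan is to avoid Lemma~\ref{lem:fi} and argue directly and probabilistically on one coupled process. Fix an i.i.d.\ sequence $X_1,X_2,\dots$ of indices uniform in $[n]$, and for $t\ge 0$ let $T_t=\{X_1,\dots,X_t\}\subseteq[n]$ be the \emph{set} of distinct indices drawn so far (so $T_0=\emptyset$), let $V_t=\langle P_j:j\in T_t\rangle$, and let $\rho_t=\dim V_t$. For any $T\subseteq[n]$ write $c(T)=|\{i\in[n]:P_i\in\langle P_j:j\in T\rangle\}|$. On this common space all the variables $\tau_{P_i}(\mG)$ are realised at once, with $\tau_{P_i}(\mG)=\min\{t\ge 1:P_i\in V_t\}$ (finite a.s., since eventually index $i$ is drawn); being $\{1,2,\dots\}$-valued, $\tau_{P_i}(\mG)=\sum_{t\ge 1}\mathbb{1}[\tau_{P_i}(\mG)\ge t]=\sum_{t\ge 1}\mathbb{1}[P_i\notin V_{t-1}]$. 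Summing over $i$ and reindexing gives the key identity
\begin{align*}
\sum_{i=1}^n\tau_{P_i}(\mG)=\sum_{t\ge 0}\bigl(n-c(T_t)\bigr).
\end{align*}

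Next I would take expectations (all summands are nonnegative, so the interchange with $\sum_{t\ge 0}$ is legitimate) and decompose according to the rank $\rho_t$. The point is that $n-c(T_t)=|\{i:P_i\notin V_t\}|$ vanishes as soon as $\rho_t=k$, because then $V_t=\F_q^k$; hence only levels $j\in\{0,\dots,k-1\}$ contribute. For fixed $t$, conditioning on $\mathcal F_t=\sigma(X_1,\dots,X_t)$ makes $T_t$, $V_t$, $c(T_t)$ and $\rho_t$ known, while $X_{t+1}$ is uniform and independent; the rank strictly increases at step $t+1$ precisely when $P_{X_{t+1}}\notin V_t$, an event of conditional probability $(n-c(T_t))/n$. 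Thus $n-c(T_t)=n\Pr[\rho_{t+1}=\rho_t+1\mid\mathcal F_t]$, and since $\mathbb{1}[\rho_t=j]$ is $\mathcal F_t$-measurable,
\begin{align*}
\E\bigl[(n-c(T_t))\,\mathbb{1}[\rho_t=j]\bigr]=n\,\Pr[\rho_t=j,\ \rho_{t+1}=j+1].
\end{align*}

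Finally I would use that $\rho_t$ is nondecreasing, jumps by at most $1$ per step, starts at $0$, and reaches $k$ almost surely (since $V_t$ eventually equals $\langle P_1,\dots,P_n\rangle$, which has dimension $k$). Hence, for each $j\le k-1$, the events $\{\rho_t=j,\ \rho_{t+1}=j+1\}$, $t\ge 0$, are pairwise disjoint and their union has probability $1$ (the chain of ranks makes the transition $j\to j+1$ exactly once). Summing the last display over $t$ gives $\sum_{t\ge 0}\E[(n-c(T_t))\mathbb{1}[\rho_t=j]]=n$ for every $j\in\{0,\dots,k-1\}$, and adding these $k$ contributions yields $\sum_{i=1}^n\E[\tau_{P_i}(\mG)]=kn$, which is the assertion (the sum over $P\in\mG$ being taken with multiplicity). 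The only real subtlety, and the step I would be most careful with, is exactly this last bookkeeping: that ranks move in unit steps so each of the $k$ transitions is traversed once and with full probability, together with checking $n-c(T_t)=0$ precisely on $\{\rho_t=k\}$ and justifying the swaps of expectation and countable sums; the remaining manipulations are routine. (Equivalently one could phrase the middle step via sojourn times: conditionally on entering rank level $j$, the number of further draws before leaving it is geometric with parameter $(n-c_j)/n$, so the weight $n-c_j$ cancels the mean $n/(n-c_j)$ and each level contributes exactly $n$.)
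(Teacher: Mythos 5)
The paper states this theorem with a citation to \cite{gruica2024reducing} and does not include its own proof, so there is no in-paper argument to compare against; I can only assess your proof on its merits, and it is correct. Your main steps all check out: the tail-sum identity $\tau_{P_i}=\sum_{t\ge 1}\mathbb{1}[P_i\notin V_{t-1}]$ together with Tonelli gives $\sum_i\tau_{P_i}=\sum_{t\ge 0}(n-c(T_t))$; the reinterpretation $n-c(T_t)=n\,\Pr[\rho_{t+1}=\rho_t+1\mid\mathcal F_t]$ is exactly right because the conditional probability that $X_{t+1}$ lands outside $V_t$ is $(n-c(T_t))/n$; and since $\rho_t$ starts at $0$, increases by at most $1$ per step, is monotone, and hits $k$ almost surely (rank-$k$ hypothesis plus the fact that all $n$ indices are eventually drawn), each transition $j\to j+1$ occurs exactly once, so $\sum_{t\ge 0}\Pr[\rho_t=j,\rho_{t+1}=j+1]=1$ and each of the $k$ levels contributes $n$. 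One small point worth making explicit, which you implicitly use: once $\rho_t=k$, we have $V_t=\F_q^k$ and hence $c(T_t)=n$, so the level-$k$ contribution really does vanish and the decomposition over $j\in\{0,\dots,k-1\}$ is complete. Your argument differs in flavor from the route the present paper takes elsewhere (which works through Lemma~\ref{lem:fi} and the quantities $\alpha_P(\mG,s)$); a proof via that lemma would require summing $\alpha_P(\mG,s)$ over $P\in\mG$ and invoking a matroid-style double-counting identity, whereas your sojourn-time argument is self-contained, avoids the $\alpha$ bookkeeping entirely, and makes transparent why each unit of rank costs exactly $n$ in aggregate.
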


Inspired by Theorem~\ref{thm:sumra}, in order to construct ``good'' codes for the random access problem, one could try to construct sets of points $\mG$ for which the fundamental points are balanced, and the non-fundamental points are balanced, and for which ${\alpha}_F(\mG,s) \ge  {\alpha}_N(\mG,s)$ for all $1 \le s\le n-1$ and ${\alpha}_F(\mG,s) >  {\alpha}_N(\mG,s)$ where the inequality is strict for some values of $s$. Then, since on average, recovering any of the $n$ points is $k$, and it is ``easier'' to recover fundamental points than non-fundamental points, clearly the random access expectation for fundamental points has to be smaller than $k$. This idea is formalized and proved in the following lemma, which is one of the main tools for proving the conjecture.

\begin{lemma} \label{lem:delta}
Let $\mG=\{P_1,\dots,P_n\}\subseteq \mathrm{PG}(k-1,q)$ be a set of distinct points in $\F_q^k$ of rank~$k$ that contains all fundamental points from $\mathrm{PG}(k-1,q)$, and that is balanced within the fundamental part, and within the non-fundamental part. Let $d$ denote the minimum distance of the code arising from the point set $\mG$. Let
\begin{align*}
    \delta := \displaystyle\sum_{s=1}^{n-d}\frac{\alpha_F(\mG,s)-\alpha_N(\mG,s)}{\binom{n-1}{s}} 
\end{align*}
then $\E[\tau_F(\mG)] = k-\frac{(n-k)}{n}\delta$.
\end{lemma}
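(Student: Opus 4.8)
The plan is to combine the two main ingredients already available to us: the exact formula for the expectation from Lemma~\ref{lem:fi}, and the ``conservation law'' from Theorem~\ref{thm:sumra}. Since $\mathcal{G}$ is balanced within the fundamental part and within the non-fundamental part, there are exactly $k$ points with expectation $\E[\tau_F(\mathcal{G})]$ and exactly $n-k$ points with expectation $\E[\tau_N(\mathcal{G})]$, so Theorem~\ref{thm:sumra} gives the linear relation $k\,\E[\tau_F(\mathcal{G})] + (n-k)\,\E[\tau_N(\mathcal{G})] = kn$. The strategy is to write $\E[\tau_N(\mathcal{G})]$ in terms of $\E[\tau_F(\mathcal{G})]$ and $\delta$ using Lemma~\ref{lem:fi}, substitute into this relation, and solve for $\E[\tau_F(\mathcal{G})]$.

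Concretely, first I would apply Lemma~\ref{lem:fi} to a fundamental point and to a non-fundamental point and subtract, obtaining
\begin{align*}
\E[\tau_N(\mathcal{G})] - \E[\tau_F(\mathcal{G})] = \sum_{s=1}^{n-1}\frac{\alpha_F(\mathcal{G},s)-\alpha_N(\mathcal{G},s)}{\binom{n-1}{s}},
\end{align*}
since the $nH_n$ terms cancel. The next step is to observe that the summation index effectively truncates at $n-d$: for every $s$ with $n-d < s \le n-1$, any $s$-subset of the $n$ points of $\mathcal{G}$ has complement of size $< d$, hence spans all of $\mathbb{F}_q^k$ (a code of minimum distance $d$ has no nonzero codeword supported on fewer than $d$ coordinates, equivalently any $n-d+1$ columns of the generator matrix span the whole space), so $\alpha_P(\mathcal{G},s) = \binom{n-1}{s}$ for \emph{every} $P \in \mathcal{G}$; in particular the difference $\alpha_F(\mathcal{G},s)-\alpha_N(\mathcal{G},s)$ vanishes there. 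Therefore $\E[\tau_N(\mathcal{G})] - \E[\tau_F(\mathcal{G})] = \delta$.

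Finally I would substitute $\E[\tau_N(\mathcal{G})] = \E[\tau_F(\mathcal{G})] + \delta$ into $k\,\E[\tau_F(\mathcal{G})] + (n-k)\,\E[\tau_N(\mathcal{G})] = kn$ to get $n\,\E[\tau_F(\mathcal{G})] + (n-k)\delta = kn$, i.e. $\E[\tau_F(\mathcal{G})] = k - \frac{n-k}{n}\delta$, as claimed. The only genuinely non-routine step is the truncation argument, which is really just the standard fact relating minimum distance of a code to the span of subsets of columns of its generator matrix (the $(n-d+1)$-covering property); everything else is bookkeeping with the two cited results. One should be slightly careful that the hypothesis ``contains all fundamental points'' guarantees the fundamental/non-fundamental partition of $\mathcal{G}$ has sizes exactly $k$ and $n-k$, which is what makes the count in Theorem~\ref{thm:sumra} split cleanly.
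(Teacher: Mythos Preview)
Your proof is correct and follows essentially the same route as the paper's: both derive $\E[\tau_N(\mathcal{G})] - \E[\tau_F(\mathcal{G})] = \delta$ from Lemma~\ref{lem:fi} together with the minimum-distance truncation, and then feed this into the averaging identity of Theorem~\ref{thm:sumra} to solve for $\E[\tau_F(\mathcal{G})]$. One cosmetic slip: for $s>n-d$ the common value of $\alpha_P(\mathcal{G},s)$ is $\binom{n}{s}$, not $\binom{n-1}{s}$, but this is immaterial since only the vanishing of the difference is used.
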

\begin{proof}
From Lemma~\ref{lem:fi} it follows that we have
\begin{align*}
    \E[\tau_F(\mG)] + \delta &= nH_n-\sum_{s=1}^{n-1}\frac{\alpha_F(\mG,s)}{\binom{n-1}{s}} +\displaystyle\sum_{s=1}^{n-d}\frac{\alpha_F(\mG,s)-\alpha_N(\mG,s)}{\binom{n-1}{s}} \\
    &= nH_n-\sum_{s=1}^{n-1}\frac{\alpha_F(\mG,s)}{\binom{n-1}{s}} +\displaystyle\sum_{s=1}^{n-1}\frac{\alpha_F(\mG,s)-\alpha_N(\mG,s)}{\binom{n-1}{s}} \\
    &= nH_n - \sum_{s=1}^{n-1}\frac{\alpha_N(\mG,s)}{\binom{n-1}{s}} \\
    &= \E[\tau_N(\mG)]
\end{align*}
where in the first equality we used the fact that $\alpha_F(\mG,s)=\alpha_N(\mG,s)$ for $n-d+1 \le s \le n-1$, which holds because a code of minimum distance $d$ can correct $d-1$ erasures, and so any $n-d+1$ or more points in $\mG$ will recover the remaining points.
Combining this with the statement of Theorem~\ref{thm:sumra} we obtain
\begin{align*}
    kn &= \sum_{P \in \mG}\E\left[ {\tau}_P(\mG)\right] 
 \\
 &= \sum_{\substack{P \in \mG \\ \textnormal{$P$ is fundamental}}}\E\left[ {\tau}_F(\mG)\right] + \sum_{\substack{P \in \mG \\ \textnormal{$P$ is non-fundamental}}}\E\left[ {\tau}_N(\mG)\right] \\
    &= k\E\left[ {\tau}_F(\mG)\right] + (n-k) \E\left[ {\tau}_N(\mG)\right] \\
    &= k\E[\tau_F(\mG)]+(n-k)\left(\E[\tau_F(\mG)]+\delta \right) \\
    &= n\E[\tau_F(\mG)]+(n-k)\delta.
\end{align*}
We therefore have $\E[\tau_F(\mG)] = k-\frac{(n-k)}{n}\delta$.
\end{proof}

We give an example illustrating how one can apply Lemma~\ref{lem:delta}.

\begin{example} \label{ex:lines}
    Consider the set of points
    \begin{align*} 
    \mG= \{ &(1:0:0:0), (0:1:0:0), (0:0:1:0), (0:0:0:1), \\ 
    &(1:1:0:0),(0:1:1:0),(0:0:1:1),(1:0:0:1)\}
    \end{align*}
    in $\F_2^4$ of rank 4.
Then one can check that
\begin{align*}
    ({\alpha}_F(\mG,1), {\alpha}_F(\mG,2), \dots, {\alpha}_F(\mG,7)) &= (1,9,33,62,55,28,8) \\
    ({\alpha}_N(\mG,1), {\alpha}_N(\mG,2), \dots, {\alpha}_N(\mG,7))  &= (1,8,29,58,54,28,8),
\end{align*}
and so
\begin{align*}
    \delta = \frac{1}{\binom{7}{2}}+\frac{4}{\binom{7}{3}}+\frac{4}{\binom{7}{4}}+\frac{1}{\binom{7}{5}} = \frac{34}{105}.
\end{align*}
Therefore, by Lemma~\ref{lem:delta}
we conclude that $\E[\tau_F(\mG)]= 4-\frac{(8-4)34}{8\cdot 105}  = \frac{403}{105} \approx 3.838 < 4$.
\end{example}

The class of codes that we study in this section was first introduced in~\cite{bar2023cover}, and in the language of this paper, is defined as follows.

\begin{definition}[\textnormal{\cite[Construction 1]{bar2023cover}}] \label{def:daniella}
    Let $E_i$ denote the $i$-th fundamental point in \smash{$\mathrm{PG}(k-1,q)$} and let $E_{i,j}$ denote the point corresponding to the sum $E_i+E_j$. We consider the point set $\mG=\mF \cup \mN$ where $\mF=\{E_1,\dots,E_k\}$ is the set of all fundamental points in~$\mathrm{PG}(k-1,q)$, and $\mN=\{E_{1,2},E_{2,3},\dots,E_{k-1,k},E_{k,1}\}$ is a set of non-fundamental points obtained by summing consecutive fundamental points. 
\end{definition}

To be able to refer to this point set from Definition~\ref{def:daniella} later, throughout this section we denote it by $\mG_k$. Also, in the sequel, we use the Bachmann-Landau notation ``$\sim$'' to describe the asymptotic growth of real-valued functions defined on an infinite set of natural numbers; see e.g.~\cite{de1981asymptotic}.

A nice property of the point set $\mG_k$ is that the code arising from it has rate $1/2$ for any value of $k$, and thus, compared to other constructions of codes that perform well in terms of the random access problem, does not have vanishing rate. In~\cite{bar2023cover} it was conjectured that when $q=2$ this code has a relatively small random access expectation with respect to its dimension~$k$. More precisely, the following conjecture was proposed.

\begin{conjecture}[\textnormal{\cite[Conjecture 1]{bar2023cover}}] \label{con:daniella}
Let $\mG_k$ be defined as in Definition~\ref{def:daniella}. The ratio $\E[\tau_F(\mG_k)]/k$ decreases with $k$, and we have
\begin{align*}
    \lim_{k \to \infty} \frac{\E[\tau_F(\mG_k)]}{k} < 0.9456.
\end{align*}
\end{conjecture}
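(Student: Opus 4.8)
The plan is to apply Lemma~\ref{lem:delta} to the point set $\mG_k$ of Definition~\ref{def:daniella} over $\F_2$ (with $k \ge 3$). First I would record the relevant parameters: $\mG_k$ consists of $n = 2k$ distinct points of rank $k$ in $\mathrm{PG}(k-1,2)$, it contains all $k$ fundamental points, and the cyclic shift $E_i \mapsto E_{i+1}$ (indices mod $k$) is a linear automorphism permuting $\mG_k$ that acts transitively both on the fundamental points and on the non-fundamental points; hence $\mG_k$ is balanced within the fundamental part and within the non-fundamental part. Moreover the code generated by the matrix whose columns are $\mG_k$ has minimum distance $d = 3$: each row has Hamming weight $3$, while a short computation with the edge-boundary of the cycle $\Z/k\Z$ shows no nonzero codeword has weight $\le 2$. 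Lemma~\ref{lem:delta} then yields $\E[\tau_F(\mG_k)] = k - \tfrac12\,\delta_k$ with $\delta_k = \sum_{s=1}^{2k-3}\frac{\alpha_F(\mG_k,s)-\alpha_N(\mG_k,s)}{\binom{2k-1}{s}}$, so that $\E[\tau_F(\mG_k)]/k = 1 - \delta_k/(2k)$, and the whole problem becomes the asymptotic evaluation of $\delta_k$.

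The next step is a combinatorial description of span-membership. Identify $\Z/k\Z$ with the vertex set of the cycle whose edge between $i$ and $i+1$ corresponds to the non-fundamental point $E_{i,i+1}$. For $S \subseteq \mG_k$, let $A \subseteq \Z/k\Z$ index the fundamental points in $S$ and let $B$ be the edge set of the non-fundamental points in $S$. One checks that $\langle S\rangle$ is the direct sum, over the connected components $\Gamma$ of $(\Z/k\Z, B)$, of $\langle E_v : v \in \Gamma\rangle$ if $\Gamma$ meets $A$ and of the even-weight subspace of $\langle E_v : v \in \Gamma\rangle$ otherwise. Consequently $E_i \in \langle S\rangle$ iff the component of $i$ meets $A$, and $E_{i,i+1} \in \langle S\rangle$ iff $i$ and $i+1$ lie in the same component or both their components meet $A$. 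This gives $\alpha_F(\mG_k,s) - \alpha_N(\mG_k,s) = \#\{S : |S| = s,\ E_{1,2} \notin \langle S\rangle\} - \#\{S : |S| = s,\ E_1 \notin \langle S\rangle\}$.

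Then I would pass to an integral representation. Using $\binom{2k-1}{s}^{-1} = 2k\int_0^1 x^s(1-x)^{2k-1-s}\,dx$, together with $\alpha_F(\mG_k,s) = \alpha_N(\mG_k,s)$ for $s \ge 2k-2$ (a distance-$3$ code corrects $2$ erasures), one obtains
\[
\delta_k = 2k\int_0^1 \frac{p_k^N(x) - p_k^F(x)}{1-x}\,dx ,
\]
where $p_k^F(x)$ and $p_k^N(x)$ are the probabilities that $E_1$, resp.\ $E_{1,2}$, fails to lie in the span of the random subset $S_x$ of $\mG_k$ obtained by including each point independently with probability $x$. Both probabilities are at most $1-x$ (one needs $E_1$ itself, resp.\ the edge $E_{1,2}$, to be absent), so the integrand is bounded by $1$ uniformly in $k$ and dominated convergence reduces everything to the pointwise limits of $p_k^F$ and $p_k^N$. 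For fixed $x \in (0,1)$ the component of vertex $1$ in $(\Z/k\Z,B)$ is, up to an event of probability $O(x^{k-1})$, an arc $\{1\} \cup L \cup R$ with $L,R$ independent runs of present edges to the left and right of $1$, each of geometric type $\Pro[\text{length} = \ell] = x^\ell(1-x)$; using $\sum_\ell (1-x)^\ell x^\ell (1-x) = \tfrac{1-x}{1-x+x^2}$ one gets $\lim_k p_k^F(x) = \tfrac{(1-x)^3}{(1-x+x^2)^2}$, and (conditioning on $E_{1,2}$ absent, which splits vertices $1$ and $2$ into two independent arcs) $\lim_k p_k^N(x) = \tfrac{(1-x)^3(1+x^2)}{(1-x+x^2)^2}$, hence $\frac{p_k^N(x)-p_k^F(x)}{1-x} \to \frac{x^2(1-x)^2}{(1-x+x^2)^2}$. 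Finally, writing the integrand as $1 - \tfrac{2}{1-x+x^2} + \tfrac{1}{(1-x+x^2)^2}$ and completing the square $1-x+x^2 = (x-\tfrac12)^2 + \tfrac34$ gives $\int_0^1\frac{dx}{1-x+x^2} = \tfrac{2\pi}{3\sqrt3}$ and $\int_0^1\frac{dx}{(1-x+x^2)^2} = \tfrac23 + \tfrac{4\pi}{9\sqrt3}$, so
\[
\lim_{k\to\infty}\frac{\delta_k}{2k} = \int_0^1 \frac{x^2(1-x)^2}{(1-x+x^2)^2}\,dx = \frac53 - \frac{8\sqrt3\,\pi}{27},
\qquad\text{hence}\qquad
\lim_{k\to\infty}\frac{\E[\tau_F(\mG_k)]}{k} = \frac{8\sqrt3\,\pi}{27} - \frac23 = 0.9455994\ldots < 0.9456 ,
\]
which is the asymptotic half of Conjecture~\ref{con:daniella}. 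For the monotonicity claim I would instead extract exact closed forms for $\alpha_F(\mG_k,s)$ and $\alpha_N(\mG_k,s)$ from the component description above (now keeping the wrap-around terms) and compare $\E[\tau_F(\mG_k)]/k$ with $\E[\tau_F(\mG_{k+1})]/(k+1)$, handling small $k$ directly.

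I expect the main obstacle to be twofold. First, making the ``wrap-around is negligible'' passage fully rigorous — controlling, uniformly enough to commute with the integral, the discrepancy between the finite-$k$ counts and their geometric-run idealizations (dominated convergence handles the integral, but one still needs the pointwise limits with their error bounds). Second, and more seriously, the monotonicity assertion does not follow from the asymptotic estimate and seems to require either the explicit formulas for $\alpha_F,\alpha_N$ together with a somewhat delicate term-by-term comparison, or a separate convexity/coupling argument; by contrast, once the integral identity for $\delta_k$ is in place, the limit computation itself is routine.
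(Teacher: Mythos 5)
Your proposal is correct in substance and takes a genuinely different route from the paper, one that is both cleaner and stronger. The paper proceeds by deriving a recursion for $\beta_F(\mG_k,s)$ (Proposition~\ref{prop:recF}), a recursive \emph{lower bound} for $\beta_N(\mG_k,s)$ (Proposition~\ref{prop:recN}), and then unrolling the recursion ten steps and summing the asymptotics of $\sum_s \binom{2k-2j-5}{s-j-2}/\binom{2k-1}{s}$ for $j=0,\dots,9$ (Claim~\ref{cl:est}), which yields the truncated lower bound $\lim_k\delta_k/k \ge 4045443/37182145 \approx 0.1088006$ and hence the upper bound $\lim_k \E[\tau_F(\mG_k)]/k \le 70318847/74364290 \approx 0.945600$. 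Your approach instead identifies the span-membership structure of $\mG_k$ with connected components of random edge sets of the cycle $\Z/k\Z$, rewrites $\delta_k = 2k\int_0^1 (p_k^N(x)-p_k^F(x))/(1-x)\,dx$ via the beta integral $\binom{2k-1}{s}^{-1}=2k\int_0^1 x^s(1-x)^{2k-1-s}\,dx$, and computes the exact limit $\lim_k \E[\tau_F(\mG_k)]/k = \tfrac{8\sqrt{3}\pi}{27}-\tfrac23 \approx 0.9455994$ by evaluating $\int_0^1 x^2(1-x)^2/(1-x+x^2)^2\,dx = \tfrac53 - \tfrac{8\sqrt{3}\pi}{27}$, which I verified. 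The exact value is consistent with and slightly below the paper's rational bound (and matches the trend in Table~\ref{tb}). What the paper's approach buys is elementary rigor with no probabilistic limits to justify; what your approach buys is a closed-form answer to the limit itself, a transparent combinatorial picture via connected components, and no arbitrary truncation at $j=9$. I would flag two points. First, your worry about ``uniform'' control is overcautious: once the domination $|p_k^N-p_k^F|\le 1-x$ is in place, dominated convergence needs only pointwise limits, and these follow because the wrap-around events have probability $O(k x^{k-1})\to 0$ for each fixed $x\in(0,1)$; writing out that estimate cleanly is the only remaining work. Second, you are right that your argument (like the paper's) does not establish the monotonicity assertion in Conjecture~\ref{con:daniella}; the paper declares the conjecture proved by Corollary~\ref{cor:proof}, but Corollary~\ref{cor:proof} addresses only the limit, so your honesty on this point is warranted. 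All the auxiliary facts you invoke check out: the cyclic automorphism gives the balancedness needed for Lemma~\ref{lem:delta}, the points of $\mG_k$ are distinct for $k\ge 3$ over $\F_2$, and the minimum distance is indeed $3$ since a codeword corresponding to $\emptyset \ne T\subsetneq \Z/k\Z$ has weight $|T|+|\partial T| \ge 1+2 = 3$ with equality for $|T|=1$.
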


Furthermore, in~\cite[Example 2]{bar2023cover} it was shown that $\E[\tau_F(\mG_4)]/4 \approx 3.838 = 0.959k$. As for $k > 4$, coming up with explicit expressions for the parameters needed for computing $\E[\tau_F(\mG_k)]$ seems to be a hard problem, the authors came up with a second-order recursive formula for some code parameters that are closely related to $\alpha_F(\mG_k,s)$. Inspired by Lemma~\ref{lem:delta}, we give a first-order recursion for the $\alpha_F(\mG_k,s)$, as well as for $\alpha_N(\mG_k,s)$, and give bounds on their difference. This will in turn lead to an upper bound for $\E[\tau_F(\mG_k)]$ from Lemma~\ref{lem:delta}. Note that we use the same approach as was done in~\cite{bar2023cover}.

In the sequel, we denote $\beta_F(\mG_k,s) :=\binom{2k}{s}-\alpha_F(\mG_k,s)$ and $\beta_N(\mG_k,s) :=\binom{2k}{s}-\alpha_N(\mG_k,s)$ for all $1 \le s \le 2k-1.$ Note that contrary to $\alpha_F(\mG_k,s)$ and $\alpha_N(\mG_k,s)$, $\beta_F(\mG_k,s)$ and $\beta_N(\mG_k,s)$, respectively, count the number of subsets of cardinality $s$ in $\mG_k$ that \emph{do not} recover fundamental points, and non-fundamental points, respectively.

\begin{proposition} \label{prop:recF}
We have
\begin{align*}
    \beta_F(\mG_k,s) = \begin{cases}
        0 \quad &\textnormal{if $k=1$, $s \ge 1$} \\
        2k-1 \quad  &\textnormal{if $k > 1$, $s =1$} \\
        1 \quad  &\textnormal{if $k=2$, $s=2$} \\
        0 \quad  &\textnormal{if $k=2$, $s=3$} \\
        \beta_F(\mG_{k-1},s-1)+\displaystyle\sum_{j=0}^{s}\binom{2k-2j-3}{s-j} +1 \quad  &\textnormal{if $k \ge 3$, $s=k-1$} \\
        \beta_F(\mG_{k-1},s-1)+\displaystyle\sum_{j=0}^{s}\binom{2k-2j-3}{s-j}  \quad  &\textnormal{if $k \ge 3$, $2 \le s \le 2k-3$, \newline $s \ne k-1$} \\
        0 \quad  &\textnormal{if $k \ge 3$, $s\ge 2k-2$} \\
    \end{cases}
\end{align*}
\end{proposition}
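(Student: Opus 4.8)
The plan is to set up a bijective/recursive counting argument on the subsets $S \subseteq \mG_k$ of size $s$ that fail to recover a fixed fundamental point, say $E_1$. The base cases ($k=1$; $k>1$ and $s=1$; $k=2$ and $s\in\{2,3\}$; and $k\ge 3$ with $s\ge 2k-2$) are all direct: when $k=1$ every nonempty subset spans the whole space, when $s=1$ a single point recovers $E_1$ only if it \emph{is} $E_1$ so exactly $2k-1$ subsets fail, and when $s \ge 2k-2$ the code has minimum distance $2$ (for $k\ge 3$ any two consecutive $E_{i,i+1}$ together with enough fundamental points generate everything) so that $2k-2$ or more points always suffice. For $k=2$ one checks $\mG_2=\{E_1,E_2,E_{1,2},E_{2,1}\}$ by hand. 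Establishing that $\alpha_F$ (equivalently $\beta_F$) does not depend on \emph{which} fundamental point we pick should be argued once, using the cyclic symmetry $E_i \mapsto E_{i+1}$ (indices mod $k$) of the construction; this legitimizes writing $\beta_F(\mG_k,s)$.

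The heart of the argument is the recursion for $3 \le s \le 2k-3$. First I would observe that a subset $S$ failing to recover $E_1$ cannot contain $E_1$ itself, so $S \subseteq \mG_k \setminus \{E_1\}$. The idea is to condition on the role played by the two ``link'' points incident to vertex $1$ in the cycle, namely $E_{1,2}$ and $E_{k,1}$. Deleting $E_1$ from $\mG_k$ and contracting appropriately, the points $E_2, E_3, \dots, E_k$ together with $E_{2,3}, \dots, E_{k-1,k}$ form (a relabeling of) $\mG_{k-1}$ living in the hyperplane $x_1 = 0$ — this is the source of the $\beta_F(\mG_{k-1}, s-1)$ term, corresponding to subsets that avoid $E_1$ but use exactly one ``extra'' slot on the deleted structure; one must check that failing to recover $E_1$ in $\mG_k$ restricted to this hyperplane-part corresponds exactly to failing to recover the appropriate fundamental point in $\mG_{k-1}$. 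The sum $\sum_{j=0}^{s}\binom{2k-2j-3}{s-j}$ should count the remaining failing subsets, those that genuinely use the link points $E_{1,2}$ and/or $E_{k,1}$: the index $j$ tracks how many fundamental points from a particular ``prefix block'' are included, and $\binom{2k-2j-3}{s-j}$ counts the free choices of the remaining $s-j$ points from the $2k-2j-3$ points that remain permissible without reconstructing $E_1$. The extra $+1$ in the case $s = k-1$ accounts for the single exceptional failing set that uses \emph{all} the fundamental points $E_2,\dots,E_k$ (which is possible only when the cardinality matches), analogous to the $+1$ appearing in the $k=2$, $s=2$ base case.

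The main obstacle I anticipate is pinning down precisely which subsets are counted by the sum $\sum_{j=0}^{s}\binom{2k-2j-3}{s-j}$ and verifying there is no double counting against the $\beta_F(\mG_{k-1},s-1)$ term. Concretely, one needs a clean description of the failure condition: $E_1 \notin \langle S \rangle$ iff in the ``cycle graph'' picture the set of indices/edges selected by $S$ does not form a spanning-connected configuration reaching vertex $1$, and translating this combinatorial-graph condition into the stated binomial sum is the delicate bookkeeping step. I would handle it by fixing a canonical ordering of the points along the cycle, letting $j$ be the length of the maximal run of consecutive fundamental points $E_2, E_3, \dots$ that $S$ contains starting from one end, and showing that once this run is fixed the remaining $s - j$ points must come from a determined set of size $2k - 2j - 3$ in order to keep $E_1$ outside the span — with the boundary behaviour at $j = s$ and the off-by-one at $s = k-1$ requiring separate care. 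Once the recursion is verified on these cases, it matches the claimed piecewise formula, completing the proof.
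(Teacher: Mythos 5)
Your high-level decomposition is the right one, and it matches the spirit of the paper's argument: you fix $E_1$, observe a failing set $S$ cannot contain $E_1$, and split on the roles of the two ``link'' points $E_{1,2}$ and $E_{k,1}$; the case $E_{1,2}\in S$ yields the $\beta_F(\mG_{k-1},s-1)$ term, and the remaining cases produce the binomial sum. However, the heart of the proposition is precisely the ``delicate bookkeeping step'' you flag as your main obstacle, and you have not carried it out — you only say what the sum \emph{should} count. That is a genuine gap: without it, the recursion is not established. Moreover, your sketch of how the sum arises is not quite right: you propose to let $j$ track the ``maximal run of consecutive fundamental points $E_2,E_3,\dots$'' contained in $S$, but the actual index $j$ in $\sum_{j=0}^{s}\binom{2k-2j-3}{s-j}$ tracks a run of consecutive \emph{sum} points $E_{k,1},E_{k-1,k},E_{k-2,k-1},\dots$; including each such point forces the corresponding fundamental point to be excluded, and only then is $\binom{2k-2j-3}{s-j}$ the count of free choices. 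You also do not verify that the $E_{1,2}\in S$ case is \emph{exactly} $\beta_F(\mG_{k-1},s-1)$: note $E_1\notin\langle S\rangle$ with $E_{1,2}\in S$ is equivalent to $E_2\notin\langle S\setminus\{E_{1,2}\}\rangle$, and showing this reduces to the $\mG_{k-1}$ problem requires care because $E_{k,1}=E_k+E_1$ does not lie in the hyperplane $x_1=0$, so the ``relabeling'' you invoke is not literal.

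The paper's device for making the bookkeeping tractable — which you are missing — is to encode the failure condition ``$E_1\notin\langle S\rangle$'' combinatorially via a directed graph on $\mG_k\setminus\{E_1\}$ with two edge colors: $S$ fails to recover $E_1$ if and only if the induced vertex set contains no \emph{monochromatic} path from $E_{1,2}$ or $E_{k,1}$ to some $E_i$ with $2\le i\le k$. This clean criterion is what turns the three cases (neither $E_{1,2}$ nor $E_{k,1}$; $E_{1,2}\in S$; $E_{1,2}\notin S$, $E_{k,1}\in S$) into the exact binomial identities and the $\mG_{k-1}$ reduction. You would need to supply some equivalent precise criterion — your ``spanning-connected configuration reaching vertex 1'' is too loose to produce the counts. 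One small additional slip: in the base case $s\ge 2k-2$ you appeal to the code having minimum distance $2$, but the code generated by $\mG_k$ has minimum distance $3$ for $k\ge 3$; minimum distance $d$ is what ensures any $s\ge n-d+1=2k-2$ columns span everything, so you need $d\ge 3$ there, not $d=2$.
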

\begin{proof}
We compute a recursive formula for the number of $s$-sets of points in $\mG_k$ that recover $E_1$ by taking the same approach to the one in the proof of~\cite[Theorem 10]{bar2023cover}.
Consider the directed graph $G_k$ whose vertices are made of the $2k-1$ points in $\mG_k \setminus E_1$, and whose edges can be partitioned into \emph{blue} and \emph{red} edges, which are characterized as follows.
\begin{itemize}
    \item[(i)] For every $1 \le i \le k-2$ there are two outgoing blue edges from $E_{i,i+1}$: one to the vertex $E_{i+1}$ and one to the vertex $E_{i+1,i+2}$;
    \item[(ii)] There are two outgoing blue edges from $E_{k-1,k}$: one to $E_k$ and one to $E_{k,1}$;
    \item[(iii)] For every $2 \le i \le k-1$ there are two outgoing red edges from $E_{i,i+1}$: one to the vertex $E_{i}$ and one to the vertex $E_{i-1,i}$;
    \item[(iv)] There are two outgoing red edges from $E_{k,1}$: one to $E_k$ and one to $E_{k-1,k}$.
\end{itemize}
A depiction of this graph is given in Figure~\ref{fig:rece1}.
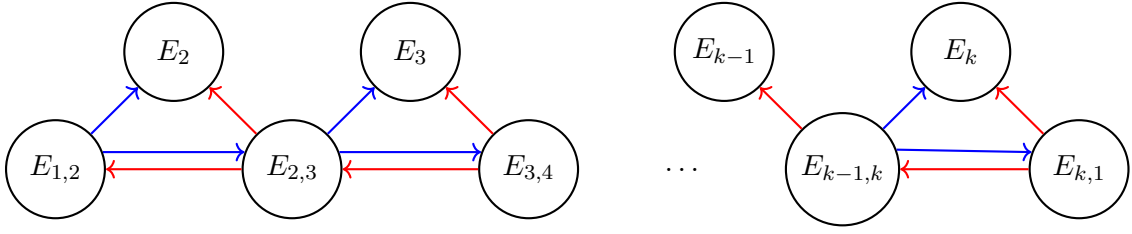
\begin{figure}[ht!]
% \caption{The graph for recovering $E_1$. The notation $E_{i,j}$ indicates the sum $E_i+E_j$.}
\begin{adjustwidth*}{}{}
\begin{center}
\begin{tikzpicture}[node distance={22mm}, thick, main/.style = {draw, circle, minimum size=1.3cm}] 
\node[main] (1) {$E_{1,2}$}; 
 \node[main] (2) [above right of=1] {$E_2$}; 
 \node[main] (3) [below right of=2] {$E_{2,3}$};
  \node[main] (4) [above right of=3] {$E_3$}; 
 \node[main] (5) [below right of=4] {$E_{3,4}$};
  \node[main] (7) [right=2.8cm of 4]{$E_{k-1}$};
   \node[main] (6) [below right of=7]{$E_{k-1,k}$};
     \node[main] (8) [above right of=6] {$E_k$};
      \node[main] (9) [below right of=8] {$E_{k,1}$};
\path (5) -- node[auto=false]{$\ldots$} (6);
\draw[->,blue] (1) -- (2);
\draw[->,blue] (3) -- (4);
\draw[->,blue] (6) -- (8);
\draw[->,red] (9) -- (8);
\draw[->,red] (6) -- (7);
\draw[->,red] (5) -- (4);
\draw[->,red] (3) -- (2);

\draw[->,blue] (1.20) -- (3.160);
\draw[->,red] (3) -- (1);

\draw[->,blue] (3.20) -- (5.160);
\draw[->,red] (5) -- (3);

\draw[->,blue] (6.20) -- (9.160);
\draw[->,red] (9) -- (6);
\end{tikzpicture} 
\end{center}
\end{adjustwidth*}
\caption{\label{fig:rece1} The graph for recovering $E_1$. The notation $E_{i,j}$ indicates the sum $E_i+E_j$.}
\end{figure}

Note that there is a one-to-one correspondence between $s$-sets of points in $\mG_k$ that do \emph{not} recover $E_1$ and subsets of vertices in $G_k$ that do not contain a monochromatic path starting from either $E_{1,2}$ or $E_{k,1}$ and ending in a vertex $E_i$ for some $2 \le i \le k$. We split the proof into three parts and we denote by $V \subseteq G_k$ a subgraph of $G_k$ made of $s$ vertices that does not constitute a set of points that recover $E_1$.
\begin{enumerate}
    \item If neither $E_{1,2}$ nor $E_{k,1}$ are in $V$, then the points in $V$ will never recover $E_1$, and so there are $\binom{2k-3}{s}$ such subgraphs.
    \item If $E_{1,2}\in V$, then $E_2 \notin V$. Moreover, the number of such subgraphs $V$ is exactly $\beta_F(\mG_{k-1},s-1)$.
    \item If $E_{1,2}\notin V$ and $E_{k,1} \in V$ then $E_{k} \notin V$ and $V$ cannot contain a red path starting in $E_{k-1,k}$ and ending in a vertex of the form $E_i$ for $2 \le i \le k$. The number of subgraphs $V$ that do not contain such a path can be counted in the following way:
    \begin{enumerate}
        \item[3.1] If $E_{k-1,k} \notin V$, then we can choose any of the remaining vertices and will not end up with a red path in $V$ starting at $E_{k-1,k}$. There are $\binom{2k-5}{s-1}$ such subgraphs.
        \item[3.2] If $E_{k-1,k} \in V$, then $E_{k-1} \notin V$. If $E_{k-2,k-1} \notin V$ then we can choose any of the remaining vertices and will not end up with a red path in $V$ starting at $E_{k-1,k}$. There are $\binom{2k-7}{s-2}$ such subgraphs.
        \item[3.3] If $E_{k-1,k},E_{k-2,k-1} \in V$, then $E_{k-1},E_{k-2} \notin V$. If $E_{k-3,k-2} \notin V$ then we can choose any of the remaining vertices and will not end up with a red path in $V$ starting at $E_{k-1,k}$. There are $\binom{2k-9}{s-3}$ such subgraphs.
        \item[\vdots]
    \end{enumerate}
    Proceeding with the steps above we end up with a total of $\sum_{j=1}^{s}\binom{2k-2j-3}{s-j}$ subgraphs that will not recover $E_1$. Finally, note that in the special case where $s=k-1$ we did not count the subgraph that consists of the vertices in $\{E_{2,3},E_{3,4},\dots,E_{k-1,k},E_{k,1}\}$. 
    
    Moreover, when $s \ge 2k-2$, then any subset of size $s$ will recover $E_1$, and the other cases of the proposition follow directly from the definitions. \qedhere
\end{enumerate}
\end{proof}

Unlike the case of fundamental points, for the non-fundamental points we were not able to find a recursive formula, but we were able to provide a lower bound on $ \beta_N(\mG_k,s)$ depending also on $ \beta_N(\mG_{k-1},s-1)$. Note it is very likely that one can find an exact recursive formula for $ \beta_N(\mG_k,s)$, but this would require a more detailed case analysis. For the purpose of this paper, however, our lower bound suffices.

\begin{proposition} \label{prop:recN}
We have
\begin{align*}
    \beta_N(\mG_k,s) \ge \begin{cases}
        0 \quad &\textnormal{if $k=1$, $s \ge 1$} \\
        2 \quad  &\textnormal{if $k=2$, $s =1$} \\
        2k-1 \quad  &\textnormal{if $k\ge 3$, $s=1$} \\
        0 \quad  &\textnormal{if $k=2$, $s\ge 2$} \\
        8 \quad  &\textnormal{if $k=3$, $s= 2$} \\
        2 \quad  &\textnormal{if $k=3$, $s=3$} \\
        \binom{2k-3}{s}+\beta_N(\mG_{k-1},s-1) +\\ \displaystyle\sum_{j=0}^{s-1}\binom{2k-2j-5}{s-j-1}+\displaystyle\sum_{j=0}^{s-2}\binom{2k-2j-7}{s-j-2}+ \\ 2\displaystyle\sum_{j=0}^{s-3}\binom{2k-2j-7}{s-j-3}+ \displaystyle\sum_{j=0}^{s-4}\binom{2k-2j-7}{s-j-4}
        \quad  &\textnormal{if $k \ge 4$, $2 \le s \le 2k-3$} \\
        0 \quad  &\textnormal{if $k \ge 3$, $s\ge 2k-2$} \\
    \end{cases}
\end{align*}
\end{proposition}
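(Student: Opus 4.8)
\medskip
\noindent\textbf{Proof proposal.}
The plan is to run the argument of Proposition~\ref{prop:recF} once more, now for a non-fundamental target. By the balance hypothesis we may take the target to be $E_{1,2}=E_1+E_2$, so that $\beta_N(\mG_k,s)$ is the number of $s$-subsets of the $2k-1$ points of $\mG_k\setminus\{E_{1,2}\}$ that do \emph{not} span $E_1+E_2$. The starting point is a clean criterion for this. Identify the fundamental points $E_1,\dots,E_k$ with the vertices and the non-fundamental points $E_{i,i+1}=E_i+E_{i+1}$ with the edges of the $k$-cycle on the coordinate set $\{1,\dots,k\}$; deleting $E_{1,2}$ deletes exactly the edge $\{1,2\}$, so a subset $S\subseteq\mG_k\setminus\{E_{1,2}\}$ is a pair $(A,B)$, where $A$ is a set of vertices and $B$ a set of edges of the path $\Pi$ on the vertices $2,3,\dots,k,1$ (in this order). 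Writing $\langle S\rangle$ as a direct sum over the connected components of $(\{1,\dots,k\},B)$ -- on a component that meets $A$ one gets the full $\F_2$-coordinate space, on a component disjoint from $A$ one gets the even-weight vectors supported there -- one obtains: $S$ fails to recover $E_{1,2}$ if and only if coordinates $1$ and $2$ lie in different components of $(\{1,\dots,k\},B)$ and at least one of these two components contains no vertex of $A$. (Equivalently: $S$ recovers $E_{1,2}$ iff it recovers both $E_1$ and $E_2$, or the edges in $B$ already join $1$ to $2$.) This plays the role of the monochromatic-path condition read off from Figure~\ref{fig:rece1}.

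I would then partition the $s$-subsets according to their behaviour at the endpoint $2$ of $\Pi$, whose only incident edge is $E_{2,3}$:
\begin{itemize}
\item[(1)] $E_2\notin S$ and $E_{2,3}\notin S$. Then vertex $2$ is an isolated, $A$-free component, so \emph{every} such subset fails to recover $E_{1,2}$; there are $\binom{2k-3}{s}$ of them.
\item[(2)] $E_2\notin S$ and $E_{2,3}\in S$. Here vertex $2$ is a leaf of $\Pi$ attached to vertex $3$ and carries no fundamental point, so deleting the point $E_{2,3}$ (and relabelling the coordinates $1,3,4,\dots,k$ as $1,2,\dots,k-1$) is a bijection, decreasing the number of chosen points by $1$, from the non-recovering subsets of this case onto \emph{all} non-recovering $(s-1)$-subsets of $\mG_{k-1}\setminus\{E_{1,2}\}$; the criterion above is preserved verbatim. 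This case contributes exactly $\beta_N(\mG_{k-1},s-1)$, mirroring the ``$E_{1,2}\in V$'' case in the proof of Proposition~\ref{prop:recF}.
\item[(3)] $E_2\in S$. Then the component of coordinate $2$ meets $A$, so by the criterion non-recovery forces $E_1\notin S$ and the component of coordinate $1$ to be $A$-free. Partitioning further by this component, which is necessarily of the form $\{1,k,k-1,\dots,k-t+1\}$ for some $0\le t\le k-2$ -- forcing $E_1$, $E_k,\dots,E_{k-t+1}$ and the edge point $E_{k-t,k-t+1}$ out of $S$ and $E_2$, $E_{k,1},\dots,E_{k-t+1,k-t+2}$ into $S$ -- a direct count yields $\sum_{t\ge 0}\binom{2k-2t-4}{s-t-1}$ non-recovering subsets.
\end{itemize}
Cases $(1)$--$(3)$ partition all $s$-subsets, and each count is of non-recovering subsets, so their sum is (the generic value of) $\beta_N(\mG_k,s)$; applying Pascal's rule twice rewrites the single sum of case $(3)$ as $\sum_{j}\binom{2k-2j-5}{s-j-1}+\sum_{j}\binom{2k-2j-7}{s-j-2}+2\sum_{j}\binom{2k-2j-7}{s-j-3}+\sum_{j}\binom{2k-2j-7}{s-j-4}$, which is exactly the expression in the statement. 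The base cases $k\le 3$ and the ranges of $s$ where the recursion is vacuous are checked by hand, and the inequality follows (in fact with equality in the generic range, the ``$\ge$'' merely letting one be cavalier about the boundary).

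The main obstacle is case $(3)$. Unlike the fundamental situation, non-recovery of $E_{1,2}$ is governed by a disjunction (``the component of $1$ is $A$-free'' \emph{or} ``the component of $2$ is $A$-free''), so one cannot simply follow a single path as in Figure~\ref{fig:rece1}; one must first condition enough -- here on $E_2\in S$, which kills one disjunct -- to collapse to a single telescoping run of edges emanating from coordinate $1$, and then track the points forced in and out near the deleted edge $E_{1,2}$ without double counting. Getting this bookkeeping exactly right is what produces the opaque shifts $2k-2j-5$ and $2k-2j-7$ and the coefficient $2$, and it is the only genuinely delicate part of the argument; everything else is a faithful transcription of the proof of Proposition~\ref{prop:recF}.
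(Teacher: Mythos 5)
Your proposal is correct in its conclusion, but it takes a genuinely different route from the paper's. The paper works with the auxiliary directed graph of Figure~\ref{fig:rece12} and splits into five cases driven by $E_{2,3},E_3,E_{3,4}$ and monochromatic paths. You instead derive a clean $\F_2$-span criterion -- identifying $\mG_k\setminus\{E_{1,2}\}$ with the vertex/edge set of a path and observing that $S$ fails to recover $E_{1,2}$ iff coordinates $1$ and $2$ lie in distinct components of the chosen-edge graph with at least one component $A$-free -- and then condition only on $E_2,E_{2,3}$, giving three cases. Both your bijection in case (2) and your telescoping count in case (3) are correct and, unlike the paper's argument, \emph{exact}; your decomposition is cleaner and avoids the paper's sub-sub-cases and self-admitted omissions at $s=k-1$. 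One can in fact read off from your argument the exact value
\[
\beta_N(\mG_k,s)=\binom{2k-3}{s}+\beta_N(\mG_{k-1},s-1)+\sum_{t\ge 0}\binom{2k-2t-4}{s-t-1},
\]
which is strictly tighter than the paper's stated lower bound.

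That said, the Pascal step is slightly off and should be tightened. First, you need three applications of Pascal's rule, not two, to pass from $\binom{2k-2t-4}{s-t-1}$ to the four-term expansion with upper indices $2k-2t-5$ and $2k-2t-7$. More importantly, with the standard convention that $\binom{n}{m}=0$ when $n<m$ (which is what the paper uses -- this is precisely why it must append a ``$+1$'' by hand at $s=k-1$ in Proposition~\ref{prop:recF}), Pascal's rule degrades to an inequality at the boundary $n=0$: one has $\binom{0}{0}=1\ge\binom{-1}{0}+\binom{-1}{-1}=0$. Consequently your expansion gives only $\sum_t\binom{2k-2t-4}{s-t-1}\ge$ the four paper sums, not equality, and the parenthetical ``in fact with equality in the generic range'' is wrong. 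A concrete counterexample sits squarely in the generic range: for $k=4,\,s=3$ one has $\beta_N(\mG_4,3)=\binom{8}{3}-\alpha_N(\mG_4,3)=56-29=27$ by Example~\ref{ex:lines}, while the right-hand side of the stated recursion evaluates to $10+8+4+1+2+0=25$. Your sum $\sum_t\binom{4-2t}{2-t}=6+2+1=9$ correctly gives $10+8+9=27$. So your case counts are exact; the boundary loss happens purely inside the Pascal rewriting, where $\ge$ is all you get -- and all you need, since the proposition only asserts a lower bound. The fix is simply to state the Pascal step as an inequality throughout and drop the equality claim.
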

\begin{proof}
We compute a recursive formula by counting the number of $s$-sets of points in $\mG_k$ that recover $E_{1,2}$. Similarly to the proof of Proposition~\ref{prop:recF}, the arguments we use are based on counting certain paths in an appropriate directed graph: consider the directed graph $G_k$ whose vertices are made of the $2k-1$ points in $\mG_k \setminus E_{1,2}$, and whose set of edges consists of \emph{blue} and \emph{red} edges, defined as follows.
\begin{itemize}
    \item[(i)] For every $2 \le i \le k-2$ there are two outgoing blue edges from $E_{i,i+1}$: one to the vertex $E_{i+1}$ and one to the vertex $E_{i+1,i+2}$;
    \item[(ii)] There are two outgoing blue edges from $E_{k-1,k}$: one to $E_k$ and one to $E_{k,1}$;
    \item[(iii)] For every $3 \le i \le k$ there is an outgoing blue edge from $E_i$ to $E_1$;
    \item[(iv)] For every $2 \le i \le k-1$ there are two outgoing red edges from $E_{i,i+1}$: one to the vertex $E_{i}$ and one to the vertex $E_{i-1,i}$;
    \item[(v)] There are two outgoing red edges from $E_{k,1}$: one to $E_k$ and one to $E_{k-1,k}$;
    \item[(vi)] For every $3 \le i \le k$ there is an outgoing red edge from $E_i$ to $E_2$.
\end{itemize}
A depiction of this graph is given in Figure~\ref{fig:rece12}. 
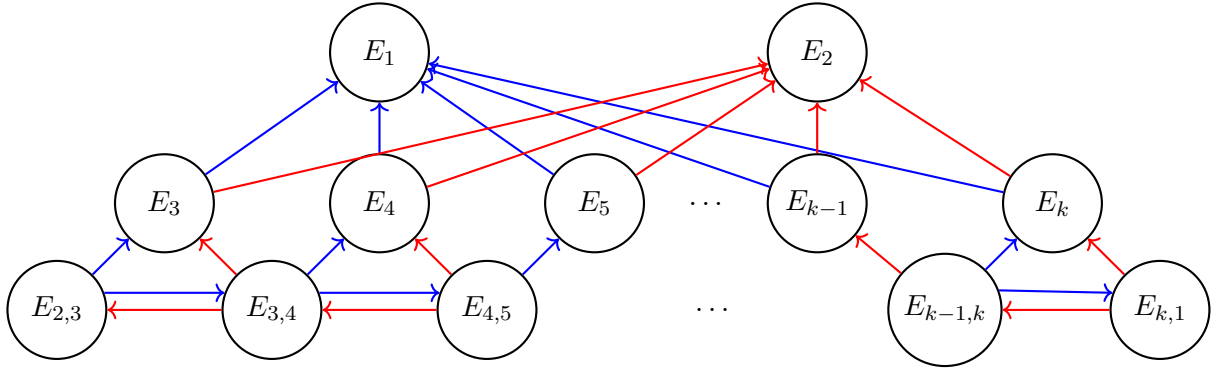
\begin{figure}[ht!]
\begin{adjustwidth*}{}{}
\begin{center}
\begin{tikzpicture}[node distance={20mm}, thick, main/.style = {draw, circle, minimum size=1.3cm}] 
\node[main] (1) {$E_{2,3}$}; 
\node[main] (2) [above right of=1] {$E_3$}; 
\node[main] (3) [below right of=2] {$E_{3,4}$}; 
\node[main] (4) [above right of=3] {$E_4$}; 
\node[main] (5) [below right of=4] {$E_{4,5}$}; 
\node[main] (6) [above right of=5] {$E_5$}; 
\node[main] (7) [right=4.6cm of 5] {$E_{{k-1,k}}$}; 
\node[main] (8) [right=1.6cm of 6] {$E_{k-1}$}; 
\node[main] (9) [above right of=7] {$E_{k}$}; 
\node[main] (10) [below right of=9] {$E_{k,1}$};
\node[main] (11) [above of=4] {$E_1$};
\node[main] (12) [above of=8] {$E_2$};

\path (6) -- node[auto=false]{$\ldots$} (8);
\path (5) -- node[auto=false]{$\ldots$} (7);
\draw[->,blue] (1.20) -- (3.160);
\draw[->,red] (3) -- (1);
\draw[->,blue] (3.20) -- (5.160);
\draw[->,red] (5) -- (3);
\draw[->,blue] (7.20) -- (10.160);
\draw[->,red] (10) -- (7);

\draw[->,blue] (1) -- (2);
\draw[->,blue] (3) -- (4);
\draw[->,blue] (5) -- (6);
\draw[->,blue] (7) -- (9);

\draw[->,red] (3) -- (2);
\draw[->,red] (5) -- (4);
\draw[->,red] (7) -- (8);
\draw[->,red] (10) -- (9);

\draw[->,blue] (2) -- (11);
\draw[->,blue] (4) -- (11);
\draw[->,blue] (6) -- (11);
\draw[->,blue] (9) -- (11);
\draw[->,blue] (8) --(11);

\draw[->,red] (2) -- (12);
\draw[->,red] (4) -- (12);
\draw[->,red] (6) -- (12);
\draw[->,red] (9) -- (12);
\draw[->,red] (8) --(12);
\end{tikzpicture}
\end{center}
\end{adjustwidth*}
\caption{\label{fig:rece12} The graph for recovering $E_1+E_2$. The notation $E_{i,j}$ indicates again the sum $E_i+E_j$.}
\end{figure}
There is a one-to-one correspondence between $s$-sets of points in $\mG_k$ that do \emph{not} recover $E_{1,2}$ and subsets of vertices in $G_k$ that do not contain a monochromatic path starting from either $E_{2,3}$ or $E_{k,1}$ and ending in either $E_1$ or $E_2$. As in Proposition~\ref{prop:recF}, we split the proof into different cases, and denote by $V \subseteq G_k$ a subgraph of $G_k$ made of $s$ vertices that does not constitute a set of points that recover $E_{1,2}$:
\begin{enumerate}
    \item If $E_{2,3}, E_2 \notin V$, then there is no way to recover $E_{1,2}$ from any set of vertices, and so we have $\binom{2k-3}{s}$ such subgraphs.
    \item If $E_{2,3} \notin V$ and $E_2 \in V$, then $E_1 \notin V$. Moreover, $V$ cannot contain a red path starting in $E_{k,1}$ and ending in one of the vertices $E_i$ for $3 \le i \le k$. We have counted the number of such subgraphs in part 3. of the proof of Proposition~\ref{prop:recF}, and it is equal to $\sum_{j=0}^{s-1}\binom{2k-2j-5}{s-j-1}$. Note that we again have a special case when $s=k-1$, however, since we only give a lower bound for $\beta(\mG_k,s)$, we do not treat this extra case.
    %Again in the case where $s=k-1$, we have the additional option of choosing $V=\{E_2,E_{3,4},\dots,E_{k-1,k},E_{k,1}\}$.
    \item If $E_{2,3} \in V$ and $E_3 \notin V$, then the number of such subgraphs is $\beta_N(\mG_{k-1},s-1)$.
    \item If $E_{2,3},E_3 \in V$ then $E_1 \notin V$. Assume also that $E_{3,4} \notin V$. Then $V$ cannot contain a red subgraph starting at $E_{k,1}$ and ending in $E_i$ for $4 \le i \le k$. Again we use the count from the proof of Proposition~\ref{prop:recF} and obtain that there are $\sum_{j=0}^{s-2}\binom{2k-2j-7}{s-j-2}$ such paths if $E_2 \notin V$, and there are $\sum_{j=0}^{s-3}\binom{2k-2j-7}{s-j-3}$ such paths if $E_2 \in V$ as well. 
    \item If $E_{2,3},E_3,E_{3,4} \in V$ then also $E_1 \notin V$. Using analogous reasoning as before we obtain that the number of such subgraphs is $\sum_{j=0}^{s-3}\binom{2k-2j-7}{s-j-3}$ such paths if $E_2 \notin V$, and there are $\sum_{j=0}^{s-4}\binom{2k-2j-7}{s-j-4}$ if $E_2 \in V$.
\end{enumerate}
Finally, using the definitions of $\mG_k$ one can compute the initial conditions for the recursion, ending up with the statement of the proposition.
\end{proof}

Now that we have recursive expressions and bounds for both $\beta_F(\mG_k,s)$ and $\beta_N(\mG_k,s)$, we can give a lower bound for the value of $\delta$ described in Lemma~\ref{lem:delta}, that will later give an upper bound for $\E[\tau_F(\mG_k)]$ that is strictly smaller than $k$. Note that the main result of this section is a proof for Conjecture~\ref{con:daniella}, which concerns an asymptotic upper bound as $k \to \infty$.

\begin{theorem} \label{thm:bb}
Let $\delta_k:=\displaystyle\sum_{s=1}^{2k-3}\frac{\alpha_F(\mG_k,s)-\alpha_N(\mG_k,s)}{\binom{2k-1}{s}} $. We have $\lim_{k\to \infty} \frac{\delta_k}{k} \ge  \frac{4045443}{37182145}$.
\end{theorem}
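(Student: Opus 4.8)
The plan is to convert the two recursions into a single first-order recursion for the difference, unroll it, and then evaluate the resulting sum of binomial ratios asymptotically. Write $\gamma_k(s):=\alpha_F(\mG_k,s)-\alpha_N(\mG_k,s)=\beta_N(\mG_k,s)-\beta_F(\mG_k,s)$, so that $\delta_k=\sum_{s=1}^{2k-3}\gamma_k(s)/\binom{2k-1}{s}$. Subtracting the exact recursion of Proposition~\ref{prop:recF} from the lower bound of Proposition~\ref{prop:recN} gives, for $k$ large and $2\le s\le 2k-3$,
\[
\gamma_k(s)\ \ge\ \gamma_{k-1}(s-1)+\binom{2k-3}{s}+\Sigma_1+\Sigma_2+2\Sigma_3+\Sigma_4-\Sigma_0-[\,s=k-1\,],
\]
where $\Sigma_0=\sum_{j\ge 0}\binom{2k-2j-3}{s-j}$, $\Sigma_1=\sum_{j\ge 0}\binom{2k-2j-5}{s-j-1}$, $\Sigma_2,\Sigma_3,\Sigma_4$ are the three sums of Proposition~\ref{prop:recN} with bottom entries $s-j-2,\ s-j-3,\ s-j-4$, and $[\,s=k-1\,]$ equals $1$ if $s=k-1$ and $0$ otherwise. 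The key simplification is that the index shift $j\mapsto j+1$ gives $\Sigma_1=\Sigma_0-\binom{2k-3}{s}$, so the increment collapses to $\Sigma_2+2\Sigma_3+\Sigma_4$, up to the correction $[\,s=k-1\,]$ which in total contributes only $O\!\big(\sum_k 1/\binom{2k-1}{k-1}\big)=o(k)$. In particular the increment is nonnegative, so every term discarded below weakens the estimate in the right direction.

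Unrolling $\gamma_k(s)\ge \gamma_{k-1}(s-1)+(\Sigma_2+2\Sigma_3+\Sigma_4)$ down to a base case — all base values being nonnegative, as one checks directly — yields $\gamma_k(s)\ge \sum_{i\ge 0}\big(\Sigma_2+2\Sigma_3+\Sigma_4\big)\big|_{(k-i,\,s-i)}$ over the valid range $i\le\min(s-2,\,2k-3-s,\,k-4)$. Each summand has the shape $\sum_j\binom{2(k-i)-2j-7}{(s-i)-j-c'}$ with $c'\in\{2,3,4\}$; regrouping the double sum over $i,j$ by $a:=i+j$ turns it into $\sum_a w(a)\binom{2k-2a-7}{s-a-c'}$, where the multiplicity $w(a)$ equals its unconstrained value $a+1$ as soon as $a\le\min(s-2,\,2k-3-s,\,k-4)$, hence for every fixed $a$ once $s$ is bounded away from the endpoints $1$ and $2k$.

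Now substitute into $\delta_k=\sum_s\gamma_k(s)/\binom{2k-1}{s}$ and exchange the order of summation. The inner sum over $s$ is computed exactly from the Vandermonde convolution $\sum_s\binom{u+s}{s}\binom{N-u-s}{M-s}=\binom{N+1}{M}$, which yields $\sum_s\binom{M}{s-u}/\binom{N}{s}=(N+1)\,u!\,(N-M-u)!/(N+1-M)!$ for $M\le N-u$; with $N=2k-1$, $M=2k-2a-7$, $u=a+c'$ this equals $2k\cdot\frac{(a+c')!\,(a+6-c')!}{(2a+7)!}$, a clean multiple of $k$. Summing the three values $c'=2,3,4$ with weights $1,2,1$ and using $(a+2)!(a+4)!+[(a+3)!]^2=(2a+7)(a+2)!(a+3)!$, the per-$a$ term collapses to $4(a+1)\frac{(a+2)!(a+3)!}{(2a+6)!}$. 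Truncating the nonnegative $a$-sum at any fixed $A$, restricting $s$ to $[A+2,\,2k-3-A]$, and letting $k\to\infty$ (so that the omitted $s$, the base cases, and the $[\,s=k-1\,]$ correction are each $o(k)$) gives
\[
\liminf_{k\to\infty}\frac{\delta_k}{k}\ \ge\ 4\sum_{a=0}^{A}(a+1)\,\frac{(a+2)!\,(a+3)!}{(2a+6)!}\qquad\text{for every }A .
\]
The right-hand side is rational, and for $A$ large enough it exceeds $\frac{4045443}{37182145}$; indeed the full series equals $\int_0^1\frac{4t^2(1-t)^3}{(t^2-t+1)^2}\,dt=\frac{10}{3}-\frac{16\pi\sqrt3}{27}\approx 0.10880$, which is strictly larger, so a clean rational lower bound is what one records.

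The hard part will be the asymptotic bookkeeping of the third step: verifying that $w(a)$ is genuinely $a+1$ on the relevant range, and that exchanging summation orders, discarding the out-of-range values of $s$, the base-case terms, and the $[\,s=k-1\,]$ correction each cost only $o(k)$, so that the finite, explicitly rational truncated sum is a legitimate asymptotic lower bound for $\delta_k/k$.
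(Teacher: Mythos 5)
Your proposal follows the same overall route as the paper: subtract the recursions of Propositions~\ref{prop:recF} and~\ref{prop:recN} to get a first-order recursion for $\gamma_k(s)=\alpha_F(\mG_k,s)-\alpha_N(\mG_k,s)$, unroll it with an $a=i+j$ regrouping that produces the multiplicity $a+1$, estimate the resulting binomial-ratio sums, and truncate at a finite level to obtain a rational bound (the paper stops at $a\le 9$, which gives exactly $\tfrac{4045443}{37182145}$). Where you genuinely deviate, and in a way worth noting, is in the evaluation of the inner sum over $s$: instead of the paper's term-by-term polynomial expansion of $\binom{2k-1-2\ell}{s-\ell}/\binom{2k-1}{s}$ carried out at length in the Appendix, you invoke the exact hypergeometric identity $\sum_s\binom{M}{s-u}/\binom{N}{s}=(N+1)\,u!\,(N-M-u)!/(N+1-M)!$, and you also skip the paper's intermediate Pascal collapsing of $\Sigma_2+2\Sigma_3+\Sigma_4$ into a single sum, instead combining the three pieces at the end with weights $(1,2,1)$ via the tidy identity $(a+2)!(a+4)!+[(a+3)!]^2=(2a+7)(a+2)!(a+3)!$ to get the clean per-$a$ constant $4(a+1)\tfrac{(a+2)!(a+3)!}{(2a+6)!}$. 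This reproduces exactly the paper's constants $1/15,1/70,1/315,\dots$ and the same final lower bound, while substantially shortening the asymptotic bookkeeping; the open details you flag (the endpoint/base-case/$[s=k-1]$ corrections being $o(k)$, and $w(a)=a+1$ on the bulk range) are the same ones the paper also handles implicitly, so the plan is sound.
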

\begin{proof}
Since $\alpha_F(\mG_k,s)=\binom{2k}{s}-\beta_F(\mG_k,s)$ and  $\alpha_N(\mG_k,s)=\binom{2k}{s}-\beta_N(\mG_k,s)$, from the formulas in Propositions~\ref{prop:recF} and~\ref{prop:recN} we obtain that for $k \ge 4$ and $2 \le s \le 2k-3$ we have
\begin{align*}
    \alpha_F(\mG_k,s)-\alpha_N(\mG_k,s) = &\beta_N(\mG_k,s)-\beta_F(\mG_k,s) \\
    \ge &\beta_N(\mG_{k-1},s-1)-\beta_F(\mG_{k-1},s-1) +\binom{2k-3}{s}+\displaystyle\sum_{j=0}^{s-1}\binom{2k-2j-5}{s-j-1} \\ &+ \displaystyle\sum_{j=0}^{s-2}\binom{2k-2j-7}{s-j-2} + 2\displaystyle\sum_{j=0}^{s-3}\binom{2k-2j-7}{s-j-3}+ \displaystyle\sum_{j=0}^{s-4}\binom{2k-2j-7}{s-j-4}\\ &-\displaystyle\sum_{j=0}^{s}\binom{2k-2j-3}{s-j} -1 \\
    = &\beta_N(\mG_{k-1},s-1)-\beta_F(\mG_{k-1},s-1) + \displaystyle\sum_{j=0}^{s-2}\binom{2k-2j-7}{s-j-2} \\
    &+ 2\displaystyle\sum_{j=0}^{s-3}\binom{2k-2j-7}{s-j-3}+ \displaystyle\sum_{j=0}^{s-4}\binom{2k-2j-7}{s-j-4} -1 \\
    \ge &\beta_N(\mG_{k-1},s-1)-\beta_F(\mG_{k-1},s-1) + \displaystyle\sum_{j=0}^{s-3}\binom{2k-2j-6}{s-j-2} \\
    &+ \displaystyle\sum_{j=0}^{s-4}\binom{2k-2j-6}{s-j-3} \\
    \ge &\beta_N(\mG_{k-1},s-1)-\beta_F(\mG_{k-1},s-1) +\displaystyle\sum_{j=0}^{s-4}\binom{2k-2j-5}{s-j-2} \\
\end{align*}
where we used the well-known identity for binomial coefficients stating that for integers satisfying $a > b \ge 0$ we have $\binom{a}{b}=\binom{a-1}{b-1}+\binom{a-1}{b}$. Note that the inequality above implies by induction that $\beta_N(\mG_k,s)-\beta_F(\mG_k,s) \ge 0$. Recalling that our aim is to estimate the ratio $\delta_k/k$ as $k$ goes to infinity, we obtain from the above inequality that for large values of $k$,
%From the above, we obtain \textcolor{blue}{Should we say that this is the case as we are sending $k$ to infinity? in small cases we may not have j=9...}
\begin{align} \label{eq:ess}
    \delta_k &\ge \sum_{s=1}^{2k-3} \frac{\beta_N(\mG_{k-1},s-1)-\beta_F(\mG_{k-1},s-1) + \sum_{j=0}^{s-4}\binom{2k-2j-5}{s-j-2} }{\binom{2k-1}{s}} \nonumber \\
    &= \sum_{s=1}^{2k-3} \frac{\beta_N(\mG_{k-1},s-1)-\beta_F(\mG_{k-1},s-1)}{\binom{2k-1}{s}} +\sum_{s=1}^{2k-3} \frac{ \sum_{j=0}^{s-4}\binom{2k-2j-5}{s-j-2} }{\binom{2k-1}{s}}  \\
    &\ge \sum_{s=1}^{2k-3} \frac{\beta_N(\mG_{k-1},s-1)-\beta_F(\mG_{k-1},s-1)}{\binom{2k-1}{s}} +
    \sum_{s=13}^{2k-3} \frac{ \sum_{j=0}^{9}\binom{2k-2j-5}{s-j-2} }{\binom{2k-1}{s}}. \nonumber
\end{align}
We have the following asymptotic estimates for $\sum_{s=13}^{2k-3}{\binom{2k-2j-5}{s-j-2}}/{\binom{2k-1}{s}}$ as $k \to \infty$  for each~$j \in [9]$, which we prove in Appendix~\ref{sec:apen}.
\begin{restatable}{claim}{clest} \label{cl:est}
We have
\begin{align*}
     &\sum_{s=13}^{2k-3}\frac{\binom{2k-5}{s-2}}{\binom{2k-1}{s}}  \sim \frac{k}{15}, \quad \sum_{s=13}^{2k-3}\frac{\binom{2k-7}{s-3}}{\binom{2k-1}{s}}  \sim \frac{k}{70}, \quad 
     \sum_{s=13}^{2k-3}\frac{\binom{2k-9}{s-4}}{\binom{2k-1}{s}}  \sim \frac{k}{315}, \quad \sum_{s=13}^{2k-3}\frac{\binom{2k-11}{s-5}}{\binom{2k-1}{s}}  \sim \frac{k}{1386}, \\
     &\sum_{s=13}^{2k-3}\frac{\binom{2k-13}{s-6}}{\binom{2k-1}{s}}  \sim \frac{k}{6006}, \quad \sum_{s=13}^{2k-3}\frac{\binom{2k-15}{s-7}}{\binom{2k-1}{s}}  \sim \frac{k}{25740}, \quad 
     \sum_{s=13}^{2k-3}\frac{\binom{2k-17}{s-8}}{\binom{2k-1}{s}}  \sim \frac{k}{109395},  \\
     &\sum_{s=13}^{2k-3}\frac{\binom{2k-19}{s-9}}{\binom{2k-1}{s}}  \sim \frac{k}{461890}, \quad
     \sum_{s=13}^{2k-3}\frac{\binom{2k-21}{s-10}}{\binom{2k-1}{s}}  \sim \frac{k}{1939938}, \quad \sum_{s=13}^{2k-3}\frac{\binom{2k-23}{s-11}}{\binom{2k-1}{s}}  \sim \frac{k}{8112468}
\end{align*}
as $k \to \infty$.
\end{restatable}
From Claim~\ref{cl:est} we obtain  
\begin{align*}
    \lim_{k \to \infty} \frac{\delta_k}{k} &\ge 
    \lim_{k \to \infty} \sum_{s=1}^{2k-3} \frac{\beta_N(\mG_{k-1},s-1)-\beta_F(\mG_{k-1},s-1)}{k\binom{2k-1}{s}}\\
    &+\frac{1}{15}+ \frac{1}{70}+\frac{1}{315}+\frac{1}{1386} +\frac{1}{6006} +\frac{1}{25740} +\frac{1}{109395}+\frac{1}{461890}+\frac{1}{1939938}+\frac{1}{8112468}.
\end{align*}
We also have 
\begin{align*}
\sum_{s=1}^{2k-3} \frac{\beta_N(\mG_{k-1},s-1)-\beta_F(\mG_{k-1},s-1)}{\binom{2k-1}{s}} 
\ge \;  &\sum_{s=1}^{2k-3} \frac{\beta_N(\mG_{k-2},s-2)-\beta_F(\mG_{k-2},s-2)}{\binom{2k-1}{s}} \\ &+ \sum_{s=12}^{2k-3} \frac{ \sum_{j=0}^{8}\binom{2k-2j-7}{s-j-3} }{\binom{2k-1}{s}}
\end{align*}
as $k \to \infty$. Again using the estimates from Claim~\ref{cl:est}, we have
\begin{align*}
\lim_{k \to \infty}\sum_{s=1}^{2k-3} &\frac{\beta_N(\mG_{k-1},s-1)-\beta_F(\mG_{k-1},s-1)}{k\binom{2k-1}{s}} \ge  \lim_{k \to \infty}\sum_{s=1}^{2k-3} \frac{\beta_N(\mG_{k-2},s-2)-\beta_F(\mG_{k-2},s-2)}{k\binom{2k-1}{s}}\\
&+ \frac{1}{70}+\frac{1}{315}+\frac{1}{1386} +\frac{1}{6006} +\frac{1}{25740} +\frac{1}{109395}+\frac{1}{461890}+\frac{1}{1939938}+\frac{1}{8112468}.
 \end{align*}
We can proceed in the same way to obtain
\begin{align*}
\lim_{k \to \infty}\sum_{s=1}^{2k-3} &\frac{\beta_N(\mG_{k-2},s-2)-\beta_F(\mG_{k-2},s-2)}{k\binom{2k-1}{s}} \ge  \lim_{k \to \infty}\sum_{s=1}^{2k-3} \frac{\beta_N(\mG_{k-3},s-3)-\beta_F(\mG_{k-3},s-3)}{k\binom{2k-1}{s}}\\
&+\frac{1}{315}+\frac{1}{1386} +\frac{1}{6006} +\frac{1}{25740} +\frac{1}{109395}+\frac{1}{461890}+\frac{1}{1939938}+\frac{1}{8112468}.
 \end{align*}
as $k \to \infty$. Going on with the same procedure seven more times, we finally obtain
\begin{align*}
    \lim_{k \to \infty} \frac{\delta_k}{k} &\ge \lim_{k \to \infty}\sum_{s=1}^{2k-3}\frac{\beta_N(\mG_{k-10},s-10)-\beta_F(\mG_{k-10},s-10)}{k\binom{2k-1}{s}} \\
    &+ \frac{1}{15}+ \frac{2}{70}+\frac{3}{315}+\frac{4}{1386} +\frac{5}{6006} +\frac{6}{25740} +\frac{7}{109395}+\frac{8}{461890}+\frac{9}{1939938}+\frac{10}{8112468} \\
    &= \lim_{k \to \infty}\sum_{s=1}^{2k-3}\frac{\beta_N(\mG_{k-10},s-10)-\beta_F(\mG_{k-10},s-10)}{k\binom{2k-1}{s}} + \frac{4045443}{37182145} \\
    &\ge \frac{4045443}{37182145}
\end{align*}
as $k \to \infty$.
Note that in the above argument we used that $\beta_N(\mG_{k-10},s-10)-\beta_F(\mG_{k-10},s-10)\geq 0$, which is true by induction as recalled at the beginning of the proof.
\end{proof}

Finally, with the asymptotic estimate in Theorem~\ref{thm:bb} we can prove Conjecture~\ref{con:daniella}.

\begin{corollary} \label{cor:proof}
We have
\begin{align*}
    \lim_{k \to \infty} \frac{\E[\tau_F(\mG_k)]}{k} \le \frac{70318847}{74364290}.
\end{align*}
\end{corollary}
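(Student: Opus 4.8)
The plan is to read off the corollary directly from Lemma~\ref{lem:delta} together with the asymptotic lower bound on $\delta_k$ proved in Theorem~\ref{thm:bb}; everything after that is a short arithmetic computation. The only points that need genuine (if routine) verification are that $\mG_k$ meets the hypotheses of Lemma~\ref{lem:delta} and that the quantity $\delta$ in that lemma coincides with the $\delta_k$ of Theorem~\ref{thm:bb}.

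First I would record the structural facts about $\mG_k$ (over $\F_2$, as in Conjecture~\ref{con:daniella}). The set $\mG_k$ has $n=2k$ points; for $k\ge 3$ these are pairwise distinct, since each $E_{i,i+1}=E_i+E_j$ has Hamming weight $2$ and hence is non-fundamental and distinct from the other $E_{j,j+1}$. It has rank $k$ because it contains all $k$ fundamental points. The cyclic symmetry $E_i\mapsto E_{i+1}$, $E_{i,i+1}\mapsto E_{i+1,i+2}$ of Definition~\ref{def:daniella} shows that $\mG_k$ is balanced within the fundamental part and within the non-fundamental part, so $\alpha_F(\mG_k,s)$ and $\alpha_N(\mG_k,s)$ are well defined. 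Finally, the code generated by $\mG_k$ has minimum distance $d=3$: the three columns $E_1,E_2,E_{1,2}$ sum to $0$ over $\F_2$, so $d\le 3$, while no two columns coincide as projective points, so a weight-$\le 2$ codeword is impossible and $d\ge 3$.

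Now I would apply Lemma~\ref{lem:delta}. Since $n=2k$ and $d=3$ we have $n-d=2k-3$, so the $\delta$ in Lemma~\ref{lem:delta} is exactly $\delta_k=\sum_{s=1}^{2k-3}\bigl(\alpha_F(\mG_k,s)-\alpha_N(\mG_k,s)\bigr)/\binom{2k-1}{s}$ from Theorem~\ref{thm:bb}. Lemma~\ref{lem:delta} then yields $\E[\tau_F(\mG_k)]=k-\frac{n-k}{n}\delta_k=k-\tfrac12\delta_k$, hence $\E[\tau_F(\mG_k)]/k=1-\delta_k/(2k)$. Passing to the limit and invoking Theorem~\ref{thm:bb},
\[
\lim_{k\to\infty}\frac{\E[\tau_F(\mG_k)]}{k}=1-\frac12\lim_{k\to\infty}\frac{\delta_k}{k}\le 1-\frac{1}{2}\cdot\frac{4045443}{37182145}=\frac{74364290-4045443}{74364290}=\frac{70318847}{74364290},
\]
and since $\tfrac{70318847}{74364290}=0.94559\ldots<0.9456$ this settles the asymptotic claim of Conjecture~\ref{con:daniella}.

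The computational weight of this section — the recursions of Propositions~\ref{prop:recF} and~\ref{prop:recN}, the telescoping of the differences $\beta_N(\mG_{k-j},s-j)-\beta_F(\mG_{k-j},s-j)$, and the ten binomial-ratio asymptotics of Claim~\ref{cl:est} — has already been absorbed into Theorem~\ref{thm:bb}, so there is essentially no obstacle left for the corollary itself. The only place where a little care is required is confirming that $d=3$, so that the upper summation limit $n-d$ in Lemma~\ref{lem:delta} is precisely $2k-3$ and thus $\delta=\delta_k$; together with the (immediate) distinctness and balancedness of $\mG_k$, this makes Lemma~\ref{lem:delta} applicable and the displayed chain of equalities and inequalities valid.
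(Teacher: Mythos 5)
Your argument is essentially the paper's: both simply combine Lemma~\ref{lem:delta} (giving $\E[\tau_F(\mG_k)] = k - \tfrac{n-k}{n}\delta_k = k - \tfrac12\delta_k$ since $n=2k$) with the asymptotic lower bound on $\delta_k/k$ from Theorem~\ref{thm:bb} and do the arithmetic. What you add beyond the paper's one-line proof is a careful check that $\mG_k$ satisfies the hypotheses of Lemma~\ref{lem:delta} (distinctness, rank, balancedness via the cyclic symmetry), which is worthwhile and correct, and a verification that the $\delta$ of Lemma~\ref{lem:delta} is the $\delta_k$ of Theorem~\ref{thm:bb}.

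The one flaw is the justification of $d=3$. Your reasons — ``three columns $E_1,E_2,E_{1,2}$ sum to $0$, so $d\le 3$'' and ``no two columns coincide, so $d\ge 3$'' — are the criteria for the minimum distance of the code with \emph{parity-check} matrix $G$ (i.e.\ the dual code), not the code generated by $G$. For the code in question, $d = n - \max_{H}|\mG_k\cap H|$ where $H$ ranges over hyperplanes: the hyperplane $x_1=0$ contains the $2k-3$ points $E_2,\dots,E_k,E_{2,3},\dots,E_{k-1,k}$, giving $d\le 3$, while a check over all hyperplanes (or the observation that any $2k-2$ points of $\mG_k$ still span, since each $E_i$ is the sum of a neighbor and the joining $E_{j,j+1}$) gives $d\ge 3$. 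Since the conclusion $d=3$ is nonetheless correct — and, independently, Propositions~\ref{prop:recF} and~\ref{prop:recN} give $\beta_F(\mG_k,s)=\beta_N(\mG_k,s)=0$ for $s\ge 2k-2$, so $\delta=\delta_k$ holds regardless of the exact value of $d$ — this slip does not affect the validity of your proof, but the reasoning for $d$ should be replaced by the hyperplane argument.
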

\begin{proof}
From Lemma~\ref{lem:delta} and Theorem~\ref{thm:bb} we  obtain 
\begin{align*}
    \lim_{k \to \infty} \frac{\E[\tau_F(\mG_k)]}{k} \le \lim_{k \to \infty}1-\frac{\delta_k}{2k} \le 1-\frac{4045443}{2 \cdot 37182145} = \frac{70318847}{74364290}
\end{align*}
concluding the proof.
\end{proof}

Since ${70318847}/{74364290} \approx 0.945599655 < 0.9456$, Corollary~\ref{cor:proof} proves Conjecture~\ref{con:daniella}.

\section{Conclusion and Future Directions} \label{sec:concl}

In this paper, we study the Random Access Problem, which seeks to compute the expected number of samples needed to recover an information strand. Unlike previous approaches, we take a geometric perspective on the problem. This approach allowed us to identify properties of codes that lead to better performance in reducing the random access expectation. As a result, we developed a construction for $k = 3$ that outperforms prior attempts at minimizing the random access expectation. While our method could be extended to higher dimensions, the necessary computations become significantly more complex, so we leave this as an open direction for future research.

In addition to our construction, we also applied a result from~\cite{gruica2024reducing}, that implies that showing that recovering information strands is more likely than recovering non-information strands, is equivalent to proving that the random access expectation is less than $k$. This was the main tool in addressing a conjecture posed in~\cite{bar2023cover}.

We conclude the paper with some other possible directions for future research:

Even though in this paper we only investigated the expected number of draws in order to recover \emph{one} point, in practice it would be interesting to study the expected number of draws for recovering a subset of fundamental points, and determine codes for which this expectation is small (or at least smaller than what one would get using an identity code). The construction coming from balanced quasi-arcs makes it not only easy to give a closed formula for the random access expectation as treated in this paper, but also for the case of retrieving \emph{two} fundamental points. We compared our construction for this case with the expectation of MDS codes with a computer algebra program, and it outperforms MDS codes, i.e., the expected number of draws is smaller than for the codes arising from balanced quasi-arcs. We believe it would be interesting to investigate this in more detail, and to try to construct codes that reduce the random access expectation for recovering two fundamental points, possibly taking a similar geometric approach as done in this paper.

Another natural direction would be to improve the asymptotic estimates used in Corollary~\ref{cor:proof}. More precisely, in establishing the asymptotic upper bound of Corollary~\ref{cor:proof}, we applied several approximate estimates. Consequently, the actual limit of $\lim_{k \to \infty} {\E[\tau_F(\mG_k)]}/{k}$ is smaller than what was initially conjectured in~\cite{bar2023cover}. We provide explicit evaluations of ${\E[\tau_F(\mG_k)]}/{k}$ for various values of $k$ in Table~\ref{tb}, that we obtained with a computer algebra program. One can see, that already for $k=20$ we have ${\E[\tau_F(\mG_k)]}/{k} < 0.9456$ as conjectured in~\cite{bar2023cover} for $k \to \infty$. We believe that the real asymptotic upper bound should lie further below $0.9456k$.

\begin{table}[h!]
    \centering
    \begin{tabular}{|c|c|c|c|c|c|c|c|c|} 
 \hline
$k$ & 4 & 5 & 6 & 7 & 8 & 9 & 10 & 20 \\
\hline
${\E[\tau_F(\mG_k)]}/{k}$ & 0.95952 & 0.94921 & 0.94653 & 0.94584 & 0.94566 & 0.94562 & 0.94560 & 0.94559 \\
\hline 
\end{tabular}
\caption{${\E[\tau_F(\mG_k)]}/{k}$ for various values of $k$.}
\label{tb}
\end{table}

One way of refining the upper bound of $0.9456k$ would be to evaluate more values for $j$ in the sum beyond the 10 currently considered in the estimation right after~\eqref{eq:ess}, and getting a tighter lower bound on $\delta_k$ (see Theorem~\ref{thm:bb}). Additionally, extending the recursion to consider more than 10 previous steps could also contribute to a tighter upper bound.

% For example, again using a computer algebra program, if we let $k=100$ and using the following estimate, that can be obtained in a similar manner as what was done in the proof of Theorem~\ref{thm:bb} (but this time evaluating the sum in~\eqref{eq:ess} for 20 values of $j$) and going back 20 steps in the recursion, we have
% \begin{align*}
%     \lim_{k \to \infty} \frac{\delta_k}{k} &\ge \lim_{k \to \infty}\sum_{s=1}^{2k-3}\frac{\beta_N(\mG_{k-20},s-20)-\beta_F(\mG_{k-10},s-10)}{k\binom{2k-1}{s}}+\frac{1}{k}\sum_{t=0}^{20} \sum_{s=1}^{2k-3} \frac{ \sum_{j=0}^{s-4}\binom{2k-2j-2t-5}{s-j-2} }{\binom{2k-1}{s}}
% \end{align*}
% we see that
% \begin{align*}
%     \frac{1}{k}\sum_{t=0}^{20} \sum_{s=1}^{2k-3} \frac{ \sum_{j=0}^{s-4}\binom{2k-2j-2t-5}{s-j-2} }{\binom{2k-1}{s}} \approx 0.135135.
% \end{align*}
% This means that already for $k=100$, and some more precise estimates, we would obtain ${\E[\tau_F(\mG_k)]}/{k} \le 0.932432$.

\section*{Acknowledgments}
This work was supported by a research grant (VIL”52303”) from Villum Fonden.
The third author is very grateful for the hospitality of the Algebra group at DTU, he was visiting DTU during the development of this research in October 2024.
The research of the third author was also supported by the project ``COMPACT'' of the University of Campania ``Luigi Vanvitelli'' and was partially supported by the Italian National Group for Algebraic and Geometric Structures and their Applications (GNSAGA - INdAM).

\bibliographystyle{ieeetr}
\bibliography{ourbib}
\clearpage

\begin{appendix}
\section{Appendix}\label{sec:apen}
\begin{proof}[Proof of Claim~\ref{cl:est}]
\clest*
In this proof we will repeatedly use the following asymptotic estimate.
\begin{align} \label{eq:summ}
    \sum_{i=0}^{m} T^i \sim \frac{1}{m+1} T^{m+1} \quad \textnormal{as $m \to \infty$.}
\end{align}
We prove all the asymptotic estimates separately, but will use the following straightforward expansion in all of them.
\begin{align*}
    \frac{\binom{2k-1-2\ell}{s-\ell}}{\binom{2k-1}{s}} = \frac{\prod_{i=0}^{\ell-1}(s-i) \prod_{i=1}^{\ell}(2k-s-i)}{\prod_{i=1}^{2\ell}(2k-i)} \quad \textnormal{ for all $\ell \in [s]$.}
\end{align*}
\begin{itemize}
    \item[(i)] We have
    \begin{align*}
        \sum_{s=13}^{2k-3}\frac{s(s-1) (2k-s-1)(2k-s-2)}{\prod_{i=1}^{4}(2k-i)} &= \sum_{s=13}^{2k-3}\frac{{s(s-1)(4k^2-4ks-6k+s^2+3s+2)}}{\prod_{i=1}^{4}(2k-i)} \\
        &\sim \frac{1}{2^4k^4}\sum_{s=13}^{2k-3}{(4k^2s^2-4ks^3+s^4)} \\
        &= \frac{4k^2\sum_{s=13}^{2k-3}s^2-4k\sum_{s=13}^{2k-3}s^3+\sum_{s=13}^{2k-3}s^4}{(2k)^4} \\
        &\sim \frac{4k^2(2k)^3/3-4k(2k)^4/4+(2k)^5/5}{2^4k^4} \\ 
        &= \frac{k}{15}
    \end{align*}
    as $k \to \infty$, where in the last asymptotic estimate we used~\eqref{eq:summ}.
    
\item[(ii)] We have
\begin{align*}
        \sum_{s=13}^{2k-3}\frac{\prod_{i=0}^{2}(s-i) \prod_{i=1}^{3}(2k-s-i)}{\prod_{i=1}^{6}(2k-i)} &= \sum_{s=13}^{2k-3}\frac{{\prod_{i=0}^{2}(s-i)f(k,s)}}{\prod_{i=1}^{6}(2k-i)} \\
        &\sim \sum_{s=13}^{2k-3}\frac{{s^3f(k,s)}}{(2k)^6}
\end{align*}
where 
\begin{align*}
    f(k,s)=8k^3 - 12k^2 s - 36k^2 + 6ks^2 + 36ks + 44k - s^3 - 6s^2 - 11s - 6.
\end{align*}
Asymptotically, as $k \to \infty$, the terms of degree $3$ of $f(k,s)$ are preponderant, and so we obtain
\begin{align*}
    \sum_{s=13}^{2k-3}\frac{{s^3f(k,s)}}{(2k)^6} &\sim \sum_{s=13}^{2k-3}\frac{{s^3(8k^3 - 12k^2s +6ks^2 - s^3)}}{(2k)^6} \\
    &\sim \frac{{8k^3(2k)^4/4 - 12k^2(2k)^5/5 + 6k(2k)^6/6 - (2k)^7/7}}{(2k)^6} \\
    &= \frac{k}{70}
\end{align*}
where again we used~\eqref{eq:summ}.
\item[(iii)]   We have
\begin{align*}
       \sum_{s=13}^{2k-3}\frac{\prod_{i=0}^{3}(s-i) \prod_{i=1}^{4}(2k-s-i)}{\prod_{i=1}^{8}(2k-i)} &= \sum_{s=13}^{2k-3}\frac{{\prod_{i=0}^{3}(s-i)f(k,s)}}{\prod_{i=1}^{8}(2k-i)} \\
       &\sim \sum_{s=13}^{2k-3}\frac{{s^4f(k,s)}}{(2k)^8}
    \end{align*}
    as $k \to \infty$ where
    \begin{align*}
        f(k,s) =\;   &16k^4 - 32k^3 s - 160k^3 + 24k^2 s^2 + 192k^2 s + 292k^2 - 8k s^3 \\
        &- 96k s^2 - 268k s - 208k + s^4 + 10s^3 + 35s^2 + 50s + 24.
    \end{align*}
    Again, only the terms of degree $4$ play a role asymptotically and so we obtain
    \begin{align*}
        \sum_{s=13}^{2k-3}\frac{{s^4f(k,s)}}{(2k)^8} &\sim \sum_{s=13}^{2k-3}\frac{{s^4(16k^4 - 32k^3 s  + 24k^2 s^2 - 8k s^3 +s^4)}}{(2k)^8} \\
        &\sim \frac{{16k^4(2k)^5/5 - 32k^3(2k)^6/6  + 24k^2 (2k)^7/7 - 8k(2k)^8/8 +(2k)^9/9}}{(2k)^8} \\
        &= \frac{k}{315}
    \end{align*}
    as $k \to \infty$.
\item[(iv)]   We have
\begin{align*}
        \sum_{s=13}^{2k-3}\frac{\prod_{i=0}^{4}(s-i) \prod_{i=1}^{5}(2k-s-i)}{\prod_{i=1}^{10}(2k-i)} &= \sum_{s=13}^{2k-3}\frac{{\prod_{i=0}^{4}(s-i)f(k,s)}}{\prod_{i=1}^{10}(2k-i)} \\
        &\sim \sum_{s=13}^{2k-3}\frac{{s^5f(k,s)}}{(2k)^{10}}
        \end{align*}
        where
    \begin{align*}
        f(k,s) = \;  &32k^5 - 80k^4 s - 240k^4 + 80k^3 s^2 + 480k^3 s + 680k^3 - 40k^2 s^3 \\
        &- 360k^2 s^2 - 1020k^2 s - 900k^2 + 10k s^4 + 120k s^3 + 510k s^2 \\
        &+ 900k s + 548k - s^5 - 15s^4 - 85s^3 - 225s^2 - 274s - 120.
    \end{align*}
    Only the terms of degree $5$ play a role asymptotically and so we obtain
         \begin{align*}
        \sum_{s=13}^{2k-3}\frac{{s^5f(k,s)}}{(2k)^{10}} &\sim \sum_{s=13}^{2k-3}\frac{{s^5(32k^5 - 80k^4 s + 80k^3 s^2  - 40k^2 s^3 + 10k s^4- s^5)}}{(2k)^{10}} \\
        &\sim \frac{k}{1386}
    \end{align*}
\item[(v)]   We have
\begin{align*}
        \sum_{s=13}^{2k-3}\frac{\prod_{i=0}^{5}(s-i) \prod_{i=1}^{6}(2k-s-i)}{\prod_{i=1}^{12}(2k-i)} &= \sum_{s=13}^{2k-3}\frac{{\prod_{i=0}^{5}(s-i)f(k,s)}}{\prod_{i=1}^{12}(2k-i)} \\
        &\sim \sum_{s=13}^{2k-3}\frac{{s^6f(k,s)}}{(2k)^{12}}
        \end{align*}
        where
    \begin{align*}
        f(k,s) = \; &64k^6 - 192k^5 s - 672k^5 + 240k^4 s^2 + 1680k^4 s + 2800k^4 - 160k^3 s^3  \\
        &- 1680k^3 s^2 - 5600k^3 s - 5880k^3 + 60k^2 s^4 + 840k^2 s^3 + 4200k^2 s^2\\
        & + 8820k^2 s + 6496k^2 - 12k s^5 - 120k s^4 - 1400k s^3 - 4410k s^2 - 6496k s\\
        & - 3528k + s^6 + 21s^5 + 175s^4 + 735s^3 + 1624s^2 + 1764s + 720.
    \end{align*}
     Only the terms of degree $6$ play a role asymptotically and so we obtain
         \begin{align*}
        \sum_{s=13}^{2k-3}\frac{{s^6f(k,s)}}{(2k)^{12}} &\sim \sum_{s=13}^{2k-3}\frac{{s^6(64k^6 - 192k^5 s + 240k^4 s^2 - 160k^3 s^3  + 60k^2 s^4 -12ks^5+ s^6 )}}{(2k)^{12}} \\
        &\sim \frac{k}{6006}
    \end{align*}
    as $k \to \infty$.
\item[(vi)]   We have
\begin{align*}
        \sum_{s=13}^{2k-3}\frac{\prod_{i=0}^{6}(s-i) \prod_{i=1}^{7}(2k-s-i)}{\prod_{i=1}^{14}(2k-i)} &= \sum_{s=13}^{2k-3}\frac{{\prod_{i=0}^{6}(s-i)f(k,s)}}{\prod_{i=1}^{14}(2k-i)}
        \\
        &\sim \sum_{s=13}^{2k-3}\frac{{s^7f(k,s)}}{(2k)^{14}}
        \end{align*}
        where
    \begin{align*}
        f(k,s) =\; &128k^7 - 448k^6 s - 1792k^6 + 672k^5 s^2 + 5376k^5 s + 10304k^5 - 560k^4 s^3 \\ 
        &- 6720k^4 s^2  - 25760k^4 s - 31360k^4
+ 280k^3 s^4 + 4480k^3 s^3 + 25760k^3 s^2 \\ 
        &+ 62720k^3 s + 54152k^3 - 84k^2 s^5 - 1680k^2 s^4 - 12880k^2 s^3 - 47040k^2 s^2 \\ 
        &- 81228k^2 s - 52528k^2 + 14k s^6 + 336k s^5 + 3220k s^4 + 15680k s^3 \\ 
        &+ 40614k s^2 + 52528k s + 26136k - s^7 - 28s^6 - 322s^5 - 1960s^4 \\ 
        &- 6769s^3 - 13132s^2 - 13068s - 5040.
    \end{align*}
    Only the terms of degree $7$ play a role asymptotically and so we obtain
         \begin{align*}
        \sum_{s=13}^{2k-3}\frac{{s^7f(k,s)}}{(2k)^{14}} \sim \sum_{s=13}^{2k-3}\frac{{s^7g(k,s)}}{(2k)^{14}} \sim \frac{k}{25740}
    \end{align*}
    as $k \to \infty$ where 
    \begin{align*}
        g(k,s) = 128k^7 - 448k^6 s + 672k^5 s^2 - 560k^4 s^3 + 280k^3 s^4  - 84k^2 s^5+ 14k s^6 - s^7.
    \end{align*}
\item[(vii)]   We have
\begin{align*}
        \sum_{s=13}^{2k-3}\frac{\prod_{i=0}^{7}(s-i) \prod_{i=1}^{8}(2k-s-i)}{\prod_{i=1}^{16}(2k-i)} &= \sum_{s=13}^{2k-3}\frac{{\prod_{i=0}^{7}(s-i)f(k,s)}}{\prod_{i=1}^{16}(2k-i)}
        \\
        &\sim \sum_{s=13}^{2k-3}\frac{{s^8f(k,s)}}{(2k)^{16}}
        \end{align*}
        where
    \begin{align*}
        f(k,s) = \; &256k^8 - 1024k^7 s - 4608k^7 + 1792k^6 s^2 + 16128k^6 s + 34944k^6 - 1792k^5 s^3 \\
&- 24192k^5 s^2 - 104832k^5 s - 145152k^5 + 1120k^4 s^4 + 20160k^4 s^3 + 131040k^4 s^2 \\
&+ 362880k^4 s + 359184k^4 - 448k^3 s^5 - 10080k^3 s^4 - 87360k^3 s^3 - 362880k^3 s^2 \\
&- 718368k^3 s - 538272k^3 + 112k^2 s^6 + 3024k^2 s^5 + 32760k^2 s^4 + 181440k^2 s^3 \\
&+ 538776k^2 s^2 + 807408k^2 s + 472496k^2 - 16k s^7 - 504k s^6 - 6552k s^5 \\
&- 45360k s^4 - 179592k s^3 - 403704k s^2 - 472496k s - 219168k + s^8 + 36 s^7 \\
&+ 546 s^6 + 4536 s^5 + 22449 s^4 + 67284 s^3 + 118124 s^2 + 109584 s + 40320.
    \end{align*}
    Only the terms of degree $8$ play a role asymptotically and so we obtain
         \begin{align*}
        \sum_{s=13}^{2k-3}\frac{{s^8f(k,s)}}{(2k)^{16}} \sim \sum_{s=13}^{2k-3}\frac{{s^8g(k,s)}}{(2k)^{16}} \sim \frac{k}{109395}
    \end{align*}
    as $k \to \infty$ where 
    \begin{align*}
        g(k,s) =\; &256k^8 - 1024k^7 s + 1792k^6 s^2 - 1792k^5 s^3  + 1120k^4 s^4 - 448k^3 s^5 \\
        &+ 112k^2 s^6 - 16k s^7  + s^8.
    \end{align*}
\item[(viii)]   We have
\begin{align*}
        \sum_{s=13}^{2k-3}\frac{\prod_{i=0}^{8}(s-i) \prod_{i=1}^{9}(2k-s-i)}{\prod_{i=1}^{18}(2k-i)} &= \sum_{s=13}^{2k-3}\frac{{\prod_{i=0}^{8}(s-i)f(k,s)}}{\prod_{i=1}^{18}(2k-i)}
        \\
        &\sim \sum_{s=13}^{2k-3}\frac{{s^9f(k,s)}}{(2k)^{18}}
        \end{align*}
        where
    \begin{align*}
        f(k,s) =\; &512k^9 - 2304k^8 s - 11520k^8 + 4608k^7 s^2 + 46080k^7 s + 111360k^7 - 5376k^6 s^3 \\
&- 80640k^6 s^2 - 389760k^6 s - 604800k^6 + 4032k^5 s^4 + 80640k^5 s^3 + 584640k^5 s^2 \\
&+ 1814400k^5 s + 2024736k^5 - 2016k^4 s^5 - 50400k^4 s^4 - 487200k^4 s^3 \\
&- 2268000k^4 s^2- 5061840k^4 s - 4309200k^4 + 672k^3 s^6 + 20160k^3 s^5 \\
&+ 243600k^3 s^4 + 1512000k^3 s^3 + 5061840k^3 s^2 + 8618400k^3 s+ 5789440k^3 \\
&- 144k^2 s^7 - 5040k^2 s^6 - 73080k^2 s^5 - 567000k^2 s^4 - 2530920k^2 s^3 \\
&- 6463800k^2 s^2 - 8684160k^2 s- 4690800k^2 + 18k s^8 + 720k s^7 + 12180k s^6 \\
&+ 113400k s^5 + 632730k s^4 + 2154600k s^3 + 4342080k s^2 + 4690800k s \\
&+ 2053152k - s^9 - 45s^8 - 870s^7 - 9450s^6 - 63273s^5 - 269325s^4 \\
&- 723680s^3 - 1172700s^2 - 1026576s - 362880.
    \end{align*}
     Only the terms of degree $9$ play a role asymptotically and so we obtain
         \begin{align*}
        \sum_{s=13}^{2k-3}\frac{{s^9f(k,s)}}{(2k)^{18}} \sim \sum_{s=13}^{2k-3}\frac{{s^9g(k,s)}}{(2k)^{18}} \sim \frac{k}{461890}
    \end{align*}
    as $k \to \infty$ where 
    \begin{align*}
        g(k,s) =\; &512k^9 - 2304k^8 s + 4608k^7 s^2 - 5376k^6 s^3 + 4032k^5 s^4 \\
        &- 2016k^4 s^5  + 672k^3 s^6 - 144k^2 s^7 + 18k s^8 - s^9.
    \end{align*}

\item[(ix)]   We have
\begin{align*}
        \sum_{s=13}^{2k-3}\frac{\prod_{i=0}^{9}(s-i) \prod_{i=1}^{10}(2k-s-i)}{\prod_{i=1}^{20}(2k-i)} &= \sum_{s=13}^{2k-3}\frac{{\prod_{i=0}^{9}(s-i)f(k,s)}}{\prod_{i=1}^{20}(2k-i)}
         \\
        &\sim \sum_{s=13}^{2k-3}\frac{{s^{10}f(k,s)}}{(2k)^{20}}
        \end{align*}
        where
    \begin{align*}
        f(k,s) =\; &1024 k^{10} - 5120 k^9 s - 28160 k^9 + 11520 k^8 s^2 + 126720 k^8 s + 337920 k^8 \\
        &- 15360 k^7 s^3 - 253440 k^7 s^2 - 1351680 k^7 s - 2323200 k^7 + 13440 k^6 s^4 \\
        &+ 295680 k^6 s^3 + 2365440 k^6 s^2 + 8131200 k^6 s + 10097472 k^6 - 8064 k^5 s^5 \\
        &- 221760 k^5 s^4 - 2365440 k^5 s^3 - 12196800 k^5 s^2 - 30292416 k^5 s \\
        &- 28865760 k^5  + 3360 k^4 s^6 + 118080 k^4 s^5 + 1478400 k^4 s^4 + 10166400 k^4 s^3 \\
        &+ 37865520 k^4 s^2 + 72164400 k^4 s + 54670880 k^4 - 960 k^3 s^7 - 36960 k^3 s^6 \\
        &- 591360 k^3 s^5 - 5082000 k^3 s^4 - 25243680 k^3 s^3 - 72164400 k^3 s^2 \\
        &- 109347160 k^3 s - 67276000 k^3 + 180 k^2 s^8 + 7920 k^2 s^7 + 147840 k^2 s^6 \\
        &+ 1524600 k^2 s^5 + 9466380 k^2 s^4 + 36082200 k^2 s^3 + 82006320 k^2 s^2 \\
        &+ 100914000 k^2 s + 51014304 k^2 - 20 k s^9 - 990 k s^8 - 21120 k s^7 \\
        &- 254100 k s^6 - 1893276 k s^5 - 9020550 k s^4 - 27335440 k s^3 - 50457700 k s^2 \\
        &- 51014304 k s - 21257280 k  + s^{10} + 55 s^{9} + 1230 s^{8} + 18150 s^{7} \\
        &+ 157773 s^{6} + 902055 s^{5} + 3416930 s^{4} + 8409500 s^{3} + 12753576 s^{2} + 10628640 s \\
        &+ 3628800.
    \end{align*}
     Only the terms of degree $10$ play a role asymptotically and so we obtain
         \begin{align*}
        \sum_{s=13}^{2k-3}\frac{{s^{10}f(k,s)}}{(2k)^{20}} \sim \sum_{s=13}^{2k-3}\frac{{s^{10}g(k,s)}}{(2k)^{20}} \sim \frac{k}{1939938}
    \end{align*}
    as $k \to \infty$ where 
    \begin{align*}
        g(k,s) =\;  &1024 k^{10} - 5120 k^9 s  + 11520 k^8 s^2 - 15360 k^7 s^3  + 13440 k^6 s^4  \\
        &- 8064 k^5 s^5 + 3360 k^4 s^6  - 960 k^3 s^7  + 180 k^2 s^8  - 20 k s^9  + s^{10}.
    \end{align*}

\item[(x)]   We have
\begin{align*}
        \sum_{s=13}^{2k-3}\frac{\prod_{i=0}^{10}(s-i) \prod_{i=1}^{11}(2k-s-i)}{\prod_{i=1}^{22}(2k-i)} &= \sum_{s=13}^{2k-3}\frac{{\prod_{i=0}^{10}(s-i)f(k,s)}}{\prod_{i=1}^{22}(2k-i)}  \\
        &\sim \sum_{s=13}^{2k-3}\frac{{s^{11}f(k,s)}}{(2k)^{22}}
        \end{align*}
        where
    \begin{align*}
        f(k,s) = \; &2048 k^{11} - 11264 k^{10} s - 67584 k^{10} + 28160 k^9 s^2 + 337920 k^9 s + 985600 k^9 \\
        &- 42240 k^8 s^3 - 760320 k^8 s^2 - 4435200 k^8 s - 8363520 k^8 + 42240 k^7 s^4 \\
        &+ 1013760 k^7 s^3 + 8870400 k^7 s^2 + 33454080 k^7 s + 45750144 k^7 - 29568 k^6 s^5 \\
        &- 887040 k^6 s^4 - 10348800 k^6 s^3 - 58544640 k^6 s^2 - 160125504 k^6 s - 168803712 k^6 \\
        &+ 14784 k^5 s^6 + 532224 k^5 s^5 + 7761600 k^5 s^4 + 58544640 k^5 s^3 \\
        &+ 240188256 k^5 s^2 + 506411136 k^5 s + 426865120 k^5 - 5280 k^4 s^7 - 221760 k^4 s^6 \\
        &- 3880880 k^4 s^5 - 36590400 k^4 s^4 - 200156880 k^4 s^3 - 633013920 k^4 s^2 \\
        &- 1067162800 k^4 s - 735931680 k^4 + 1320 k^3 s^8 + 63360 k^3 s^7+1293600 k^3 s^6 \\
        &+ 14636160 k^3 s^5 +100078440k^3s^4+ 422009280 k^3 s^2 + 1067162800 k^3 s \\
        &+ 1471863360 k^3 + 842064608 k^3 - 220 k^2 s^9 - 11880 k^2 s^8 - 277200 k^2 s^7 \\
        &- 3659040 k^2 s^6 - 30023532 k^2 s^5 - 158253480 k^2 s^4 - 533581400 k^2 s^3 \\
        &- 1103897520 k^2 s^2 - 1263096912 k^2 s - 603671904 k^2 + 22 k s^{10} + 1320 k s^9 \\
        &+ 34650 k s^8 + 522720 k s^7 + 5003922 k s^6 + 31650696 k s^5 + 133395350 k s^4 \\
        &+ 367965840 k s^3 + 631548456 k s^2 + 603671904 k s + 241087680k - s^{11}- 66s^{10} \\
        &- 1925 s^9 - 32670 s^8 - 357423 s^7 - 2637558 s^6 - 13339535 s^5 - 45995730 s^4 \\
        &- 105258076 s^3 - 150917976 s^2 - 120543840 s - 39916800.
    \end{align*}
     Only the terms of degree $11$ play a role asymptotically and so we obtain
         \begin{align*}
        \sum_{s=13}^{2k-3}\frac{{s^{11}f(k,s)}}{(2k)^{22}} \sim \sum_{s=13}^{2k-3}\frac{{s^{11}g(k,s)}}{(2k)^{22}} \sim \frac{k}{8112468}
    \end{align*}
    as $k \to \infty$ where 
    \begin{align*}
        g(k,s) =\; &2048 k^{11} - 11264 k^{10} s + 28160 k^9 s^2 - 42240 k^8 s^3 + 42240 k^7 s^4 - 29568 k^6 s^5 \\
        &+ 14784 k^5 s^6  - 5280 k^4 s^7  + 1320 k^3 s^8  - 220 k^2 s^9 + 22 k s^{10}  - s^{11}. \quad \qedhere
    \end{align*}
\end{itemize}
\end{proof}

\end{appendix}

\end{document}